\definecolor{shadecolor}{gray}{0.9}
\tikzset{
        ->,  
        node distance=5.5cm, 
        every state/.style={thick, fill=gray!10}, 
        initial text=$ $, 
        }
\theoremstyle{plain}  
\newtheorem{theorem}{theorem}[section] 
\newtheorem{lemma}[theorem]{Lemma} 
\newtheorem{proposition}[theorem]{Proposition} 
\newtheorem{corollary}[theorem]{Corollary} 
\theoremstyle{definition} 
\newtheorem{definition}[theorem]{Definition}
\newtheorem{assump}[theorem]{Assumption}
\theoremstyle{remark} 
\newcommand{\E}{\mathbb{E}}
\renewcommand{\P}{\mathbb{P}}
\newcommand{\Var}{\mathrm{Var}}
\newcommand{\Cov}{{Cov}} 
\newcommand{\Cor}{\mathrm{Cor}}
\newcommand{\OR}{{OR}}
\newcommand{\MSC}{{MSC}}
\newcommand{\F}{\mathcal{F}}
\DeclareMathOperator{\Pb}{\mathbb{P}}
\newcommand{\toP}{\stackrel{\mathbb{P}}{\to}}
\newcommand{\tod}{\stackrel{d}{\to}}
\newcommand{\pa}{\mathsf{p}}
\newcommand{\qa}{\mathsf{q}}
\newcommand{\ra}{\mathsf{r}}
\newcommand{\p}{\widehat{{p}}_n}
\newcommand{\q}{\widehat{{q}}_n}
\renewcommand{\r}{\widehat{{r}}_n}
\newcommand{\hsigma}{\widehat{\sigma}_n}
\renewcommand{\mp}{m^+}
\newcommand{\hmp}{\widehat{m}_n^+}
\newcommand{\mn}{m^-}
\newcommand{\hmn}{\widehat{m}_n^-}
\newcommand{\mV}{\mathbf{V}}
\newcommand{\mW}{\mathbf{W}}
\newcommand{\mZ}{\mathbf{Z}}
\newcommand{\moW}{\overline{\mathbf{W}}}
\newcommand{\myC}{{C}}
\newcommand{\myCn}{	\widehat{{C}}_n}
\newcommand{\myQ}{{Q}}
\newcommand{\myQn}{	\widehat{{Q}}_n}
\newcommand{\myV}{{V}}
\newcommand{\myT}{{T}}
\newcommand{\myPC}{{PC}}
\newcommand{\myTC}{{TC}}
\newcommand{\myY}{{Y}}
\newcommand{\myFH}{{FH}}
\renewcommand{\H}{\mathcal{H}}
\def\be{\begin{equation} \label}
\def\ee{\end{equation}}
\newcommand{\Comments}{1}
\newcommand{\mynote}[2]{\ifnum\Comments=1\textcolor{#1}{#2}\fi}
\newcommand{\mytodo}[2]{\ifnum\Comments=1%
  \todo[linecolor=#1!80!black,backgroundcolor=#1,bordercolor=#1!80!black]{#2}\fi}
\begin{document}

	\title{Measuring Dependence between Events\thanks{We thank Tobias Fissler, Gery Geenens, Tilmann Gneiting and Sander Muns, seminar participants at Goethe University Frankfurt (2024) and Heidelberg Institute for Theoretical Studies (2023) and conference participants at the Workshop on Dependence models, Vines, and their Applications at TU Munich (2024), at CMStatistics 2023 in Berlin, Statistical Week 2023 in Dortmund and the 10th HKMetrics Workshop in Karlsruhe (2023) for helpful comments. 
			Marc-Oliver Pohle is grateful for support by the Klaus Tschira Foundation. 
			Timo Dimitriadis gratefully acknowledges support of the Deutsche Forschungsgemeinschaft (DFG, German Research Foundation) through grant 502572912.}}
	
	\author{Marc-Oliver Pohle\thanks{Karlsruhe Institute of Technology, Institute of Statistics, Blücherstraße 17, 76185 Karlsruhe, Germany, and Heidelberg Institute for Theoretical Studies, e-mail: \href{mailto: pohle@kit.edu}{pohle@kit.edu}} 
	\and Timo Dimitriadis\thanks{Goethe University Frankfurt, RuW Building, Theodor-W.-Adorno-Platz 4, 60323 Frankfurt, Germany, and Heidelberg Institute for Theoretical Studies, e-mail: \href{mailto: dimitriadis@econ.uni-frankfurt.de}{dimitriadis@econ.uni-frankfurt.de}}
	\and Jan-Lukas Wermuth\thanks{Goethe University Frankfurt, RuW Building, Theodor-W.-Adorno-Platz 4, 60323 Frankfurt, Germany, e-mail: \href{mailto: wermuth@econ.uni-frankfurt.de}{wermuth@econ.uni-frankfurt.de}}}
	
	\maketitle	

\begin{abstract}	
	\noindent
	Measuring dependence between two events, or equivalently between two binary random variables, amounts to expressing the dependence structure inherent in a $2\times 2$ contingency table in a real number between $-1$ and 1. Numerous such dependence measures exist, but there is little theoretical guidance on how they compare and on their advantages and shortcomings. Thus, practitioners might be overwhelmed by the problem of choosing a suitable measure. We provide a set of natural desirable properties that a \emph{proper} dependence measure should fulfill. We show that Yule's $\myQ$ and the little-known Cole coefficient are proper, while the most widely-used measures, the phi coefficient and all contingency coefficients, are improper. They have a severe attainability problem, that is, even under perfect dependence they can be very far away from $-1$ and $1$, and often differ substantially from the proper measures in that they understate strength of dependence. 
	The structural reason is that these are measures for equality of events rather than of dependence. 
	We derive the (in some instances non-standard) limiting distributions of the measures and illustrate how asymptotically valid confidence intervals can be constructed. In a case study on drug consumption we demonstrate how misleading conclusions may arise from the use of improper dependence measures. 
\end{abstract}

\noindent
\textbf{Keywords:} Binary Variables; Correlation; Phi Coefficient; Contingency Coefficients; Odds Ratio 


\onehalfspacing

\section{Introduction}

Analyzing dependence between two random variables is a fundamental task in statistics. The arguably simplest setting is the one of two binary random variables or, equivalently, two events, which is fully characterized by two marginal and a joint probability, and  is conveniently represented in a $2 \times 2$ contingency table. It is often useful to quantify the dependence structure inherent in the contingency table with a dependence measure, that is, a number lying between $-1$ and 1 and indicating direction of dependence via its sign and strength of dependence via the closeness of its absolute value to 1. However, there exist many dependence measures for this task and hardly any theoretical guidance on the advantages and shortcomings of those measures and on their relationships. 
In this paper, we provide such a theoretical guidance based on a formal discussion of dependence concepts for events, an axiomatic approach to dependence measures and a theoretical analysis of existing measures and their relationships, leading to clear recommendations for statistical practice. We also develop statistical inference for the recommended measures.

\subsection{Motivating Example}

The quest for the most suitable dependence measure in the binary case already led to fierce debates among two of the greats of statistics, see \citet{Ekstrom2011} for an account of the so-called Pearson-Yule debate. We consider a motivating data example popular in that period, namely the relation between vaccination against and survival of smallpox. The left part of Table \ref{tab:smallpox_contingency_table} contains absolute marginal and joint frequencies for the events $A=\{ \text{vaccinated}\}$ and $B=\{ \text{recovery}\}$ and their complements for the smallpox epidemic in Leicester in 1892/93 taken from \citet{yule1912}. Looking at the table, the relation between the two events seems positive and quite strong. However, of course a formal way to quantify this dependence is required. 

\begin{table}[tb]
    \small
	\begin{minipage}{.5\linewidth}
		\centering
		\textbf{Contingency Table}
		\begin{tabular}{cccc}
			\addlinespace
			\toprule
			& recovery &  death &  \\
			\midrule 
			vaccinated         &  197   &     2  &  199     \\     
			unvaccinated           &   139   &     19  & 158     \\  
			&   336 & 21 & 357\\
			\bottomrule
		\end{tabular}
	\end{minipage}
	\begin{minipage}{.5\linewidth}
		\centering
		\textbf{Estimated Dependence Measures}
		\begin{tabular}{ccc}
			\addlinespace
			\toprule
			measure & value& 90\% CI    \\
			\midrule 
			Phi coefficient & 0.23 & $[0.16,0.30]$  \\
			Yule's $\myQ$ & 0.86 & $[0.59,0.96]$ \\
			Cole's coefficient $\myC$ & 0.83 & $[0.44,0.96]$ \\
			\bottomrule
		\end{tabular}
	\end{minipage}
	\caption{Data from the Leicester smallpox epidemic 1892/93 taken from \cite{yule1912}. 
		Left table: $2\times 2$ contingency table with absolute frequencies.
		Right table: Three estimated dependence measures and 90\% confidence intervals based on the Fisher transformation; see Section \ref{sec:asymptotics} for details.} 
	\label{tab:smallpox_contingency_table}
\end{table}

While numerous measures for this task have been proposed (see \citet{warrens2019similarity} for a comprehensive overview of such measures and their use in different disciplines), 
their theoretical properties and relationships have hardly been studied. Thus, it is difficult to pick a measure in practice and, if multiple measures are used, to understand why they often lead to different conclusions. Due to the lack of theoretical guidance, practitioners understandably are likely to choose the most popular measures. Table \ref{tab:google_hits} contains the hits of a recent search for eight measures on Google and Google Scholar as a proxy for their popularity in practice and academia together with their estimated values on the smallpox data example of Table \ref{tab:smallpox_contingency_table}.
The most popular measures can clearly be found on the left side, namely the phi coefficient (referred to as Matthews correlation coefficient in the machine learning literature \citep{matthews1975}) and contingency coefficients, of which Cram\'er's V is the most well-known representative.\footnote{Note that contingency coefficients are applicable to general contingency tables. Thus, the hits may overstate the importance of Cram\'er's V for the specific $2\times 2$ case.} Those measures take the rather small value of 0.23, suggesting weak positive dependence between the two events. In contrast, the measures on the right side lie between 0.57 and 0.86, indicating a rather strong dependence and being more in line with the impression one gets when looking at the contingency table. We demonstrate in this article that there is a reason for this impression and that indeed the measures on the right-hand side are theoretically superior to the ones on the left-hand side. 
The former, even though less widely-used, fulfill a set of crucial properties of dependence measures and thus give a sensible assessment of dependence. 


\begin{table}[tb] 		
    \small
	\begin{minipage}{.5\linewidth}
		\centering
		\textbf{Improper Dependence Measures}
		\begin{tabular}{cccc}
        	\addlinespace
			\toprule
			measure & value& Google & Scholar    \\
			\midrule 
			Cram\'er's V & $0.23$ & $312{,}000$ & $29{,}200$ \\
			Matthews cor. & $0.23$ & $333{,}000$ & $42{,}200$ \\
			Phi coef. & $0.23$ & $122{,}000$ & $26{,}500$ \\ 
			Pearson cont.~coef. & $0.23$ & $3{,}590$ & $515$ \\
			\bottomrule
		\end{tabular}
	\end{minipage}
    \hfill
	\begin{minipage}{.5\linewidth}
		\centering
		\textbf{Proper Dependence Measures}
		\begin{tabular}{cccc}
			\addlinespace
			\toprule
			measure & value& Google & Scholar    \\
			\midrule 
			Tetrach.~cor. & $0.61$ & $53{,}900$ & $10{,}100$ \\
			Yule's $\myQ$ & $0.86$ & $14{,}500$ & $2{,}180$ \\
			Yule's $\myY$ & $0.57$ & $4{,}110$ & $439$ \\
			Cole's coef. & $0.83$ & $368$ & $136$ \\
			\bottomrule
		\end{tabular}
	\end{minipage}
	\caption{Estimates of some improper (left) and proper (right) measures of dependence for the smallpox data example and Google and Google Scholar hits of those measures (7th November 2025).}
	\label{tab:google_hits}
\end{table}


\subsection{Plan of the Paper}

A sound discussion of dependence measures and their properties requires underlying dependence concepts as a fundament: In Section \ref{sec:dependence_concepts} we introduce concepts of positive and negative, of stronger and weaker and of perfect dependence between two events. In Section \ref{sec:desirable_properties} we provide a formal definition of dependence measures for events, postulate desirable properties for them and call a dependence measure proper if it fulfills them.

After a discussion of the covariance in the case of binary random variables, which serves as a building block for most of the other measures, Section \ref{sec:linear_measures} treats the popular, but improper, measures. We show that the phi coefficient, which is just a special case of the classical Pearson correlation coefficient, lacks the fundamental property of attainability. That is, it does in general not take the values 1 and $-1$ under perfect positive and negative dependence.
We further demonstrate that it may take very small (absolute) values in those cases and usually strongly understates strength of dependence, where the severity of this problem depends on the marginal event probabilities. 
The phi coefficient consequently cannot provide a reliable assessment of strength of dependence and the usual interpretation of values close to 0 indicating weak and absolute values close to 1 indicating strong dependence does not apply here. 
While the phi coefficient can hence not be regarded as a proper dependence measure, we show that it is instead useful for a different task, namely measuring closeness to equality of two events. 
This is for example required in the evaluation of binary classifiers, where equality of classification and outcome characterizes a perfect classifier. We also discuss several popular contingency coefficients and show that for $2 \times 2$ contingency tables they are simple functions of the phi coefficient and thus inherit its shortcomings. 

Section \ref{sec:proper_measures} is dedicated to proper dependence measures. Cole's coefficient is a very natural measure, just normalizing the covariance with its values under perfect positive and negative dependence. Yule's $\myQ$ uses a different normalization, not requiring a case distinction between positive and negative dependence. We also touch upon a whole class of measures that generalize $\myQ$, including Yule's $\myY$, and on tetrachoric correlation. A measure that falls out of the classical framework of dependence measures lying between $-1$ and 1, but is extremely popular and nicely interpretable, is the odds ratio, which is closely connected to Yule's $\myQ$ as well. We show that it is proper too, albeit with respect to a modified set of axioms that suits its scale and multiplicative nature (see Proposition \ref{prop:properties_OR}). 

Our focus for the rest of the paper then lies on the phi coefficient, Yule's $\myQ$ and Cole's coefficient as basically all the other measures discussed are closely related to the first two and share their properties. We first compare their values under different marginal event probabilities, which demonstrates that the phi coefficient differs in general quite strongly from the proper measures and sheds light on their relation as well. 

In Section \ref{sec:asymptotics} we derive the asymptotic distributions of their empirical counterparts $\myCn$, $\myQn$ and $\widehat{\phi}_n$, under assumptions covering independent and identically distributed observations, but also time series data. $\myQn$ and $\widehat{\phi}_n$ are asymptotically normal, allowing for tests and confidence intervals in the classical way. For $\myCn$ the limiting distribution is nonstandard and involves case distinctions with respect to the value of the true $\myC$ due to the case distinction used in its normalization. 
While testing is straightforward, we construct possibly conservative confidence intervals by a Bonferroni-corrected combination of inverted tests that are built on the different asymptotic distributions. 
For all three measures, we also derive the limiting distribution of the Fisher transformation, which leads to a better approximation than the classical limit distribution when the measures lie close to $-1$ or 1 and employ them for testing and confidence intervals. 
E.g., the right side of Table \ref{tab:smallpox_contingency_table} shows $90\%$-confidence intervals based on the Fisher transformation for the data example.

Section \ref{sec:application} presents an application to drug use data, where we analyze the interdependence between the consumption of different drugs. It illustrates the relevance of using proper dependence measures: The phi coefficient suggests a very weak interdependence, while the proper measures indicate that it is quite strong, being in line with the gateway hypothesis of drug use.

Section \ref{sec:conclusion} concludes. 
The Appendix contains all proofs, further details on some of the discussed dependence measures and on asymptotics, additional tables and graphs and simulations investigating the finite sample performance of our confidence intervals. An accompanying R package called \texttt{BCor} is available at \href{https://github.com/jan-lukas-wermuth/BCor}{https://github.com/jan-lukas-wermuth/BCor} and replication material is available under  \href{https://github.com/jan-lukas-wermuth/replication_BCor}{https://github.com/jan-lukas-wermuth/replication\_BCor}.

\subsection{Literature}

The most popular dependence measures for binary random variables were already introduced around the year 1900. According to \citet{Ekstrom2011}, the phi coefficient was independently proposed by \citet{pearson1900}, \citet{boas1909} and \citet{yule1912}. Yule's $\myQ$ was introduced by \cite{yule1900} and tetrachoric correlation in \cite{pearson1900}. \cite{yule1912} is a foundational paper too, amongst others introducing Yule's $\myY$ as a variant of Yule's $\myQ$. In the subsequent decades, several contingency coefficients have been proposed such as Pearson's contingency coefficient \citep{Pearson1904}, Cramér's V \citep{Cramer1946} and Tschuprow's T \citep{Tschuprow1925}. Cole's coefficient was introduced by the ecologist \cite{cole1949} and remained largely unnoticed by the statistical literature.

Despite this strong early interest and the widespread use of these measures in applications, the theoretical literature on them has surprisingly been very slim since those early years. \citet[Chapter 11]{bishop2007discrete} and \cite{warrens2008association} are noteworthy exceptions, taking into account some theoretical properties. 

The literature on dependence measures for general random variables is much larger, see \citet{Balakrishnan2009,Mari2001,Tjostheim2022} for overviews. The axiomatic approach to the study of general dependence measures originating in \cite{Renyi1959} is the role model for our treatment of the binary case. 
However, the literature usually restricts itself to the case of continuous random variables (e.g.\ \citet{Schweizer1981}, \citet{Embrechts2002}), which excludes the binary case, and entails important differences.
\cite{neslehova2007rank} marks a noteworthy exception, providing a discussion of rank correlations in the discrete case from a copula perspective.

Asymptotic sampling variances for estimators of $\phi$ and $\myQ$ are already discussed in the original papers cited above, but restricted to the case of independent and identically distributed data. Inference for Cole's coefficient is developed in this paper for the first time.

\section{Dependence Concepts for Events} \label{sec:dependence_concepts}

\subsection{Setup}

Consider a probability space $(\Omega, \mathcal{F},\P)$ and two events $A,B \in \mathcal{F}$ with $0<\P(A)<1$ and $0<\P(B)<1$. We are interested in measuring direction and strength of dependence between those two events, or equivalently, between the two dichotomous or binary random variables $X := \mathds{1}_A$ and $Y :=\mathds{1}_B$, which take the value 1 if $A$ or $B$, respectively, happen and 0 if not. Denote the complements of $A$ and $B$ by $\overline{A}$ and $\overline{B}$. The joint distribution is fully characterized by three of the four joint probabilities $\P(A \cap B)$, $\P(A \cap \overline{B})$, $\P(\overline{A} \cap B)$ and $\P(\overline{A} \cap \overline{B})$ or one joint and the two marginal probabilities, e.g.,  $\P(A \cap B)$, $\P(A)$ and $\P(B)$. It can conveniently be represented in a $2 \times 2$ contingency table as in Table \ref{tab:contingency_table}.

\begin{table}[tb]
	\centering
    \small
		\begin{tabular}{cccc}
			\toprule
			& $B$ &  $\overline{B}$ &  \\
			\midrule 
			$A$         &   $\P(A \cap B)$   &     $\P(A \cap \overline{B})$  & $\P(A)$     \\     
			$\overline{A}$           &   $\P(\overline{A} \cap B)$   &     $\P(\overline{A} \cap \overline{B})$  & $\P(\overline{A})$     \\  
			&   $\P(B)$ & $\P(\overline{B})$ & 1\\
			\bottomrule
		\end{tabular}
	\caption{A $2 \times 2$ contingency table.}
	\label{tab:contingency_table}
\end{table}

%


\subsection{Direction of Dependence and Dependence Ordering}

We now define some fundamental dependence concepts. Even nominal dichotomous variables can be treated as ordinal when it comes to measuring dependence as there is always a natural ordering, namely the event happening or not. Thus, it makes sense to not only analyze strength, but also direction of dependence in this setting. We speak of positive (negative) dependence when $A$ makes $B$ more (less) likely or vice versa. Recall the ubiquitous definition of independence of $A$ and $B$ via
$\P(A\cap B)= \P(A)\P(B)$.
A surprisingly much lesser-known, but equally fundamental, definition is the following \citep{falk1983}, which is a special case of the notion of quadrant dependence for arbitrary random variables introduced by \citet{lehmann1966}.

\begin{definition}[Positive and Negative Dependence] \label{def:positive_dependence}
	Two events $A,B \in \mathcal{F}$ are positively dependent if  $\P(A\cap B)\ge \ \P(A)\P(B)$. They are negatively dependent if $\P(A\cap B)\le \ \P(A)\P(B).$
\end{definition}

An equivalent expression for positive (negative) dependence can be found in terms of conditional probabilities, i.e.\ $\P(A|B) \ge (\le)\ \P(A)$ or $\P(B|A) \ge (\le)\ \P(B)$.

The following definition captures a natural way of ordering pairs of events in terms of strength of dependence if the respective marginal event probabilities are equal. The more likely two events are to occur together, that is, the larger their intersection is, the stronger positively dependent they are and vice versa for negative dependence.

\begin{definition}[Stronger Dependence] \label{def:stronger_dependence}
	Consider two pairs of events $A,B$ and $A^*,B^*$ with $\P(A)=\P(A^*)$ and $\P(B)=\P(B^*)$. Then $A$ and $B$ are stronger positively dependent than $A^*$ and $B^*$ if $\P(A\cap B) \ge \P(A^*\cap B^*)$. $A$ and $B$ are stronger negatively dependent than $A^*$ and $B^*$ if $\P(A\cap B) \le \P(A^*\cap B^*)$. They are equally dependent if $\P(A\cap B) = \P(A^*\cap B^*)$.
\end{definition}

Again, stronger positive (stronger negative) [equal] dependence can equivalently be expressed in terms of conditional probabilities: $\P(A|B) \ge (\le) [=] \ \P(A^*|B^*)$ or $\P(B|A) \ge (\le) [=] \ \P(B^*|A^*)$. This definition can be seen as a special case of the dependence ordering for random variables with identical marginals discussed by \citet{yanagimoto1969partial} and \citet{tchen1980inequalities}. While for non-binary random variables this ordering is only partial, here in the binary case we get a complete ordering.

\subsection{Perfect Dependence}

We now introduce concepts of perfect positive and negative dependence between events. We are not aware of a discussion of this key issue in the literature, even though a thorough treatment of dependence measures is not possible without such concepts. Fortunately, they are very natural as well. From Definition \ref{def:stronger_dependence} we know that for fixed event probabilities the dependence between $A$ and $B$ gets stronger in the positive direction if the intersection $A \cap B$ grows. The dependence gets maximal if $A$ and $B$ occur together as often as possible. This is the case if the intersection reaches its maximal size, that is, if $\P( A \setminus B) = \P (A \cap \overline{B} ) = 0$ or $\P( B \setminus A) = 0$. Essentially, this means that the smaller event is a subset of the larger and thus the intersection is equal to the smaller event. Conversely, the dependence gets strongest in the negative sense if $A$ and $B$ occur together as little as possible, that is, if their intersection reaches the minimal size. This is equivalent to the intersection of one set and the complement of the other reaching the maximal size. 

\begin{definition}[Perfect Dependence] \label{def:perfect_dependence}
	$A$ and $B$ are perfectly positively dependent if $\P( A \setminus B) = 0$ or $\P( B \setminus A) = 0$. They are perfectly negatively dependent if $\P( A \setminus \overline{B}) = 0$ or $\P( \overline{B} \setminus A) = 0$. We call $A$ and $B$ perfectly dependent if they are perfectly positively or perfectly negatively dependent.
\end{definition}  

We establish three characterizations of perfect dependence.

\begin{proposition}[Characterizations of Perfect Dependence] \label{prop:characterizations} 
	$ $ \\[-0.7cm]
	\begin{enumerate}[(i)]
		\item $A$ and $B$ are perfectly positively dependent if and only if $\P(A \cap B) = \min(\P(A),\P(B))$. They are perfectly negatively dependent if and only if $\P(A \cap B) = \max(0,\P(A) + \P(B) -1)$. It further holds that $\max(0,\P(A) + \P(B) -1) \leq \P(A \cap B) \leq \min(\P(A),\P(B))$.
		\item $A$ and $B$ are perfectly positively dependent if and only if it holds that $P(A|B)=1$ or $P(B|A)=1$. They are perfectly negatively dependent if and only if it holds that $P(A|\overline{B})=1$ or $P(\overline{B}|A)=1$.
		\item $A$ and $B$ are perfectly positively dependent if and only if it holds that $\P(A \cap \overline{B})=0$ or $\P(\overline{A} \cap B)=0$. They are perfectly negatively dependent if and only if it holds that $\P(A \cap B)=0$ or $\P(\overline{A} \cap \overline{B})=0$.
	\end{enumerate}
\end{proposition}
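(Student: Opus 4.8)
The plan is to reduce all three characterizations to elementary identities among the four joint probabilities of the contingency table in Table \ref{tab:contingency_table}, exploiting the two marginal constraints $\P(A) = \P(A \cap B) + \P(A \cap \overline{B})$ and $\P(B) = \P(A \cap B) + \P(\overline{A} \cap B)$, the inclusion--exclusion identity $\P(A \cup B) = \P(A) + \P(B) - \P(A \cap B)$, and nonnegativity of each cell. A natural first step is to establish the Fr\'echet--Hoeffding bounds $\max(0,\P(A)+\P(B)-1) \le \P(A \cap B) \le \min(\P(A),\P(B))$, since these underpin part (i) and make precise the idea that perfect dependence corresponds exactly to the joint probability hitting one of the two bounds. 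The upper bound is immediate from $\P(A \cap B) \le \P(A)$ and $\P(A \cap B) \le \P(B)$ (each off-diagonal cell is nonnegative), and the lower bound from $\P(A \cup B) \le 1$ together with $\P(A \cap B) \ge 0$.

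I would prove part (iii) first, as it is the cleanest: it is merely the definition rewritten via the set identities $A \setminus B = A \cap \overline{B}$, $B \setminus A = \overline{A} \cap B$, $A \setminus \overline{B} = A \cap B$ and $\overline{B} \setminus A = \overline{A} \cap \overline{B}$, so that $\P(A\setminus B)=0$ or $\P(B\setminus A)=0$ is literally the vanishing of one of the two claimed cells, and analogously for the negative case. This step carries no probabilistic content beyond recognizing the complement identities.

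For part (i), I would argue each equivalence by a short two-way implication with a case split. In the positive direction: if $\P(A \cap \overline{B}) = 0$, the first marginal identity gives $\P(A \cap B) = \P(A)$, and since $\P(A \cap B) \le \P(B)$ always holds this forces $\P(A) \le \P(B)$, hence $\P(A \cap B) = \min(\P(A),\P(B))$; the case $\P(\overline{A} \cap B)=0$ is symmetric. Conversely, $\P(A \cap B) = \min(\P(A),\P(B))$ means the joint probability attains the upper Fr\'echet bound, and depending on which marginal realizes the minimum one of the two off-diagonal cells must vanish. The negative case is entirely analogous, working with the lower bound and inclusion--exclusion: the case split is now governed by the sign of $\P(A)+\P(B)-1$, with $\P(A\cap B)=0$ corresponding to $\P(A)+\P(B)\le 1$ and $\P(\overline{A}\cap\overline{B})=0$ (i.e.\ $\P(A\cup B)=1$) to $\P(A)+\P(B)\ge 1$. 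Part (ii) then follows by expanding the conditional probabilities, which are well-defined because $0<\P(A),\P(B)<1$ (so also $\P(\overline{B})>0$): one checks $\P(A\mid B)=1 \iff \P(A\cap B)=\P(B) \iff \P(\overline{A}\cap B)=0$ and $\P(B\mid A)=1 \iff \P(A\cap\overline{B})=0$, matching the cells in (iii), with the negative case obtained by replacing $B$ with $\overline{B}$.

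The main obstacle here is bookkeeping rather than depth: the disjunctions ``$A\setminus B$ or $B\setminus A$'' must be kept aligned with the correct branch of the $\min$/$\max$ and with the correct vanishing cell, and one must verify that the conditioning events have positive probability so the conditional probabilities in (ii) are defined. No step requires more than the marginal identities, nonnegativity, and inclusion--exclusion.
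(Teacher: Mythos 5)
Your proof is correct and, in substance, the same elementary argument the paper gives: both reduce everything to the marginal identities, nonnegativity of the cells, and inclusion--exclusion, with (iii) being a direct restatement of Definition \ref{def:perfect_dependence}. The only (immaterial) organizational difference is that you prove (iii) first and handle the negative case of (i) directly via a case split on the sign of $\P(A)+\P(B)-1$, whereas the paper argues (i) with a w.l.o.g.\ reduction and passes to complements via Lemma \ref{lemma:complement_dependence}(iii).
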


Part (i) of this Proposition expresses perfect dependence in terms of the relationship between the joint probability of the events $A$ and $B$ and their marginal probabilities. The bounds established there are crucial for the construction and understanding of dependence measures. They are the essence behind and at the same time special cases of the Fr\'{e}chet--Hoeffding bounds for bivariate cumulative distribution functions (CDFs) \citep{frechet1951tableaux, hoeffding1940} and we call them Fr\'{e}chet--Hoeffding bounds for probabilities. Part (i) also establishes that our definition of perfect dependence from Definition \ref{def:perfect_dependence} falls under the notion of co- and countermonotonicity of two random variables, which is usually defined via the  Fr\'{e}chet--Hoeffding bounds for CDFs \citep{Embrechts2002}, when considering the two indicators $\mathds{1}_A$ and $\mathds{1}_B$. From now on we therefore use the terms perfect positive (negative) dependence and comonotonicity (countermonotonicity) of events interchangeably. Part (ii) reformulates Definition \ref{def:perfect_dependence} in terms of conditional probabilities. Part (iii) states that perfect positive dependence is equivalent to at least one of the elements of the secondary diagonal of Table \ref{tab:contingency_table} being 0 and perfect negative dependence is equivalent to at least one of the elements of the primary diagonal being 0. As a consequence, $A$ and $B$ are perfectly dependent if and only if any of the joint probabilities in the contingency table is 0.

Finally, we state a lemma on dependence concepts for complements for later use.
\begin{lemma}[Dependence Concepts for Complements] \label{lemma:complement_dependence}
$ $ \\[-0.7cm]
\begin{enumerate}[(i)]
	\item $A$ and $B$ are positively (negatively) dependent if and only if $A$ and $\overline{B}$ are negatively (positively) dependent.
 	\item $A$ and $B$ are stronger positively (negatively) dependent than $A^*$ and $B^*$ if and only if $A$ and $\overline{B}$ are stronger negatively (positively) dependent than $A^*$ and $\overline{B^*}$.
   	\item $A$ and $B$ are perfectly positively (negatively) dependent if and only if $A$ and $\overline{B}$ are perfectly negatively (positively) dependent.
\end{enumerate}
\end{lemma}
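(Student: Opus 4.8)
The plan is to reduce all three parts to the single elementary identity $\P(A \cap \overline{B}) = \P(A) - \P(A \cap B)$, combined with $\P(\overline{B}) = 1 - \P(B)$. Each assertion is an equivalence stated together with a parenthetical dual, and I will exploit the fact that replacing $B$ by $\overline{B}$ and invoking $\overline{\overline{B}} = B$ converts each forward direction into its converse. Thus a single chain of equivalent inequalities will settle the ``if'', the ``only if'', and the positive/negative interchange simultaneously.

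For part (i), I would start from the definition of positive dependence, $\P(A \cap B) \ge \P(A) \P(B)$, substitute $\P(A \cap B) = \P(A) - \P(A \cap \overline{B})$ and $\P(B) = 1 - \P(\overline{B})$, and rearrange. This produces $\P(A \cap \overline{B}) \le \P(A) \P(\overline{B})$, which is exactly negative dependence of $A$ and $\overline{B}$ by Definition \ref{def:positive_dependence}. Since every step is an equivalence of inequalities, reversing the inequality throughout simultaneously yields the parenthetical statement (negative dependence of $A,B$ being equivalent to positive dependence of $A,\overline{B}$). For part (ii), I would first check that the marginal hypotheses transfer, namely that $\P(\overline{B}) = \P(\overline{B^*})$ holds if and only if $\P(B) = \P(B^*)$, so that ``stronger dependence'' is well defined for the complemented pair. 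I would then apply the same identity to both pairs: using $\P(A) = \P(A^*)$, the inequality $\P(A \cap B) \ge \P(A^* \cap B^*)$ is equivalent to $\P(A) - \P(A \cap \overline{B}) \ge \P(A^*) - \P(A^* \cap \overline{B^*})$, i.e.\ to $\P(A \cap \overline{B}) \le \P(A^* \cap \overline{B^*})$, which is precisely the defining inequality for $A, \overline{B}$ being stronger negatively dependent than $A^*, \overline{B^*}$.

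Part (iii) requires no computation at all and is immediate from the definitions once $\overline{B}$ is substituted for $B$. Writing out perfect negative dependence of $A$ and $\overline{B}$ from Definition \ref{def:perfect_dependence} gives $\P(A \setminus \overline{\overline{B}}) = 0$ or $\P(\overline{\overline{B}} \setminus A) = 0$, and $\overline{\overline{B}} = B$ turns this verbatim into $\P(A \setminus B) = 0$ or $\P(B \setminus A) = 0$, the definition of perfect positive dependence of $A$ and $B$. Equivalently, one may read off the same conclusion from the contingency-table characterization in Proposition \ref{prop:characterizations}(iii). The converse and the positive/negative interchange again follow by the double-complement identity.

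I do not expect any genuine obstacle here; the content is essentially bookkeeping. The only points that demand care are tracking the parenthetical positive/negative dualities consistently, handling the double complement $\overline{\overline{B}} = B$, and verifying in part (ii) that the equal-marginals hypothesis carries over to the complemented pair so that Definition \ref{def:stronger_dependence} remains applicable throughout.
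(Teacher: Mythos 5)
Your proposal is correct and follows essentially the same route as the paper: all three parts are reduced to the identity $\P(A\cap B)+\P(A\cap\overline{B})=\P(A)$ for (i) and (ii), and (iii) is read off directly from Definition \ref{def:perfect_dependence} via the double complement. Your additional remark that the equal-marginals hypothesis transfers to the complemented pair in (ii) is a small point the paper leaves implicit, but there is no substantive difference.
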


\section{Desirable Properties of Dependence Measures} \label{sec:desirable_properties}

We want to express the dependence between two events $A$ and $B$ or the corresponding binary random variables by a single number, a dependence measure.

\begin{definition}[Dependence Measure]
	A dependence measure for the events $A$ and $B$ is a mapping $\delta: D  \rightarrow \mathbb{R}$, $(\P(A),\P(B),\P(A \cap B)) \mapsto \delta(\P(A),\P(B),\P(A \cap B))$, where $D := \big\{(\pa,\qa,\ra) \in (0,1)^3 \mid  \max(0,\pa+\qa-1) \leq \ra \le  \min(\pa, \qa) \big\}$. We write $\delta(A,B):= \delta(\P(A),\P(B),\P(A \cap B))$.
\end{definition}

We consider dependence measures that indicate direction as well as strength of dependence. To be useful, they should fulfill certain properties. Many such sets of properties have been put forward since \cite{Renyi1959}, which, however, usually focus on the case of continuous random variables (e.g.\ \citet{Schweizer1981}, \citet{Embrechts2002}, \citet{Mari2001}, \citet{Balakrishnan2009}, \citet{fissler2023generalised}).
Hence, the following definition establishes such a desirable set of properties for the binary case.

\begin{definition}[Proper Dependence Measure] \label{def:proper_measure}
	We call a dependence measure $\delta(A,B)$ for the events $A$ and $B$ \emph{proper} if it fulfills the following properties.
	\begin{enumerate}[(A)] 
		\item \emph{Normalization}: $-1 \le \delta(A,B) \le 1$.
		\item \emph{Independence}: $\delta(A,B)=0$ if and only if $A$ and $B$ are independent.
		\item \emph{Attainability}: $\delta(A,B)=1 \ (-1)$ if and only if $A$ and $B$ are perfectly positively (negatively) dependent.
		\item \emph{Monotonicity}: Let $A^*$ and $B^*$ be two further events with $\P(A)=\P(A^*)$ and $\P(B)=\P(B^*)$. Then,  $\delta(A,B) \ge (\le) \ \delta(A^*,B^*)$ if and only if $A$ and $B$ are stronger positively (negatively) dependent than $A^*$ and $B^*$.
		\item \emph{Symmetry}: $\delta(A,B)=\delta(
		B,A)$ and $\delta(A,\overline{B})=-\delta(A,B)$.
	\end{enumerate}
\end{definition}

Usually, directed dependence measures lie in the interval $[-1,1]$ (property (A)), where negative values indicate negative and positive values indicate positive dependence. (B) and (C) make sure that $\delta(A,B)$ correctly indicates the extreme cases of independence as well as perfect positive and negative dependence.  Note that, contrary to the case of two general random variables, a dependence measure in the binary case is not a summary measure in the sense of having to condense information (a complicated dependence structure characterized fully by a joint CDF or copula) into a single number, but it usually characterizes the full dependence structure in this single number in the sense that given the marginal event probabilities $\P(A)$ and $\P(B)$ and the dependence measure $\delta(A,B)$, we can recover $\P(A \cap B)$ (indeed, this is possible for all measures discussed in this paper except for the contingency coefficients). As a consequence of this, the ``only if'' in (B) is infeasible for directed measures of dependence for arbitrary random variables \citep{Embrechts2002}, but in the binary case it is feasible. (D) is a very natural property, making sure that stronger dependence leads to more extreme values of $\delta(A,B)$. It is related to the property of coherence proposed by \cite{scarsini1984measures}. 
(C) and (D) ensure that a dependence measure is a sensible indicator of strength of dependence in that larger absolute values of the measure indicate stronger dependence and absolute values close to 0 and 1 indicate weak and strong dependence, respectively.
Properties (B) and (D) also imply that the measure correctly indicates the direction of dependence as formalized through the following Proposition.

\begin{proposition} 
	\label{cor:pos_neg_dependence}
	For a dependence measure $\delta(A,B)$ satisfying $(B)$ and $(D)$ from Definition \ref{def:proper_measure}, it holds that $\delta(A,B) \geq (\leq)\  0$ if and only if $A$ and $B$ are positively (negatively) dependent.
\end{proposition}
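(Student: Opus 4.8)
The plan is to compare the pair $A,B$ against an independent reference pair with the same marginals and then invoke monotonicity.

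First I would fix the marginals and introduce an independent reference pair $A^*,B^*$ with $\P(A^*)=\P(A)$ and $\P(B^*)=\P(B)$. Such events exist (e.g.\ on a product probability space), and since $\delta$ depends on a pair only through the triple $(\P(A^*),\P(B^*),\P(A^*\cap B^*))$, it suffices to verify that this triple lies in the domain $D$. By independence, $\P(A^*\cap B^*)=\P(A^*)\P(B^*)=\P(A)\P(B)$, and one checks $\max(0,\P(A)+\P(B)-1)\le \P(A)\P(B)\le \min(\P(A),\P(B))$: the upper bound is immediate because $\P(A),\P(B)\in(0,1)$, and the lower bound is equivalent to $(1-\P(A))(1-\P(B))\ge 0$. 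Hence $A^*,B^*$ is an admissible reference pair, and by property (B) we have $\delta(A^*,B^*)=0$.

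Next I would apply monotonicity (D) with this reference pair. Since $\P(A)=\P(A^*)$ and $\P(B)=\P(B^*)$, Definition \ref{def:stronger_dependence} says that $A,B$ are stronger positively dependent than $A^*,B^*$ precisely when $\P(A\cap B)\ge \P(A^*\cap B^*)=\P(A)\P(B)$, which by Definition \ref{def:positive_dependence} is exactly positive dependence of $A$ and $B$. Property (D) then yields the chain of equivalences: $\delta(A,B)\ge 0=\delta(A^*,B^*)$ iff $A,B$ are stronger positively dependent than $A^*,B^*$ iff $A,B$ are positively dependent. The negative case follows by replacing $\ge$ with $\le$ throughout, or alternatively via the symmetry recorded in Lemma \ref{lemma:complement_dependence}.

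I do not expect a serious obstacle here; the one point requiring care is the legitimacy of the independent reference pair, i.e.\ confirming that the independent triple $(\P(A),\P(B),\P(A)\P(B))$ genuinely lies in $D$ so that monotonicity is applicable. Once this is in place, the result is an immediate translation between the probabilistic inequalities defining positive and negative dependence and the ordering of $\delta$ guaranteed by (B) and (D).
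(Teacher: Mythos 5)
Your proposal is correct and follows essentially the same route as the paper's proof: introduce an independent reference pair with the same marginals, use (B) to get $\delta(A^*,B^*)=0$, and conclude via (D). The only addition is your explicit check that the independent triple lies in the domain $D$, which the paper takes for granted.
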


The first part of (E) is just classical symmetry and the second part demands that when considering the complement of one event (or, in other words, exchanging rows or columns of the corresponding contingency table), the sign of the dependence measure should be reversed. It also implies that $\delta(\overline{A},\overline{B}) = \delta(A,B)$.

\section{Improper Dependence Measures} \label{sec:linear_measures}

\subsection{Covariance} \label{subsec:covariance}

The definition of positive (negative) dependence (see Definition \ref{def:positive_dependence}) suggests a natural way to measure direction and strength of dependence: Consider the deviation of the joint probability from the joint probability under independence, $\P(A\cap B) - \P(A)\P(B)$, which equals the covariance of $\mathds{1}_A$ and $\mathds{1}_B$.

\begin{definition}[Covariance]
We call 
$\Cov (A,B) := \Cov( \mathds{1}_A,\mathds{1}_B) = \P(A\cap B) - \P(A)\P(B)$
the covariance of the events $A$ and $B$. 
\end{definition}

The covariance can be rewritten as
\begin{equation} \label{eq:covariance_representation}
	\Cov (A,B)	= \P(A\cap B)\P(\overline{A}\cap \overline{B})-\P(\overline{A}\cap B)\P(A \cap \overline{B}),
\end{equation}
that is, as the difference of the products of the elements of the primary and the secondary diagonal of the corresponding contingency table. Covariance is the foundational building block for most dependence measures discussed in this article.

It already fulfills all but one of the properties of a proper dependence measure. Using the Fr\'echet--Hoeffding bounds for probabilities from Proposition \ref{prop:characterizations}, we get what we call Fr\'echet--Hoeffding bounds for covariance,
\begin{equation} \label{eq:covbounds}
	\max \big( 0,\P(A) + \P(B) -1 \big) - \P(A) \P(B) \leq \Cov (A,B) \leq \min \big( \P(A),\P(B) \big) - \P(A) \P(B).
\end{equation}
They imply $ - \frac 1 4 \leq \Cov (A,B) \leq \frac 1 4$. Thus, property (A) is fulfilled, but property (C) is not. Properties (B) and (D) follow by definition of independence and stronger positive (negative) dependence, (E) directly by representation \eqref{eq:covariance_representation}.

\begin{proposition} \label{prop:covariance_properties}
	$\Cov (A,B)$ fulfills properties (A), (B), (D) and (E) from Definition \ref{def:proper_measure}, but not (C).
\end{proposition}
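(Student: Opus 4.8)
The plan is to verify each of the four claimed properties directly from the definition of $\Cov(A,B) = \P(A \cap B) - \P(A)\P(B)$, and separately to exhibit that attainability (C) fails. Since the statement asserts four properties hold and one does not, I would treat the positive claims and the negative claim as two distinct tasks.

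For the positive claims, I would work through (A), (B), (D), (E) in turn, leaning on results already established. For normalization (A), I would invoke the Fr\'echet--Hoeffding bounds for probabilities from Proposition \ref{prop:characterizations}(i), which give the covariance bounds in \eqref{eq:covbounds}; from there, the elementary inequality $-\frac14 \le \Cov(A,B) \le \frac14$ (which itself follows by maximizing $\min(\pa,\qa)-\pa\qa$ and minimizing $\max(0,\pa+\qa-1)-\pa\qa$ over the unit square) immediately yields $\Cov(A,B) \in [-1,1]$. For independence (B), the equivalence is essentially the definition: $\Cov(A,B)=0 \iff \P(A\cap B)=\P(A)\P(B)$, which is exactly independence of $A$ and $B$. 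For monotonicity (D), when $\P(A)=\P(A^*)$ and $\P(B)=\P(B^*)$, the difference $\Cov(A,B)-\Cov(A^*,B^*)$ collapses to $\P(A\cap B)-\P(A^*\cap B^*)$ because the product terms $\P(A)\P(B)$ and $\P(A^*)\P(B^*)$ coincide; the claim then reduces to Definition \ref{def:stronger_dependence}. For symmetry (E), the first part $\Cov(A,B)=\Cov(B,A)$ is obvious from the symmetric form of the definition, and the second part $\Cov(A,\overline B)=-\Cov(A,B)$ I would read off the diagonal-product representation \eqref{eq:covariance_representation}: replacing $B$ by $\overline B$ swaps the roles of the primary and secondary diagonals, flipping the sign.

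For the failure of (C), the task is to produce a configuration of perfect dependence under which $\Cov(A,B) \neq 1$. Here any perfectly positively dependent pair with $\P(A\cap B)=\min(\P(A),\P(B))$ strictly between $0$ and $1$ suffices, since then $\Cov(A,B)=\min(\P(A),\P(B))-\P(A)\P(B)$, which by the bound above can never exceed $\frac14$ and hence is strictly less than $1$. Thus (C) fails; one can even note it is violated for \emph{every} choice of marginals, not just some.

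I do not expect any genuine obstacle here, since the proposition is a bookkeeping check against the definitions in Section \ref{sec:desirable_properties}. The only point requiring a little care is the ``if and only if'' direction in (B) and (D): I must confirm both implications, but both are immediate because $\Cov$ depends on the data only through $\P(A\cap B)$ once the marginals are fixed, so the orderings and equalities transfer exactly. The mildly non-trivial step is establishing the sharp constant $\frac14$ in the normalization bound, but this is a routine optimization of a quadratic over $[0,1]^2$ and is not essential to property (A) itself, which only needs the weaker conclusion that $\Cov(A,B)$ lies in $[-1,1]$.
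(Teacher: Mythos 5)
Your proposal is correct and follows essentially the same route as the paper, which also derives (A) and the failure of (C) from the Fr\'echet--Hoeffding bounds for covariance in \eqref{eq:covbounds} (via $-\tfrac14 \le \Cov(A,B) \le \tfrac14$), obtains (B) and (D) directly from the definitions of independence and stronger dependence, and reads (E) off the representation \eqref{eq:covariance_representation}. No substantive differences to report.
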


Thus, if one found a suitable normalization that made the covariance attainable, a proper dependence measure would arise. Indeed, the phi coefficient, Cole's $\myC$ and Yule's $\myQ$, which we discuss below, all turn out to be normalized versions of the covariance.

\subsection{Phi Coefficient} 

The phi coefficient is just the classical Pearson correlation coefficient (as well as grade correlation and Kendall's $\tau_b$, which all coincide in this setting)
of $\mathds{1}_A$ and $\mathds{1}_B$ and thus a normalized version of $\Cov(A,B)$. 
\begin{definition}[Phi Coefficient] \label{def:phi_coefficient}
	The phi coefficient is defined as
	$$\phi(A,B) := \Cor( \mathds{1}_A,\mathds{1}_B) = \frac{\Cov (A,B)}{\sqrt{\P(A)(1-\P(A))\P(B)(1-\P(B))}}.$$
\end{definition}

As the normalization only depends on the marginal event probabilities, the phi coefficient inherits properties (B), (D) and (E) from the covariance and also fulfills (A) by applying the Cauchy-Schwarz inequality. However, by the Fr\'echet--Hoeffding bounds for the covariance from \eqref{eq:covbounds}, it is bounded by what we call Fr\'echet--Hoeffding bounds for the phi coefficient, that is, its values under perfect negative and positive dependence:
\begin{equation} \label{eq:phibounds}
\frac{\max(0,\P(A) + \P(B) -1) - \P(A) \P(B)}{\sqrt{\P(A)(1-\P(A))\P(B)(1-\P(B))}} \leq \phi(A,B) \leq \frac{\min(\P(A),\P(B)) - \P(A) \P(B)}{\sqrt{\P(A)(1-\P(A))\P(B)(1-\P(B))}}.
\end{equation}
As these bounds are in general not equal to $-1$ and $1$ and depend on the marginal event probabilities $\P(A)$ and $\P(B)$, the phi coefficient is not attainable. 

\begin{proposition} \label{prop:phi_properties}
	$\phi (A,B)$ fulfills properties (A), (B), (D) and (E) from Definition \ref{def:proper_measure}, but not (C).
\end{proposition}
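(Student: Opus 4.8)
The plan is to establish the four claimed properties (A), (B), (D), (E) for $\phi(A,B)$ and then refute (C) by exhibiting the gap in the Fr\'echet--Hoeffding bounds \eqref{eq:phibounds}. Since $\phi(A,B)$ is obtained from $\Cov(A,B)$ by dividing by the strictly positive, marginal-dependent factor $\sqrt{\P(A)(1-\P(A))\P(B)(1-\P(B))}$, the cleanest strategy is to inherit (B), (D), (E) directly from Proposition \ref{prop:covariance_properties}, establish (A) via Cauchy--Schwarz, and disprove (C) by a direct computation on the bounds.

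First I would treat the three inherited properties. Because the normalizing denominator depends only on the marginals $\P(A)$ and $\P(B)$ and is strictly positive on $D$, multiplying $\Cov(A,B)$ by its reciprocal preserves sign and, for fixed marginals, preserves order. Concretely, for (B): since the denominator is nonzero, $\phi(A,B)=0$ if and only if $\Cov(A,B)=0$, which by Proposition \ref{prop:covariance_properties} holds if and only if $A$ and $B$ are independent. For (D): under the hypothesis $\P(A)=\P(A^*)$ and $\P(B)=\P(B^*)$, the two covariances share the \emph{same} denominator, so $\phi(A,B)\ge(\le)\,\phi(A^*,B^*)$ if and only if $\Cov(A,B)\ge(\le)\,\Cov(A^*,B^*)$, which by Proposition \ref{prop:covariance_properties} is equivalent to the stronger positive (negative) dependence of $A,B$ relative to $A^*,B^*$. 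For (E): the first part follows from symmetry of both $\Cov(A,B)$ and the denominator in $A$ and $B$; the second part, $\phi(A,\overline{B})=-\phi(A,B)$, follows because $\Cov(A,\overline{B})=-\Cov(A,B)$ (which is property (E) for the covariance) while the denominator is invariant under replacing $\P(B)$ by $1-\P(B)$.

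Next I would prove (A). The phi coefficient is by Definition \ref{def:phi_coefficient} the Pearson correlation $\Cor(\mathds{1}_A,\mathds{1}_B)$, so the Cauchy--Schwarz inequality applied to the centered indicators $\mathds{1}_A-\P(A)$ and $\mathds{1}_B-\P(B)$ gives $|\Cov(A,B)|\le \sqrt{\Var(\mathds{1}_A)\Var(\mathds{1}_B)}=\sqrt{\P(A)(1-\P(A))\P(B)(1-\P(B))}$, which is exactly $-1\le\phi(A,B)\le 1$.

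Finally, the core of the proposition is the failure of (C), and this is where the real content lies. Property (C) would require $\phi(A,B)=1$ exactly under perfect positive dependence and $\phi(A,B)=-1$ exactly under perfect negative dependence. By Proposition \ref{prop:characterizations}(i), perfect positive dependence means $\P(A\cap B)=\min(\P(A),\P(B))$, which realizes the upper bound in \eqref{eq:phibounds}. To refute (C) it suffices to show this upper bound is in general strictly below $1$. I would do this by direct algebra: taking for concreteness $\P(A)\le\P(B)$, the upper bound becomes $\frac{\P(A)(1-\P(B))}{\sqrt{\P(A)(1-\P(A))\P(B)(1-\P(B))}}=\sqrt{\frac{\P(A)(1-\P(B))}{(1-\P(A))\P(B)}}$, which equals $1$ only when $\P(A)=\P(B)$ and is strictly less than $1$ whenever $\P(A)<\P(B)$. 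A single explicit choice, say $\P(A)=1/4$, $\P(B)=1/2$, $\P(A\cap B)=1/4$, yields a value strictly inside $(0,1)$ under perfect positive dependence, contradicting (C); the negative case is symmetric via Lemma \ref{lemma:complement_dependence} and property (E). The main obstacle is purely the bookkeeping of the case distinction $\min(\P(A),\P(B))$ in the bound and ensuring the simplification of the square root is handled correctly; conceptually the argument is just that the Cauchy--Schwarz bound is attained only when the centered indicators are proportional, which in the binary setting forces $\P(A)=\P(B)$, so the generic inequality is strict.
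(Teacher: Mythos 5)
Your proposal is correct and follows essentially the same route as the paper: (B), (D), (E) are inherited from Proposition \ref{prop:covariance_properties} because the normalization is a strictly positive function of the marginals alone, (A) follows from Cauchy--Schwarz, and (C) fails because the Fr\'echet--Hoeffding bounds \eqref{eq:phibounds} are in general strictly inside $(-1,1)$. Your explicit simplification of the upper bound to $\sqrt{\P(A)(1-\P(B))/((1-\P(A))\P(B))}$ and the numerical counterexample merely make concrete what the paper asserts via Figure \ref{fig:phibounds} and Proposition \ref{prop:characterizations_linear_dependence}.
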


Figure \ref{fig:phibounds} visualizes the bounds from $\eqref{eq:phibounds}$ as a function of the marginal event probabilities. It shows that the values of the phi coefficient under perfect positive and negative dependence can be very far away from $1$ and $-1$. This seriously compromises the interpretability of this measure as it cannot reliably indicate strength of dependence, which should be indicated by $|\phi(A,B)|$ being close to 1. For most combinations of $\P(A)$ and $\P(B)$ such values cannot be reached and instead very small absolute values arise under strong dependence. 
The phi coefficient is only attainable for the special case of $\P(A)=\P(B)=0.5$ (see Proposition \ref{prop:characterizations_linear_dependence} below). This shortcoming may be surprising given that the phi coefficient is just a special case of the popular and widely-used Pearson correlation. However, Pearson correlation has serious attainability problems itself and should be used with care \citep{Embrechts2002}. 
In the binary case, they become particularly serious (as two binary random variables are particularly far away from fulfilling the conditions for attainability of Pearson correlation formulated in \citet[Lemma 3.1]{fissler2023generalised}, that is, from being symmetric and of the same type).

\begin{figure}[tb]
	\centering
	\includegraphics[width=0.8\linewidth]{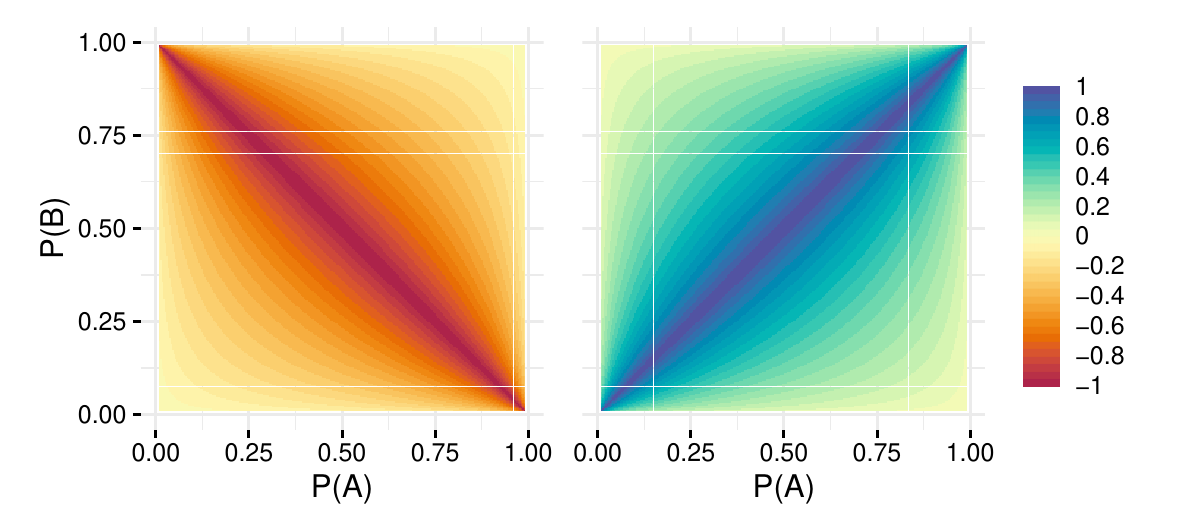}
	\caption{Lower and upper bound of the phi coefficient from \eqref{eq:phibounds} as a function of $\P(A)$ and $\P(B)$.}
	\label{fig:phibounds}
\end{figure}

Thus, the phi coefficient is not a particularly useful measure of dependence for events or binary random variables. Instead, it may be useful for a different task, which becomes clear by analyzing under which conditions $\phi$ takes the values $-1$ and 1.

\begin{proposition} \label{prop:characterizations_linear_dependence} Define the symmetric difference of $A$ and $B$ as $A \Delta B := (A \setminus B) \cup (B \setminus A) = (A \cup B) \setminus (A \cap B)$. The following statements are equivalent:
	\begin{enumerate}[(i)]
		\item $\phi(A,B)=1$ ($\phi(A,B)=-1$).
		\item $\P (A \Delta B) = 0$ ($\P (A \Delta \overline{B}) = 0$).
		\item $A$ and $B$ are perfectly positively (negatively) dependent and $\P(A)=\P(B)$ ($\P(A)=\P(\overline{B})$).
		\item $\P(A \cap \overline{B})=0$ and $\P(\overline{A} \cap B)=0$ ($\P(A \cap B)=0$ and $\P(\overline{A} \cap \overline{B})=0$).
		\item $\mathds{1}_A=\mathds{1}_B$ ($\mathds{1}_A=1-\mathds{1}_B$) almost surely.
		\item $\mathds{1}_A$ and $\mathds{1}_B$ are increasing (decreasing) functions of each other almost surely.
		\item $\mathds{1}_A$ and $\mathds{1}_B$ are increasing (decreasing) linear functions of each other almost surely.
	\end{enumerate}
\end{proposition}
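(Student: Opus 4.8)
The plan is to prove the chain of equivalences for the positive (unparenthesized) case and then obtain the negative case essentially for free. Throughout I would write $X=\mathds{1}_A$ and $Y=\mathds{1}_B$ and use the non-degeneracy $0<\P(A),\P(B)<1$, which guarantees $X$ and $Y$ have strictly positive variance. The anchor is the single event $A\Delta B$, whose probability decomposes as $\P(A\Delta B)=\P(A\cap\overline{B})+\P(\overline{A}\cap B)$ since the union is disjoint. As probabilities are nonnegative, this sum vanishes if and only if both summands do, giving (ii) $\iff$ (iv) immediately; and since $\{X\neq Y\}=A\Delta B$, I also get (v) $\iff$ (ii). So (ii), (iv), (v) are linked by elementary set theory.

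Next I would handle the functional statements (v), (vi), (vii). Trivially (v) $\Rightarrow$ (vii) $\Rightarrow$ (vi), since the identity is an increasing affine (linear) map and every increasing linear map is increasing. For the converse (vi) $\Rightarrow$ (v), I would use non-degeneracy: if $X=f(Y)$ almost surely for an increasing $f:\{0,1\}\to\{0,1\}$, then $f$ cannot be constant (that would force $X$ to be a.s.\ constant, contradicting $0<\P(A)<1$), and the only non-constant increasing map $\{0,1\}\to\{0,1\}$ is the identity, so $X=Y$ almost surely. This closes the loop among (ii), (iv), (v), (vi), (vii).

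To bring in the correlation statement (i), I would invoke the equality case of the Cauchy--Schwarz inequality that already underlies property (A) for $\phi$ in Proposition \ref{prop:phi_properties}: $\Cor(X,Y)=1$ holds if and only if the centered variables are positively proportional almost surely, i.e.\ $X-\P(A)=\lambda\,(Y-\P(B))$ almost surely for some $\lambda>0$, the sign matching the value $+1$ of the correlation. This is precisely the assertion that $X$ is an increasing affine function of $Y$, namely statement (vii), so (i) $\iff$ (vii). It remains to fold in (iii) via (iii) $\iff$ (iv): if both secondary-diagonal probabilities vanish then $\P(A)=\P(A\cap B)=\P(B)$ and perfect positive dependence holds by Proposition \ref{prop:characterizations}(iii); conversely, perfect positive dependence means (say) $\P(A\cap\overline{B})=0$, whence $\P(A)=\P(A\cap B)$, and adding $\P(A)=\P(B)$ forces $\P(\overline{A}\cap B)=\P(B)-\P(A\cap B)=0$, with the other perfect-dependence alternative symmetric.

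Finally, I would dispatch the parenthetical (negative) statements by applying the proven positive equivalences to the pair $(A,\overline{B})$: property (E) gives $\phi(A,B)=-1\iff\phi(A,\overline{B})=1$, Lemma \ref{lemma:complement_dependence}(iii) converts perfect positive dependence of $(A,\overline{B})$ into perfect negative dependence of $(A,B)$, while $\mathds{1}_{\overline{B}}=1-\mathds{1}_B$ turns increasing relationships into decreasing ones and $\P(A)=\P(\overline{B})$ into the stated marginal condition, matching (ii)--(vii) in their negative forms. The only genuinely analytic input is the Cauchy--Schwarz equality condition; the sole remaining obstacle is one of care rather than depth, namely pinning down the precise meaning of ``increasing functions of each other'' in (vi) and using non-degeneracy to rule out the constant maps.
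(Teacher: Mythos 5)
Your proposal is correct, and its overall skeleton (set-theoretic identities tying (ii), (iv), (v) together; the degeneracy argument for (vi)~$\Rightarrow$~(v); reduction of the negative case to the positive case via $(A,\overline{B})$ using property (E) and Lemma \ref{lemma:complement_dependence}) matches the paper's proof. The one genuinely different choice is how you bring the correlation statement (i) into the chain: the paper proves (i)~$\iff$~(ii) by explicitly checking when the Fr\'echet--Hoeffding bounds in \eqref{eq:phibounds} equal $\pm 1$ (an algebraic computation with $\min(\P(A),\P(B)) - \P(A)\P(B)$ over the normalization), whereas you prove (i)~$\iff$~(vii) by invoking the equality case of the Cauchy--Schwarz inequality, so that $|\phi|=1$ is identified with an almost-sure affine relation between $\mathds{1}_A$ and $\mathds{1}_B$. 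Your route is more structural and avoids the bound algebra entirely, at the cost of importing the Cauchy--Schwarz equality condition as an external fact; the paper's route is more computational but ties the result directly to the attainability/Fr\'echet--Hoeffding discussion that is the conceptual centerpiece of Section \ref{sec:linear_measures}. You also connect (iii) to (iv) rather than to (ii) as the paper does; both arguments are short and correct. No gaps.
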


Considering (ii), the phi coefficient essentially measures how close two events $A$ and $B$ are to being equal, $A=B$, for positive dependence and to one set being equal to the complement of the other, $A=\overline{B}$, for negative dependence (which can be regarded as being maximally far away from being equal), which is however different from our characterization of perfect dependence in Definition \ref{def:perfect_dependence}.
By (iii), this amounts to jointly measuring closeness to perfect dependence \emph{and} to a restriction on the marginal distributions, $\P(A)=\P(B)$ in the case of positive (and $\P(A)=\P(\overline{B})$ in the case of negative) dependence.
We however want to measure dependence of arbitrary events and hence take the marginal event probabilities $\P(A)$ and $\P(B)$ as given, which makes the restriction $\P(A)=\P(B)$ (or $\P(A)=\P(\overline{B})$) unacceptable.
The restrictions $\P(A)=\P(B)$ \emph{and} $\P(A)=\P(\overline{B})$ in (iii) jointly amount to $\P(A)=\P(B)=0.5$ and can hence be interpreted as an intuitive reason for the non-attainability of the phi coefficient.

Even though we have established that the phi coefficient should not be used for measuring dependence, Proposition \ref{prop:characterizations_linear_dependence}  shows that it is useful for a different task, namely measuring closeness to equality of events. For example when evaluating binary classifiers, where $A$ is the event of interest and $B$ indicates that $A$ was predicted to happen, a perfect classifier fulfills $\{A=B\}$. This task is distinct from measuring dependence, where we take the marginal event probabilities as given, since for the classification task $\P(B)$ is not fixed, but needs to adopt to $\P(A)$. In the literature this crucial distinction between measures of dependence and measures of event equality has not been made. While the phi coefficient is indeed widely-used for the evaluation of binary classifiers \citep{matthews1975}, it is as widely-used as a measure of dependence, which should be avoided. At the same time the odds ratio, which is a proper dependence measure (as discussed in the next section) and not a measure of set equality, has been put forward as a tool for the evaluation of binary classifiers \citep{stephenson2000}. 

What has been recognized in the literature (e.g.\ \citet{bishop2007discrete}), is that the phi coefficient equals 1($-1$) under condition (iv), that is, if there are two zeros on the primary (secondary) diagonal of the contingency table, while other measures like Yule's $\myQ$ (in fact, all proper measures by (C) from Definition \ref{def:proper_measure} and (iii) in Proposition \ref{prop:characterizations}) require only one zero on the primary (secondary) diagonal to equal 1 ($-$1). However, the underlying reasons for this as made clear by Propositions \ref{prop:characterizations} and \ref{prop:characterizations_linear_dependence} have not been uncovered.

The items (vi) and (vii) in Proposition \ref{prop:characterizations_linear_dependence} are interesting because they establish the connection to conditions well-known from the theory of Pearson correlation and rank correlations, of which $\phi$ is a special case. (vii) reproduces the well-known condition that Pearson correlation is 1 ($-$1) if and only if the two variables are linear functions of each other, (vi) the well-known condition that grade correlation and Kendall's $\tau_b$ are 1 ($-$1) if and only if one variable is a monotonic function of the other \citep{Embrechts2002}. Thus, equality between the events (or one event and the complement of the other) corresponds to positive (negative) monotonic and linear (the two coincide in the binary case) dependence. Note that neither linear nor monotonic (nor arbitrary) functional relations are useful concepts of perfect dependence in the binary case (or more generally in the discrete case) as a functional relation is only possible under the restriction from (iii) on the marginal distributions, which is in general not fulfilled. This is also the reason why Chatterjee's $\xi$ is an improper measure (see Appendix \ref{subsec:furthermeasures}). 

\subsection{Contingency Coefficients}

Contingency coefficients aim at measuring dependence inherent in a general $k \times l$ contingency table. For two categorical random variables $X$ and $Y$, where $X$ takes values $x_1,...,x_k$ and $Y$ takes values $y_1,...,y_l$, Pearson's Mean Square Contingency ($\MSC$) coefficient \citep{Pearson1904} is just the population analogue of the test statistic of Pearson's chi-squared test of independence:
\begin{equation} \label{eq:MSC}
	\MSC(X,Y) := \sum_{i=1}^k \sum_{j=1}^l  \frac{ \left(\P(X=x_i,Y=y_j) - \P(X=x_i) \P(Y=y_j) \right)^2}{\P(X=x_i) \P(Y=y_j)}.
\end{equation}
Contingency coefficients arise by normalizing $\MSC(X,Y)$. As they are designed for nominal random variables, direction of dependence is not a sensible concept and they map to $[0,1]$, only trying to indicate strength of dependence. However, as discussed above, binary random variables can always be regarded as ordinal. Thus, in the $2 \times 2$ case, measures of dependence that indicate strength and direction of dependence can be used and the question arises why one should throw away the information on direction of dependence. Since they are widely-used in the binary case, we nevertheless discuss them. 
In Appendix \ref{subsec:contingency_coefficients}, we consider Cram\'ers V, Tschuprow's T, and Pearson's contingency coefficient and show that they are closely related to the phi coefficient in the binary setting (for example, the first two equal $|\phi(A,B)|$) and thus inherit its deficiencies, in particular not being proper dependence measures. The same holds true for distance correlation (which also equals $|\phi(A,B)|$) and Chatterjee's $\xi$ (which equals $\phi^2(A,B)$), dependence measures that have recently gained popularity in the statistical literature, see the discussion in Appendix \ref{subsec:furthermeasures}.

\section{Proper Dependence Measures} \label{sec:proper_measures}

\subsection{Cole's Coefficient}

Cole's Coefficient was proposed by \cite{cole1949}. An equivalent reformulation of it arises naturally from our discussion of dependence concepts: From the Fr\'echet--Hoeffding bounds of $\P(A \cap B)$ from Proposition \ref{prop:characterizations} arise the bounds on $\Cov(A,B)$ from \eqref{eq:covbounds}, which are attained only under perfect positive and negative dependence of $A$ and $B$ and whose absolute values are in general different from each other. Thus, it is natural to distinguish the cases of positive and negative dependence and normalize covariance with (the absolute values of) those bounds.
\begin{definition}[Cole's Coefficient] \label{def:Cole}
	Cole's $\myC$ is defined as
			$$\myC(A,B) = \begin{cases}
		\frac{\Cov(A,B)}{ \min(\P(A),\P(B)) - \P(A) \P(B)} , &\Cov(A,B) \geq 0\\
		\frac{\Cov(A,B)}{ -\max(0,\P(A) + \P(B) -1) + \P(A) \P(B)} , &\Cov(A,B) < 0
	\end{cases}.$$
\end{definition}

The idea of normalizing with the cases of perfect positive and negative dependence has later been used for rank correlations (\citet{VandenhendeLambert2003}, \citet{genest2007primer}) and for Pearson correlation and generalizations of it by \citet{fissler2023generalised}. Indeed, Cole's $\myC$ is a special case of attainable modifications of Kendall's $\tau$, Spearman's $\rho$ and Pearson correlation as well as of threshold correlation (see \citet{genest2007primer} for the former two and \citet{fissler2023generalised} for the latter two).

Cole's coefficient is normalized and attainable by construction and essentially inherits properties (B), (D) and (E) from the covariance such that it is a proper dependence measure.

\begin{proposition} \label{Cole_propriety}
	$\myC(A,B)$ is proper.
\end{proposition}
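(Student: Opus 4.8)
The plan is to verify that Cole's coefficient $\myC(A,B)$ satisfies all five properties (A)--(E) from Definition \ref{def:proper_measure}, treating normalization and attainability jointly (since they follow directly from the construction via the Fr\'echet--Hoeffding bounds) and then inheriting the remaining properties from the covariance, which by Proposition \ref{prop:covariance_properties} already fulfills (A), (B), (D) and (E). The key observation throughout is that the two denominators in Definition \ref{def:Cole} are precisely the (absolute values of the) upper and lower Fr\'echet--Hoeffding bounds for covariance from \eqref{eq:covbounds}, namely $\min(\P(A),\P(B)) - \P(A)\P(B) \ge 0$ and $\P(A)\P(B) - \max(0,\P(A)+\P(B)-1) \ge 0$. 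Both denominators are strictly positive on the domain $D$ (using $0 < \P(A), \P(B) < 1$), so $\myC$ is well-defined, and the case split at $\Cov(A,B) = 0$ is consistent since both branches yield $0$ there.

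For normalization (A) and attainability (C), I would argue as follows. When $\Cov(A,B) \ge 0$, dividing $\Cov(A,B)$ by its maximal possible value gives a ratio in $[0,1]$, equal to $1$ exactly when the upper bound in \eqref{eq:covbounds} is attained, which by \eqref{eq:covbounds} and Proposition \ref{prop:characterizations}(i) happens if and only if $\P(A \cap B) = \min(\P(A),\P(B))$, i.e.\ if and only if $A$ and $B$ are perfectly positively dependent. The case $\Cov(A,B) < 0$ is symmetric: the ratio lies in $[-1,0)$ and equals $-1$ exactly when the lower Fr\'echet--Hoeffding bound is attained, i.e.\ under perfect negative dependence. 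This gives both the bound $-1 \le \myC(A,B) \le 1$ and the ``if and only if'' in attainability in one stroke.

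For independence (B) and monotonicity (D), the point is that $\myC(A,B)$ has the same sign as $\Cov(A,B)$ and, within each fixed sign-regime determined by the case distinction, is a \emph{strictly increasing} function of $\Cov(A,B)$ because the denominator depends only on the marginals $\P(A)$ and $\P(B)$, which are held fixed in the comparisons of Definitions \ref{def:stronger_dependence} and \ref{def:proper_measure}(D). Hence (B) follows immediately from the covariance version in Proposition \ref{prop:covariance_properties}, since $\myC(A,B) = 0 \iff \Cov(A,B) = 0 \iff A, B$ independent. For (D), when $\P(A) = \P(A^*)$ and $\P(B) = \P(B^*)$ the two pairs share identical denominators, so comparing $\myC(A,B)$ and $\myC(A^*,B^*)$ reduces to comparing $\Cov(A,B)$ and $\Cov(A^*,B^*)$, and Proposition \ref{prop:covariance_properties} gives the equivalence with stronger dependence. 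Symmetry (E) is inherited from $\Cov$ via representation \eqref{eq:covariance_representation}; I would only need to check that the denominators transform compatibly, i.e.\ that replacing $B$ by $\overline{B}$ swaps the positive- and negative-dependence normalizations (which follows from $\min(\P(A),\P(\overline{B})) - \P(A)\P(\overline{B}) = \P(A)\P(B) - \max(0,\P(A)+\P(B)-1)$, an identity easily verified by a short case check on whether $\P(A) \le \P(B)$).

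The main obstacle I anticipate is the bookkeeping at the \textbf{case boundaries}: one must confirm that property (D) holds even when the comparison pairs $A,B$ and $A^*,B^*$ fall on opposite sides of the $\Cov = 0$ split (one positively, the other negatively dependent), so that the monotonicity statement bridges the two branches and not merely within one. This is handled by noting that the positive branch always returns values $\ge 0$ and the negative branch values $< 0$, so stronger positive dependence correctly yields a larger $\myC$ across the boundary. A second subtlety is verifying the denominator-swap identity for (E) cleanly in the corner cases where $\max(0,\P(A)+\P(B)-1)$ switches between its two arguments; this is routine but should be stated explicitly to keep the symmetry argument airtight.
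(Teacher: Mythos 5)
Your proposal is correct and follows essentially the same route as the paper: normalization and attainability come directly from the construction via the Fr\'echet--Hoeffding bounds for covariance, while (B), (D) and (E) are inherited from $\Cov(A,B)$ using that the normalization depends only on the marginals and is strictly positive. The only cosmetic difference is that you verify the denominator-swap identity for (E) by a direct case check on $\P(A)+\P(B)\lessgtr 1$, whereas the paper derives the same identity (its Lemma \ref{lemma:FHbounds_relation}) from the characterization of perfect negative dependence; your explicit treatment of the cross-boundary case in (D) is a welcome extra precaution but not a substantive departure.
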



\subsection{Yule's Q, the Odds Ratio and Relatives}

Besides the phi coefficient and Cole's coefficient, Yule's $\myQ$ can be viewed as yet another approach to normalizing covariance. It just replaces the minus in the alternative representation of $\Cov(A,B)$ from \eqref{eq:covariance_representation} with a plus and normalizes by the resulting term. 

\begin{definition}[Yule's $\myQ$] \label{def:Q_original}
	Yule's $\myQ$ is defined as
	\begin{equation} 
		\myQ(A, B)=\frac{\Cov(A,B)}{\P(A\cap B)\P(\overline{A}\cap \overline{B})+\P(\overline{A}\cap B)\P(A\cap\overline{B})}.
	\end{equation}
\end{definition}

Via \eqref{eq:covariance_representation}, Yule's $\myQ$ can be rewritten as
\begin{equation} \label{eq:Q_rewritten}
	\myQ(A, B)=\frac{\P(A\cap B)\P(\overline{A}\cap \overline{B})-\P(\overline{A}\cap B)\P(A\cap \overline{B})}{\P(A\cap B)\P(\overline{A}\cap \overline{B})+\P(\overline{A}\cap B)\P(A\cap\overline{B})}.
\end{equation}

Considering \eqref{eq:Q_rewritten}, it is clear that $\myQ(A,B)$ lies in $[-1,1]$ and fulfills the symmetry properties (E).  Its attainability directly follows by invoking part (iii) of Proposition \ref{prop:characterizations}. The independence property (B) is inherited from the covariance (since the normalization is nonzero). Monotonicity follows from the strictly monotonic relation between Yule's $\myQ$ and the odds ratio (Lemma \ref{lemma:Q_and_OR}) and the monotoniticy of the odds ratio (part (D) of Proposition \ref{prop:properties_OR}).  

\begin{proposition}
	$\myQ(A,B)$ is proper.
\end{proposition}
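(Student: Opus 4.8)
The plan is to verify the five properties (A)--(E) of Definition \ref{def:proper_measure} in turn, working from the ratio form \eqref{eq:Q_rewritten}. I would abbreviate $a=\P(A\cap B)$, $b=\P(A\cap\overline{B})$, $c=\P(\overline{A}\cap B)$, $d=\P(\overline{A}\cap\overline{B})$, so that $\myQ(A,B)=(ad-bc)/(ad+bc)$ with $a+b+c+d=1$ and all four entries nonnegative. A prerequisite is that the denominator is strictly positive on the domain $D$: it vanishes only if $ad=0$ \emph{and} $bc=0$, and each of the four resulting possibilities ($a=b=0$, $a=c=0$, $b=d=0$, $c=d=0$) forces $\P(A)$ or $\P(B)$ to equal $0$ or $1$, contradicting $0<\P(A),\P(B)<1$. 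With well-definedness in hand, normalization (A) is immediate from $ad,bc\ge 0$, which gives $|ad-bc|\le ad+bc$ and hence $|\myQ(A,B)|\le 1$. For symmetry (E), each product $\P(A\cap B)\P(\overline{A}\cap\overline{B})$ and $\P(\overline{A}\cap B)\P(A\cap\overline{B})$ in \eqref{eq:Q_rewritten} is invariant under swapping $A$ and $B$, giving $\myQ(A,B)=\myQ(B,A)$, while replacing $B$ by $\overline{B}$ interchanges the two products and so flips the sign of the numerator but not the denominator, giving $\myQ(A,\overline{B})=-\myQ(A,B)$.

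Independence (B) and attainability (C) follow just as quickly. Since the denominator is positive, $\myQ(A,B)=0$ iff the numerator $ad-bc=\Cov(A,B)$ vanishes, which by definition is equivalent to independence of $A$ and $B$, so (B) is inherited from the covariance. For (C), $\myQ(A,B)=1$ iff $ad-bc=ad+bc$, i.e.\ $bc=0$, i.e.\ $\P(A\cap\overline{B})=0$ or $\P(\overline{A}\cap B)=0$; likewise $\myQ(A,B)=-1$ iff $ad=0$, i.e.\ $\P(A\cap B)=0$ or $\P(\overline{A}\cap\overline{B})=0$. By part (iii) of Proposition \ref{prop:characterizations} these are precisely perfect positive and perfect negative dependence.

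Monotonicity (D) is the substantive part. Fixing $p=\P(A)$ and $q=\P(B)$ and letting $r=\P(A\cap B)$ vary, Definition \ref{def:stronger_dependence} identifies ``stronger positively dependent'' with $r\ge r^{*}$, so (D) reduces to showing that $r\mapsto\myQ(r)$ is \emph{strictly} increasing on $[\max(0,p+q-1),\min(p,q)]$. I would substitute $a=r$, $b=p-r$, $c=q-r$, $d=1-p-q+r$ and differentiate the quotient; after cancellation the numerator of $\myQ'(r)$ equals a positive constant times $g(r):=\P(A)\P(B)\P(\overline{A})\P(\overline{B})-\Cov(A,B)^{2}$. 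Dividing through by $\P(A)\P(B)\P(\overline{A})\P(\overline{B})$ and recalling Definition \ref{def:phi_coefficient} shows $g(r)=\P(A)\P(B)\P(\overline{A})\P(\overline{B})\,(1-\phi(A,B)^{2})$. On the open interval the point $r$ is neither endpoint, hence $A$ and $B$ are not perfectly dependent, so by Proposition \ref{prop:characterizations_linear_dependence} we have $|\phi(A,B)|<1$ and thus $g(r)>0$. Therefore $\myQ'(r)>0$ on the open interval, $\myQ$ is strictly increasing on the closed interval, and $\myQ(A,B)\ge\myQ(A^{*},B^{*})\iff r\ge r^{*}$, which is (D).

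The step I expect to be the main obstacle is the derivative computation in (D): it is the only place that requires genuine algebra, and the art is to carry out the cancellations cleanly enough to expose the factor $1-\phi(A,B)^{2}$, which makes the sign transparent. A shorter route, indicated in the text preceding the statement, bypasses this calculation: $\myQ=(\omega-1)/(\omega+1)$ is strictly increasing in the odds ratio $\omega=\OR(A,B)$ (Lemma \ref{lemma:Q_and_OR}), and $\OR$ is strictly increasing in $r$ for fixed marginals (Proposition \ref{prop:properties_OR}(D)), so the composition is strictly increasing. I would present the self-contained derivative argument as the primary proof and record the odds-ratio route as an alternative, since the latter relies on properties of $\OR$ established only later.
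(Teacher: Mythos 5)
Your proof is correct, and for properties (A), (B), (C) and (E) it coincides with the paper's argument: normalization and symmetry read off the ratio form \eqref{eq:Q_rewritten}, independence inherited from the covariance because the (strictly positive) denominator only rescales it, and attainability via part (iii) of Proposition \ref{prop:characterizations}. Your explicit check that the denominator cannot vanish on the domain is a welcome addition that the paper leaves implicit. The genuine difference is monotonicity (D): the paper establishes it by composing the strictly increasing map $\OR \mapsto (\OR-1)/(\OR+1)$ from Lemma \ref{lemma:Q_and_OR} with the monotonicity of the odds ratio in $\P(A\cap B)$ from part (D) of Proposition \ref{prop:properties_OR}, whereas your primary route differentiates $\myQ$ directly in $r=\P(A\cap B)$ and shows the derivative's numerator equals $2\big(\P(A)\P(B)\P(\overline{A})\P(\overline{B})-\Cov(A,B)^2\big) = 2\,\P(A)\P(B)\P(\overline{A})\P(\overline{B})\big(1-\phi(A,B)^2\big)>0$ on the interior (the computation checks out: the numerator of \eqref{eq:Q_rewritten} is $r-pq$ and the quotient rule collapses exactly to this expression). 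Your approach buys self-containedness — it avoids the forward reference to the odds-ratio proposition, which the paper only proves in the appendix — and it exposes an attractive structural link between the slope of $\myQ$ and the phi coefficient; the paper's route buys brevity and reuses machinery it needs anyway for the odds ratio. Since you also record the odds-ratio composition as an alternative, nothing from the paper's argument is missing.
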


The normalization of Yule's $\myQ$ is rather ad hoc, in particular compared to the natural normalization of Cole's $\myC$. The normalization does not use the Fr\'{e}chet--Hoeffding bounds for covariance, but continuously changes with the joint probabilities. Thus, the measure is not as nicely interpretable in terms of covariance relative to its values under perfect positive or negative dependence. On the other hand, it avoids the need to use a case distinction in the normalization as Cole's $\myC$ does, which makes asymptotic theory and inference much easier (see Section \ref{sec:asymptotics}). One way to interpret Yule's $\myQ$ is by realizing that it is a special case of Goodman-Kruskal's $\gamma$ \citep{Goodman1954} and can therefore be interpreted in the same vein \citep{bishop2007discrete}.

A very natural way to motivate and interpret Yule's $\myQ$ is via its relation to the odds ratio. The odds ratio is different from the other measures of dependence discussed so far in that it does not build on the covariance and does not fall into the class of classical dependence measures mapping to $[-1,1]$ or $[0,1]$ as it maps to the extended nonnegative real numbers. Even though it thus falls a bit out of the framework of this paper, we nevertheless discuss it here shortly due to its popularity and to it being nicely interpretable and having nice properties. 

\begin{definition}[Odds Ratio] \label{def:odds_ratio}
	The odds ratio is defined as 
	$$\OR(A,B) := \frac{\P(A\cap B)\P(\overline{A}\cap \overline{B})}{\P(\overline{A}\cap B)\P(A\cap \overline{B})}$$
	if $\P(\overline{A}\cap B)\P(A\cap \overline{B}) \neq 0$ and  $\OR(A,B):=\infty$ if $\P(\overline{A}\cap B)\P(A\cap \overline{B})=0$.
\end{definition}

We discuss further details on the motivation behind the odds ratio, its interpretation and popularity in Appendix \ref{subsec:odds_ratio}.
Note that the odds ratio is closely related to the covariance in that it just replaces the subtraction in \eqref{eq:covariance_representation} by a division. Fittingly, it is also sometimes called the cross-product ratio, referring to the underlying contingency table. The odds ratio and Yule's $\myQ$ are closely related: They are strictly monotonic functions of each other.

\begin{lemma} \label{lemma:Q_and_OR}
	It holds that 
	$$\myQ(A,B) = \frac{\OR(A,B)-1}{\OR(A,B)+1} \quad \text{ and } \quad \OR(A,B) = \frac{1 + \myQ(A,B)}{1-\myQ(A,B)}.$$
\end{lemma}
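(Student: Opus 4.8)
The plan is to prove the first identity by a direct algebraic substitution and then obtain the second by solving for $\OR(A,B)$, since the two formulas are simply inverse to one another. To streamline the computation I would abbreviate the product of the primary-diagonal probabilities by $u := \P(A\cap B)\P(\overline{A}\cap \overline{B})$ and the product of the secondary-diagonal probabilities by $v := \P(\overline{A}\cap B)\P(A\cap \overline{B})$. With this notation, representation \eqref{eq:Q_rewritten} reads $\myQ(A,B) = (u-v)/(u+v)$, while Definition \ref{def:odds_ratio} gives $\OR(A,B) = u/v$ whenever $v \neq 0$.

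In the generic case $v \neq 0$, I would substitute $\OR(A,B) = u/v$ into the right-hand side of the first claimed identity and clear the common factor $1/v$:
\[
\frac{\OR(A,B)-1}{\OR(A,B)+1} = \frac{u/v - 1}{u/v + 1} = \frac{u-v}{u+v} = \myQ(A,B),
\]
which is exactly \eqref{eq:Q_rewritten}. Rearranging the relation $\myQ = (\OR-1)/(\OR+1)$ — equivalently, solving $\myQ(A,B)(u+v) = u-v$ for the ratio $u/v$ — then yields $\OR(A,B) = (1+\myQ(A,B))/(1-\myQ(A,B))$, establishing the second identity.

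The only point requiring care is the boundary case $v = 0$, i.e.\ $\OR(A,B) = \infty$ by convention. Here I would first argue, using the standing assumption $0<\P(A)<1$ and $0<\P(B)<1$, that $u$ and $v$ cannot vanish simultaneously: any primary-diagonal cell and any secondary-diagonal cell of Table \ref{tab:contingency_table} share a common row or column, so two zeros chosen from different diagonals would force one of the marginal probabilities to equal $0$ or $1$, contradicting the assumption. Hence $v=0$ implies $u \neq 0$, so that $\myQ(A,B) = u/u = 1$ is well defined; this matches both the value $\lim_{t\to\infty}(t-1)/(t+1)=1$ of the first formula and the value $\OR = (1+1)/(1-1) = \infty$ of the (suitably interpreted) second formula. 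I expect this edge case to be the only mild obstacle, since the generic case is a one-line calculation: once the degenerate configurations have been ruled out, the lemma is an immediate consequence of the definitions.
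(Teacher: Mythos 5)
Your proof is correct and follows the only natural route — the paper itself states this lemma without an explicit proof, treating it as immediate from the representation \eqref{eq:Q_rewritten} and Definition \ref{def:odds_ratio}, which is exactly the substitution $\OR = u/v$ into $(u-v)/(u+v)$ that you carry out. Your additional care with the boundary case $v=0$ (and the observation that $u$ and $v$ cannot vanish simultaneously under $0<\P(A),\P(B)<1$) goes slightly beyond what the paper records, and is a correct and welcome completion of the argument.
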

Thus, Yule's $\myQ$ can be viewed as a transformation of the odds ratio to a new scale. Hence, it is unsurprising that the odds ratio is proper in an appropriate sense as well. Indeed, we show that it fulfills a modified set of axioms adopted to its codomain and its multiplicative rather than additive nature in Proposition \ref{prop:properties_OR} in the appendix.

Furthermore, Yule's $\myQ$ has been generalized in the sense of 
\begin{equation*} 
	\myQ_g(A, B)=\frac{ \left(\P(A\cap B)\P(\overline{A}\cap \overline{B})\right)^g-\left(\P(\overline{A}\cap B)\P(A\cap \overline{B})\right)^g}{\left(\P(A\cap B)\P(\overline{A}\cap \overline{B})\right)^g+\left(\P(\overline{A}\cap B)\P(A\cap\overline{B})\right)^g}, \qquad g \in (0,1]
\end{equation*} 
with special cases $\myQ:=\myQ_1$, $\myY:=\myQ_{0.5}$ \citep{yule1912} and $H:=\myQ_{0.75}$ \citep{digby1983approx}.
The generalized Yule's $\myQ_g$ also has a one-to-one relation to the odds ratio and can be shown to be proper along the same lines as Yule's $\myQ$. Further, $|\myQ_g(A,B)|$ is weakly increasing in $g$ and strictly increasing if $\Cov(A, B)\ne 0$, which can be seen by computing the derivative with respect to $g$. 

We discuss a further property of $\myQ$, $\myQ_g$ and the odds ratio, which has been interpreted as a form of invariance to the marginal distributions, in Appendix \ref{subsec:invariance_marginals}.

Another interesting measure is tetrachoric correlation. It is constructed via the assumption that the $2 \times 2$ contingency table arises by dichotomization of an underlying bivariate normal distribution, of which it is the Pearson correlation coefficient. Even though its construction seems a bit artificial, it is quite popular compared to the other proper measures and we discuss it in Appendix \ref{subsec:tetrachoric} and show that it is proper as well. 
It is related to  $\myQ_g$ in that versions of the latter have been used to approximate the former, which has no closed-form solution \citep{digby1983approx}.


\subsection{Comparison of $\phi$, $\myC$ and $\myQ$} 
\label{subsec:comparison}

It is possible to directly compare the discussed dependence measures in terms of the values they take since they only depend on the marginal event probabilities $\P(A)$ and $\P(B)$ and the joint probability $\P(A \cap B)$ or the value of one of the competing dependence measures, respectively. In Figure \ref{fig:colecomparison07} we plot Yule's $\myQ$ and the phi coefficient for a fixed value of Cole's $\myC$ of 0.7 over all combinations of event probabilities $\P(A)$ and $\P(B)$.

\begin{figure}[tb]
	\centering
	\includegraphics[width=1\linewidth]{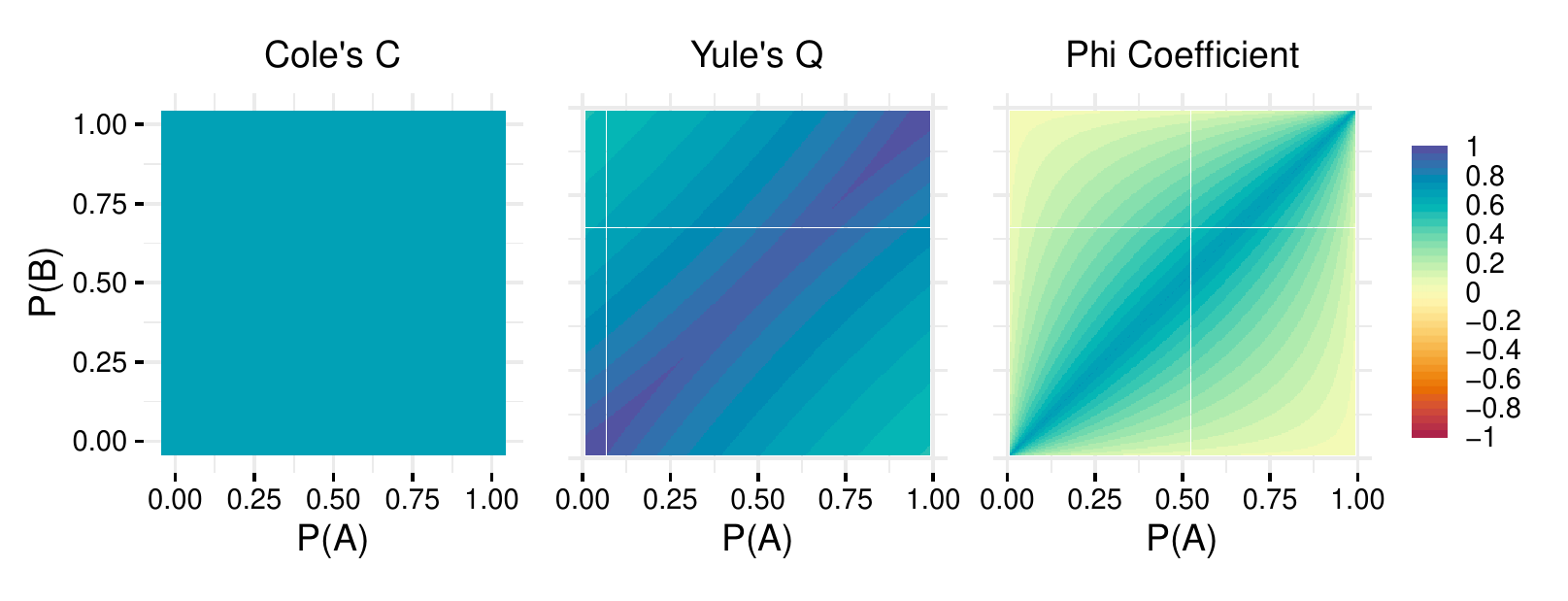}
	\caption{This figure depicts Cole's $\myC$ with a fixed value of 0.7 and the corresponding values of Yule's $\myQ$ and the phi coefficient for all combinations of marginal event probabilities $\P(A)$ and $\P(B)$.}
	\label{fig:colecomparison07}
\end{figure}

Firstly, the distinction between the two proper measures and the phi coefficient is clearly visible. The serious attainability problems of the phi coefficient (see also Figure \ref{fig:phibounds}) can lead to it having very small values despite the quite strong positive dependence. The attainability problems become more serious the further away we move from the diagonal, that is from the condition $\P(A)=\P(B)$ (also see the discussion around Proposition \ref{prop:characterizations_linear_dependence}). 

Secondly, even though the two proper measures are much closer to each other, they of course also take different values in general. In particular, for equal and at the same time very small or large event propabilities, Yule's $\myQ$ takes values close to 1 instead of close to 0.7.

For different values of Cole's coefficient qualitatively the same picture arises. We show the cases where it takes the values 0.3 and $-0.7$ in Figures \ref{fig:colecomparison03} and \ref{fig:colecomparison-07} in the Appendix. The latter figure in comparison with Figure \ref{fig:colecomparison07} also illustrates the symmetry relation $\delta(A,\overline{B})=-\delta(A,B)$ formulated in the second part of property (E), which all the coefficients possess.

\section{Statistical Inference} 
\label{sec:asymptotics} 


\subsection{Estimators and Asymptotic Distributions} 
\label{subsec:asymptotic_distributions}

We consider a sequence of binary random variables $(X_i, Y_i)$ that are distributed as $(X,Y) = (\mathds{1}_A, \mathds{1}_B)$ for all $i \in \mathbb{N}$ and use the shorthand notations $p := \Pb(A)$, $q := \Pb(B)$, $r := \Pb(A \cap B)$, $\sigma := \Cov(A,B) = r - p q$, $\mp := \min(p, q) - pq$ and $\mn :=pq - \max(0,p+q-1)$. 
For a given sample of size $n \in \mathbb{N}$, we introduce the natural estimators $\p := \frac{1}{n} \sum_{i=1}^n X_i$, $\q := \frac{1}{n} \sum_{i=1}^n Y_i$ and $\r := \frac{1}{n} \sum_{i=1}^n X_i Y_i$ as well as $\hsigma := \r - \p \q$, $\hmp := \min(\p, \q) - \p \q$ and $\hmn := \p \q - \max(0, \p + \q- 1)$.
Assume that $0 < \p,\q < 1$
holds such that we can define the following plug-in estimators for $\myQ$, $\myC$, and $\phi$ based on Definitions \ref{def:phi_coefficient},  \ref{def:Cole} and \ref{def:Q_original}, respectively,
\begin{align}
	\label{eqn:Qest}
	\myQn &= \frac{\hsigma}{\r(1-\p-\q+\r) + (\q-\r)(\p-\r)}, \\
	\label{eqn:Cest}
	\myCn &= \frac{\hsigma}{\hmp} \mathds{1}\{\hsigma \ge 0\} + \frac{\hsigma}{\hmn} \mathds{1}\{\hsigma < 0\},  \\
	\label{eqn:Phiest}
	\widehat{\phi}_n &= \frac{\hsigma}{\sqrt{\p(1-\p)\q(1-\q)}},
\end{align}
that are well-defined given the condition $0 < \p,\q < 1$, which holds asymptotically with probability one (see \eqref{eqn:WellDefinedProbOne} for details) by the following assumption.




\begin{assump}
	\label{ass:BoundedAway}
	Let $0 < p, q < 1$ and $\max(0, p + q - 1) <  r < \min(p, q)$.
\end{assump}	


Assumption \ref{ass:BoundedAway} is imposed to rule out boundary effects: Violating the conditions $0 < p, q < 1$ would imply that the corresponding estimators $\p, \q$ are almost surely zero or one.
The condition $\max(0, p + q - 1) =  r$ would imply perfect negative dependence and $r = \min(p, q)$ perfect positive dependence such that our estimators for $\myQ$ and $\myC$ (and $\phi$ in the attainable case) would be either plus or minus 1 almost surely, and hence their asymptotic distributions would be degenerate. 

The basis for the asymptotics for the three dependence coefficients in \eqref{eqn:Qest}--\eqref{eqn:Phiest} is a central limit theorem (CLT) for $\frac{1}{n} \sum_{i=1}^n \mathbf{W}_i$ with $\mathbf{W}_i := (X_i, Y_i, X_i Y_i)^\top$, $i \in \mathbb{N}$, in Lemma \ref{lemma:CLTjoint}
that can be derived under either one of the following two conditions:
\begin{assump}
	\label{ass:iid}
	Let $(X_i, Y_i)_{i \in \mathbb{N}}$ be a sequence of independent and identically distributed (iid) random variables and let $\Omega := \Var \left( \mathbf{W}_i  \right)$.
\end{assump}

\begin{assump}
	\label{ass:Dependence}
	Let $(X_i, Y_i)_{i \in \mathbb{N}}$ be a stationary ergodic adapted mixingale of size $-1$ and let the long-run covariance matrix  $\Omega := \lim_{n\to \infty} \Var \left( n^{-1/2} \sum_{i=1}^n  \mathbf{W}_i  \right)$ be strictly positive definite.
\end{assump}



In the iid case of Assumption \ref{ass:iid}, $\Omega = \Var \left( \mathbf{W}_i  \right)$ is positive definite through Assumption \ref{ass:BoundedAway} that assures that the components of $\mathbf{W}_i$ are not perfectly dependent; see the proof of Lemma \ref{lemma:CLTjoint} for details.
Instead, an equivalent condition has to be imposed on the long run variance in the time series case of Assumption \ref{ass:Dependence}.
See \citet[Section 5.3]{White2001} for details on mixingale processes.
While time series asymptotics usually come at the cost of more restrictive moment conditions, these are of no concern here due to the boundedness of the binary random variables. 
We further discuss the time series case in Appendix \ref{subsec:time_series_asymptotics}.




We start by giving the asymptotic distribution of $\myQn$, which is Gaussian.
\begin{proposition}
	\label{prop:Asymptotics-Q}
	Given Assumption \ref{ass:BoundedAway} and either Assumption \ref{ass:iid} or Assumption \ref{ass:Dependence}, it holds that
	\begin{align*}
		& \big(J_g \Omega J_g^\top \big)^{-1/2} \sqrt{n} \big( \myQn - \myQ \big) \stackrel{d}{\to} \mathcal{N} \big( 0, 1\big), \qquad \text{where} \\
		&J_g := \frac{2}{\big(p (q - 2 r) + r (1 -2 q + 2 r)\big)^2}
		\begin{pmatrix}
			(q - 1) r (q - r) \\
			(p - 1) r (p - r) \\
			- (p^2 q + p q (-1 + q - 2 r) + r^2)
		\end{pmatrix}^\top.
	\end{align*}
\end{proposition}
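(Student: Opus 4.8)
The plan is to recognize $\myQn$ as a smooth transformation of the vector of sample averages $\overline{\mathbf{W}}_n := \frac1n \sum_{i=1}^n \mathbf{W}_i = (\p, \q, \r)^\top$ and then to apply the multivariate delta method to the central limit theorem supplied by Lemma \ref{lemma:CLTjoint}. From the estimator representation \eqref{eqn:Qest} we write $\myQn = g(\p, \q, \r)$ with
$$g(p,q,r) = \frac{r - pq}{r(1-p-q+r) + (q-r)(p-r)},$$
so that $\myQ = g(p,q,r)$ by \eqref{eq:Q_rewritten}, the numerator being $\Cov(A,B) = r - pq$. The denominator $D(p,q,r) := r(1-p-q+r) + (q-r)(p-r)$ equals $\P(A\cap B)\P(\overline{A}\cap\overline{B}) + \P(\overline{A}\cap B)\P(A\cap\overline{B})$, which is strictly positive under Assumption \ref{ass:BoundedAway} since that assumption forces all four cell probabilities to be strictly positive. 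Hence $g$ is continuously differentiable in a neighborhood of $(p,q,r)$, and because $\overline{\mathbf{W}}_n \toP (p,q,r)^\top$ the estimator is well-defined with probability tending to one (cf.\ \eqref{eqn:WellDefinedProbOne}).

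First I would invoke Lemma \ref{lemma:CLTjoint}, which under either Assumption \ref{ass:iid} or Assumption \ref{ass:Dependence} gives $\sqrt{n}\big(\overline{\mathbf{W}}_n - (p,q,r)^\top\big) \tod \mathcal{N}(0, \Omega)$. The delta method then yields $\sqrt{n}\big(\myQn - \myQ\big) \tod \mathcal{N}\big(0, J_g \Omega J_g^\top\big)$, where $J_g = \nabla g(p,q,r)$ is the row-vector gradient appearing in the statement. The core computation is to evaluate this gradient via the quotient rule, $\partial_x g = (D\,\partial_x \sigma - \sigma\,\partial_x D)/D^2$ with $\sigma = r - pq$. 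Writing $D = pq + r - 2pr - 2qr + 2r^2$ one reads off $\partial_p D = q - 2r$, $\partial_q D = p - 2r$ and $\partial_r D = 1 - 2p - 2q + 4r$; substituting and collecting terms collapses, for instance, the $p$-entry to $2(q-1)r(q-r)/D^2$, and symmetrically the $q$-entry to $2(p-1)r(p-r)/D^2$, while the $r$-entry simplifies to $-2\big(p^2 q + pq(-1+q-2r) + r^2\big)/D^2$. Noting that $D = p(q-2r) + r(1-2q+2r)$ identifies $D^2$ with the prefactor denominator, so these three entries match the displayed $J_g$.

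Finally I would standardize. Under Assumption \ref{ass:iid} (through Assumption \ref{ass:BoundedAway}) or Assumption \ref{ass:Dependence} the matrix $\Omega$ is positive definite, so the scalar asymptotic variance $J_g \Omega J_g^\top$ is strictly positive provided $J_g \neq 0$; this holds because the first entry $(q-1)r(q-r)$ is nonzero, as $0 < q < 1$, $r = \P(A\cap B) > 0$ and $q - r = \P(\overline{A}\cap B) > 0$ under Assumption \ref{ass:BoundedAway}. Dividing the limit by the positive constant $\big(J_g\Omega J_g^\top\big)^{1/2}$ then delivers the studentized statement $\big(J_g \Omega J_g^\top\big)^{-1/2}\sqrt{n}\big(\myQn - \myQ\big) \tod \mathcal{N}(0,1)$. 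I expect the only real work to be the algebraic simplification of the gradient into the compact form displayed, since each entry expands into several cancelling monomials; the delta method itself applies routinely once differentiability of $g$ and nondegeneracy of the limiting variance are secured.
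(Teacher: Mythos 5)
Your proposal is correct and follows essentially the same route as the paper: write $\myQn = g(\p,\q,\r)$, invoke the CLT of Lemma \ref{lemma:CLTjoint}, and apply the delta method, checking nondegeneracy of $J_g$ via its first entry. The one point of divergence is that the paper justifies the non-vanishing of the Jacobian's denominator $\big(p(q-2r)+r(1-2q+2r)\big)^2$ by the separate, rather elaborate geometric case analysis of Lemma \ref{lem:denominator}, whereas you observe directly that this quantity equals $\big(\P(A\cap B)\P(\overline{A}\cap\overline{B})+\P(\overline{A}\cap B)\P(A\cap\overline{B})\big)^2$, which is strictly positive because Assumption \ref{ass:BoundedAway} forces all four cell probabilities to be positive --- a valid and considerably shorter argument that makes that lemma unnecessary for this proposition.
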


As $\myQ$ is a continuously differentiable transformation of $p, q$ and $r$ (with Jacobian matrix $J_g$ at $(p,q,r)$), the proof of Proposition \ref{prop:Asymptotics-Q} is a straightforward application of the delta method, but is complicated by the need to show non-degeneracies in the Jacobian $J_g$.

In contrast, the asymptotic distribution of $\myCn$ (and its derivation) is much more involved.

\begin{proposition}
	\label{prop:Asymptotics-C}
	Let $Z \sim \mathcal{N}(0,1)$, $\mathbf{V} = (V_1,V_2,V_3,V_4)^\top \sim J_f \Omega^{1/2} \mathbf{Z}$, where $\mathbf{Z} \sim \mathcal{N}(0, I_3)$ and
	\begin{align*}
		\Lambda^+ &:= 
		\begin{pmatrix}
			\frac{1}{m^+}  & 
			- \frac{\sigma}{m^+  m^+}
		\end{pmatrix}, 
		\qquad 
		J_{h^+} :=
		\begin{pmatrix}
			-q & -p & 1 \\
			\mathds{1}(p < q) - q& \mathds{1}(q < p) - p & 0 
		\end{pmatrix}, \\
		\Lambda^- &:=  
		\begin{pmatrix}
			\frac{1}{m^-}  & 
			- \frac{\sigma}{m^-  m^-}
		\end{pmatrix}, 
		\qquad
		J_{h^-} :=
		\begin{pmatrix}
			-q  & -p & 1 \\
			q -	\mathds{1}(p+q>1)   & 	p -	\mathds{1}(p+q>1) & 0 
		\end{pmatrix}, \\
		\Delta &:= (-q, -p, 1)^\top,
		\qquad \qquad \;\,
		J_f :=
		\begin{pmatrix}
			1 & 0 & 0 \\
			0 & 1 & 0  \\
			q & p & 0 \\
			-q & -p & 1
		\end{pmatrix}.
	\end{align*} 	
	Then, given Assumption \ref{ass:BoundedAway} and either Assumption \ref{ass:iid} or Assumption \ref{ass:Dependence}, it holds that:
	\begin{itemize} 
		\item If $\sigma = 0$, then 
		$\big(\Delta^\top \Omega \Delta \big)^{-1/2} \sqrt{n} \big(\myCn - \myC \big) \tod \frac{Z \, \mathds{1}\{Z > 0\}}{m^+} +  \frac{Z \, \mathds{1}\{Z < 0\}}{m^-}$.
		
		\item If $\sigma > 0$, and
		\begin{itemize}
			\item 
			if $p \not= q$, then $\big( \Lambda^{+} J_{h^+} \Omega J_{h^+}^\top  (\Lambda^{+})^\top \big)^{-1/2} \sqrt{n}  \big( \myCn - \myC \big)
			\stackrel{d}{\to} \mathcal{N} \big( 0, 1 \big)$; and
			
			\item 
			if $p = q$, then 
			$\sqrt{n} \big(\myCn - \myC \big) \tod	\Lambda^+ 
			\begin{pmatrix}
				V_{4} \\ 
				\big[ V_{1} - V_{3} \big]  - \mathds{1} \{ V_{1} > V_{2} \} \big[ V_{1} - V_{2} \big]
			\end{pmatrix}$.
		\end{itemize}

		\item If $\sigma < 0$, and
		\begin{itemize}
			\item 
			if $p \not= 1-q$, then 
			$\big( \Lambda^{-} J_{h^-} \Omega J_{h^-}^\top  (\Lambda^{-})^\top \big)^{-1/2} \sqrt{n} 
			\big( \myCn - \myC \big)
			\stackrel{d}{\to} \mathcal{N} \big( 0, 1 \big)$; and
			
			\item
			if $p = 1-q$, then
			$\sqrt{n} \big(\myCn - \myC \big) \tod	\Lambda^-  
			\begin{pmatrix}
				V_{4} \\ 	
				 V_{3} - \mathds{1} \{ V_{1} > - V_{2} \} \big[  V_{1} + V_{2} \big]
			\end{pmatrix}
$.
		\end{itemize}
	\end{itemize} 	
\end{proposition}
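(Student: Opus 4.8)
The plan is to propagate the joint CLT of Lemma~\ref{lemma:CLTjoint}, namely $\sqrt{n}\big((\p,\q,\r)^\top-(p,q,r)^\top\big)\stackrel{d}{\to}\Omega^{1/2}\mathbf{Z}$ with $\mathbf{Z}\sim\mathcal{N}(0,I_3)$, through the piecewise-defined map that produces $\myCn$, splitting according to the sign of $\sigma$ and the two kinks of the normalization. First I would record that Assumption~\ref{ass:BoundedAway} forces $\mp=\min(p,q)-pq>0$ and $\mn=pq-\max(0,p+q-1)>0$ (the latter equals $pq$ if $p+q\le1$ and $(1-p)(1-q)$ otherwise), so that on each relevant branch $\myC$ is a \emph{smooth} function of the pair $(\sigma,\mp)$ or $(\sigma,\mn)$, with gradient $\Lambda^+$ or $\Lambda^-$ at the true value. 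The only obstructions to the ordinary delta method are therefore the indicator $\mathds{1}\{\hsigma\ge0\}$ and the non-smooth maps $(p,q)\mapsto\min(p,q)$ and $(p,q)\mapsto\max(0,p+q-1)$.

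The regime $\sigma\neq0$ is the benign one. Since $\hsigma\toP\sigma$, the event $\{\hsigma\ge0\}$ stabilizes to $\mathds{1}\{\sigma>0\}$ with probability tending to one, so up to a vanishing-probability event $\myCn$ coincides with the single branch $\hsigma/\hmp$ (if $\sigma>0$) or $\hsigma/\hmn$ (if $\sigma<0$), and the indicator may be dropped. If in addition $p\neq q$ (resp.\ $p\neq1-q$), then $\min(p,q)$ (resp.\ $\max(0,p+q-1)$) is locally affine, hence the inner map $h^+:(p,q,r)\mapsto(\sigma,\mp)$ (resp.\ $h^-$) is smooth with Jacobian $J_{h^+}$ (resp.\ $J_{h^-}$). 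I would then apply the multivariate delta method to the composition of $h^\pm$ with the smooth ratio $(u,v)\mapsto u/v$, whose gradient is $\Lambda^\pm$, so that the chain rule yields the asymptotic variance $\Lambda^\pm J_{h^\pm}\Omega J_{h^\pm}^\top(\Lambda^\pm)^\top$; normalizing by its square root gives the two $\mathcal{N}(0,1)$ limits.

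The substance lies in the three non-standard cases. For $\sigma=0$ (so $\myC=0$), I would keep both branches: by Slutsky $\hmp\to\mp$ and $\hmn\to\mn$, while $\sqrt{n}\hsigma\stackrel{d}{\to}\Delta^\top\Omega^{1/2}\mathbf{Z}\sim\mathcal{N}(0,\Delta^\top\Omega\Delta)$ with $\Delta=(-q,-p,1)^\top$. The scalar map $w\mapsto\frac{w}{\mp}\mathds{1}\{w\ge0\}+\frac{w}{\mn}\mathds{1}\{w<0\}$ is continuous (both pieces vanish at $w=0$), so the continuous mapping theorem applies, and writing the limit as $(\Delta^\top\Omega\Delta)^{1/2}Z$ with $Z\sim\mathcal{N}(0,1)$ yields the claimed mixture. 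For the kinks $\sigma>0,\ p=q$ and $\sigma<0,\ p=1-q$, the denominator is no longer asymptotically linear. Here I would expand the non-smooth coordinate by hand: using $p=q$ one has $\sqrt{n}\big(\min(\p,\q)-\min(p,q)\big)=\min\big(\sqrt{n}(\p-p),\sqrt{n}(\q-q)\big)$, hence $\sqrt{n}(\hmp-\mp)\stackrel{d}{\to}\min(V_1,V_2)-V_3=[V_1-V_3]-\mathds{1}\{V_1>V_2\}[V_1-V_2]$, jointly with $\sqrt{n}(\hsigma-\sigma)\to V_4$, by the continuous mapping theorem applied to $\sqrt{n}\big((\p,\q,\r)^\top-(p,q,r)^\top\big)$; symmetrically, using $p=1-q$ gives $\sqrt{n}\max(0,\p+\q-1)=\big(\sqrt{n}(\p-p)+\sqrt{n}(\q-q)\big)^+$ and thus $\sqrt{n}(\hmn-\mn)\stackrel{d}{\to}V_3-\mathds{1}\{V_1>-V_2\}[V_1+V_2]$. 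Since the outer ratio remains $C^1$ at the limiting point, the general delta method---valid for an arbitrary (here non-Gaussian) limit of the inner fluctuations---applies with gradient $\Lambda^+$ (resp.\ $\Lambda^-$), delivering the two remaining displayed limits.

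The main obstacle is exactly these two kink cases: one must resist absorbing $\min(p,q)$ or $\max(0,p+q-1)$ into the smooth transformation $f$ (whose Jacobian $J_f$ defines $\mathbf{V}=J_f\Omega^{1/2}\mathbf{Z}$), and instead isolate the non-smooth coordinate, compute its folded-Gaussian directional limit explicitly, and only then invoke the version of the delta method that tolerates a non-Gaussian inner limit beneath a smooth outer map. Throughout, the joint convergence of the numerator and denominator fluctuations must be preserved so that the dependence encoded in $\Omega$, and carried into $\mathbf{V}$, is transmitted correctly into the final piecewise-linear functionals of $(V_1,V_2,V_3,V_4)$.
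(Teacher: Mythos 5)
Your proposal is correct and follows essentially the same route as the paper: the same case split on the sign of $\sigma$ and on the kinks $p=q$ and $p=1-q$, the same half-normal mixture via the continuous mapping theorem when $\sigma=0$, the same hand-computed folded limits $\min(V_1,V_2)-V_3$ and $V_3-\mathds{1}\{V_1>-V_2\}[V_1+V_2]$ at the kinks, and the delta method with gradients $\Lambda^\pm$ elsewhere. The only cosmetic difference is that you dispose of the wrong-branch indicator terms by noting that $\myCn$ agrees with the smooth branch on events of probability tending to one, whereas the paper bounds $\sqrt{n}\,\mathds{1}\{\hsigma\le 0\}$ in $L_1$ via Chebyshev; both arguments are valid.
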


Proposition \ref{prop:Asymptotics-C} shows that the asymptotic behavior of $\myCn$ is non-standard and exhibits discontinuities that depend on the unknown population quantities $\sigma$, $p$ and $q$.
The case distinction in whether $\sigma$ (or equivalently $\myC$) is zero, positive or negative arises due to the differing normalizations in the definition of $\myC$:
For the case of $\myC = 0$, the asymptotic distribution is given by two half-normals with different variances.
For $\{ \myC > 0,\, p \not= q \}$ and $\{ \myC < 0,\, p \not= 1 - q \}$, we obtain a standard asymptotic normal distribution.
We however get non-standard asymptotic distributions for the cases $\{ \myC > 0,\, p = q \}$ and $\{ \myC < 0,\, p = 1 - q \}$ that arise due to the minimum and maximum operators in the normalizations $\mp$ and $\mn$.

\begin{figure}[tb]
	\centering
	\includegraphics[width=\linewidth]{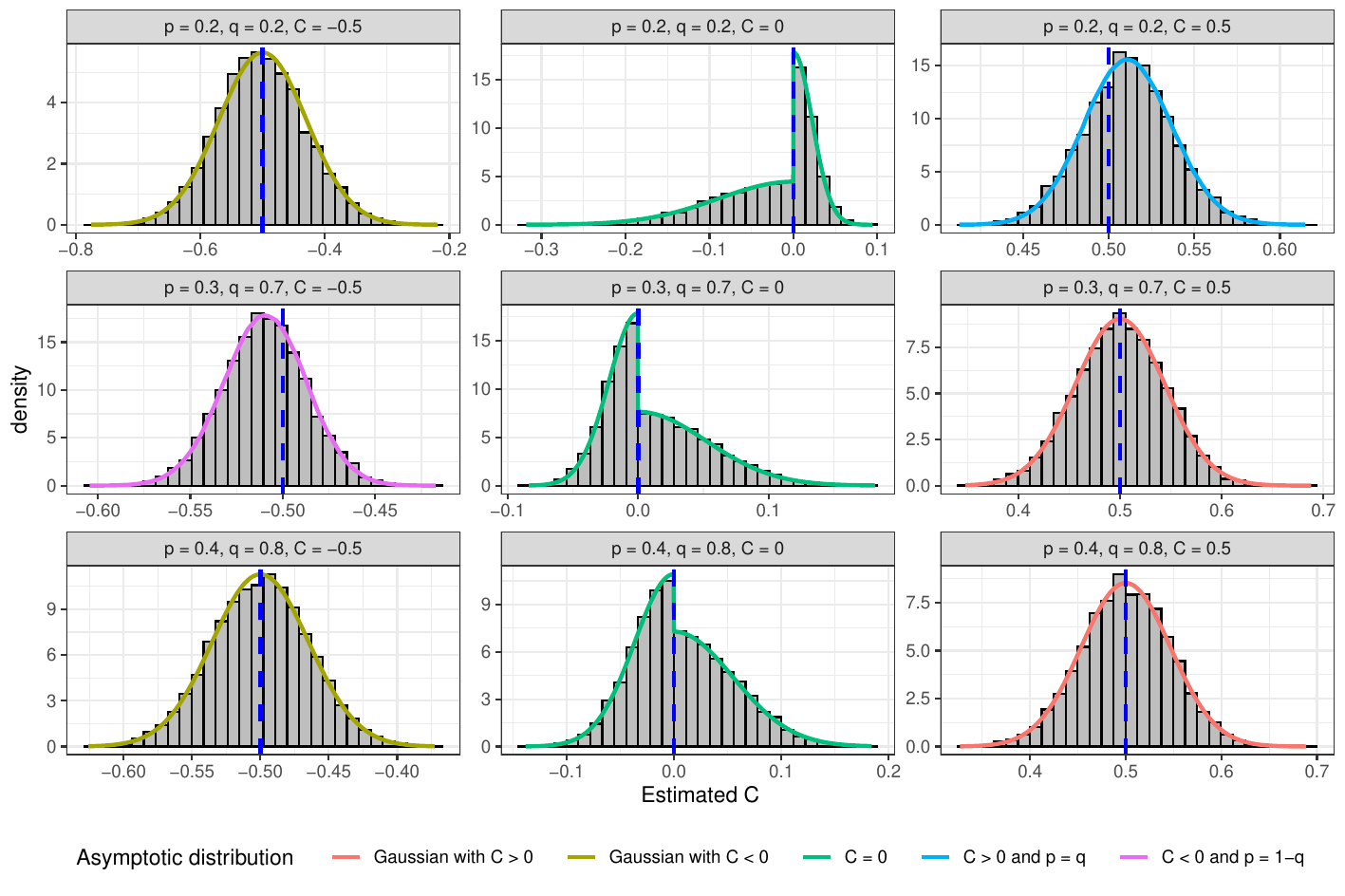}
	\caption{This figure illustrates the different asymptotic distributions of the estimated Cole coefficient $\myCn$ depending on three choices of the marginal probabilities $(p, q)$ coupled with $\myC \in \{-0.5, 0, 0.5\}$ in the different subplots.
	The histograms show the estimated values of $\myCn$ for a sample size of $n=2{,}000$ for $10{,}000$ simulation replications.
	The plots are augmented with the respective asymptotic densities and the vertical dashed blue lines indicate the true values of $\myC$.}
	\label{fig:C_AsyDist}
\end{figure}

Figure \ref{fig:C_AsyDist} illustrates the five different cases of the asymptotic distribution through simulations. 
Besides the standard asymptotic normality, we can clearly see the two half-normal distributions for $\myC=0$, where the difference of the variances depends on the specific setting governed by $p$ and $q$.
The complicated cases $\{ \myC > 0,\, p = q \}$ and $\{ \myC < 0,\, p = 1 - q \}$ arise in the upper right and middle left subplots. There we also find a bias in the estimates, which is reflected in the asymptotic distributions through the terms $\mathds{1} \{ V_{1} > V_{2} \} \big[ V_{1} - V_{2} \big]$ and $\mathds{1} \{ V_{1} > - V_{2} \} \big[  V_{1} + V_{2} \big]$, respectively.
For $\myC > 0$, this bias can be explained by the normalization $\hmp = \min(\p, \q) - \p \q > 0$, which is negatively biased through $\min(\p, \q)$, which tends to underestimate its population counterpart $\min(p, q)$ in the case $p = q$. The argument is similar for the case $\myC < 0$.


For completeness, we also report the asymptotic distribution of $\widehat{\phi}_n$, whose proof is a straightforward application of the delta method.
\begin{proposition}
	\label{prop:Asymptotics-Phi}
	Given Assumption \ref{ass:BoundedAway} and either Assumption \ref{ass:iid} or Assumption \ref{ass:Dependence}, it holds that
	\begin{align*}
		& \big(J_l \Omega J_l^\top \big)^{-1/2} \sqrt{n} \big( \widehat{\phi}_n - \phi \big) \stackrel{d}{\to} \mathcal{N} \big( 0, 1\big), \qquad \text{where} \\
		&J_l :=  \frac{1}{\sqrt{p(1-p)q(1-q)}} \left( 		
			\begin{pmatrix}
				-q \\
				-p \\
				1
			\end{pmatrix}^\top 
			+ \frac{pq-r}{2p(1-p)q(1-q)}	
			\begin{pmatrix}
				(1-p)q(1-q) - pq(1-q)\\
				p(1-p)(1-q) - p(1-p)q \\
				0
			\end{pmatrix}^\top 
			\right).
	\end{align*}
\end{proposition}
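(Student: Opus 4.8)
The plan is to obtain the result as a direct application of the multivariate delta method to the joint central limit theorem for the sample moments. First I would invoke Lemma \ref{lemma:CLTjoint}, which under Assumption \ref{ass:BoundedAway} together with either Assumption \ref{ass:iid} or Assumption \ref{ass:Dependence} yields
\[
\sqrt{n}\big((\p,\q,\r)^\top - (p,q,r)^\top\big) \tod \mathcal{N}(0,\Omega),
\]
where $\Omega$ is strictly positive definite (guaranteed through Assumption \ref{ass:BoundedAway} in the iid case and assumed directly for the long-run covariance in the time series case).

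Next I would write $\widehat{\phi}_n = g(\p,\q,\r)$ with
\[
g(\pa,\qa,\ra) = \frac{\ra - \pa\qa}{\sqrt{\pa(1-\pa)\qa(1-\qa)}},
\]
and observe that $g$ is continuously differentiable in a neighborhood of $(p,q,r)$: Assumption \ref{ass:BoundedAway} forces $0<p,q<1$, so the denominator $\sqrt{p(1-p)q(1-q)}$ is bounded away from zero and smooth there. A routine differentiation — writing $g$ as the product of the numerator $\ra-\pa\qa$ and the factor $[\pa(1-\pa)\qa(1-\qa)]^{-1/2}$ and applying the product and chain rules — then produces the gradient $J_l$ stated in the proposition. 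The first summand of $J_l$ (the row vector $(-q,-p,1)$ scaled by $1/\sqrt{p(1-p)q(1-q)}$) arises from differentiating the numerator, while the second summand arises from differentiating the reciprocal square root of the denominator, the factor $pq-r$ and the exponent $-1/2$ combining into the stated prefactor.

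The delta method then gives $\sqrt{n}(\widehat{\phi}_n - \phi) \tod \mathcal{N}\big(0, J_l \Omega J_l^\top\big)$, and standardizing by $(J_l \Omega J_l^\top)^{-1/2}$ yields the claimed standard normal limit. The only point requiring care is the non-degeneracy $J_l \Omega J_l^\top > 0$, which is needed for the standardization to be well-defined. In contrast to the proof of Proposition \ref{prop:Asymptotics-Q}, where the Jacobian $J_g$ can in principle vanish and the main obstacle is ruling this out, here the difficulty is essentially absent: the third component of $J_l$ equals $1/\sqrt{p(1-p)q(1-q)} > 0$, so $J_l \neq 0$, and since $\Omega$ is positive definite this immediately gives $J_l \Omega J_l^\top > 0$. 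Thus the substantive content reduces to verifying the gradient computation and citing positive definiteness of $\Omega$, with none of the case distinctions or delicate boundary arguments required for $\myCn$.
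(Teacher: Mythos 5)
Your proposal is correct and follows essentially the same route as the paper: the joint CLT from Lemma \ref{lemma:CLTjoint}, the delta method applied to $l(\pa,\qa,\ra)=(\ra-\pa\qa)/\sqrt{\pa(1-\pa)\qa(1-\qa)}$, and non-degeneracy of the standardization via the observation that the third component of $J_l$ is nonzero. The only detail the paper includes that you omit is the preliminary remark (carried over from the proof of Proposition \ref{prop:Asymptotics-Q}) that $\widehat{\phi}_n$ is well-defined with asymptotic probability one since $\P(\p\in\{0,1\} \text{ or } \q\in\{0,1\})\to 0$ under Assumption \ref{ass:BoundedAway}, which is a minor technicality.
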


\subsection{Feasible Inference, Tests and Confidence Intervals} 
\label{subsec:CIs}


Feasible inference  requires consistent estimates of the asymptotic distributions. 
Denote by $\widehat{J}_{n, g}$, $\widehat{J}_{n, h^+}$, $\widehat{J}_{n, h^-}$, $\widehat{J}_{n,f}$, $\widehat{J}_{n,l}$, $\widehat{\Lambda}_n^+$, $\widehat{\Lambda}_n^-$,  $\hmn$, $\hmp$,  and $\widehat{\Delta}_n$ the estimated counterparts of the vectors and matrices defined in Propositions \ref{prop:Asymptotics-Q}, \ref{prop:Asymptotics-C} and \ref{prop:Asymptotics-Phi} that arise by simply replacing the true probabilities $p$, $q$ and $r$ by their estimators $\p$, $\q$ and $\r$. 
In the iid case, consistent estimation of $\Omega$ through a plug-in estimator using $\p, \q$ and $\r$ in \eqref{eq:variance_matrix_iid}  in Appendix \ref{app:proofs} is straight-forward.
In the time series case, we employ the well-known ``HAC''-estimator of \citet{NeweyWest1987} whose consistency is shown in Proposition \ref{prop:CovConsistency}.
The continuous mapping theorem implies that the empirical covariance matrices in Propositions \ref{prop:Asymptotics-Q}, \ref{prop:Asymptotics-C} and \ref{prop:Asymptotics-Phi} are consistent estimators of their population counterparts, which ensures that the asymptotic distributions are unchanged when the population matrices are replaced with these estimators. 

Feasible inference in the form of testing and confidence intervals for $\myQ$ and $\phi$ can be carried out by standard methods based on the above asymptotic normality statements.
In contrast, the case distinctions in the asymptotic distribution of $\myCn$ in Proposition \ref{prop:Asymptotics-C} complicate the construction of tests and confidence intervals as the population quantities $\myC$, $p$ and $q$ are unknown.

For confidence intervals, we leverage the idea of building them through inverted hypothesis tests by running a sequence of tests (described below) with null hypotheses $\H_{c} = \{ \myC = c \}$ for all (or in practice, a fine sequence of) values $c \in (-1,1)$. 
The confidence interval at level $1 - \alpha \in (0,1)$ then consists of all values $c \in (-1,1)$, for which the hypothesis $\H_{c}$ is not rejected at level $\alpha$. 

Tests for $\H_0 = \{\myC=0\}$ (or for $\{\myC \le 0\}$ or $\{\myC \ge 0$\}) are straightforward using the respective limiting distribution for $\myC = 0$ in Proposition \ref{prop:Asymptotics-C}.\footnote{With the same reasons as given after ``second'' in the following paragraph, we write $\H_0 = \H_0 \cap \{\sigma = 0\}$ and include a two-sided test for $\{\sigma = 0\}$ and obtain the final $p$-value with a Bonferroni correction.}
When testing for $\H_{c}$ for any $c>0$,
we are however uncertain whether we are in the case $\{p = q\}$ or $\{p \not= q\}$ such that a valid testing approach must accommodate both possibilities. 
Hence, we write $\H_c = \H_= \cup H_{\not=}$ with $\H_{=} = \{ \myC=c, \, p=q \}$ and $\H_{\not=} = \{ \myC=c, \, p\not=q \}$, where $\H_{=}$ and $\H_{\not=}$ can be tested using the asymptotically valid $p$-values $p_=$ and $p_{\not=}$ that arise from the respective asymptotic distributions in Proposition \ref{prop:Asymptotics-C}.
Thus, testing for $\H_c$ through the union of these two hypotheses becomes an intersection-union (IU) test, for which $\max(p_{=}, p_{\not=})$ is an asymptotically valid $p$-value \citep{Berger1982}. 

This testing procedure is 
however overly conservative for two reasons:
First, given the predominant case $\{p \not= q\}$, the $p$-value $p_{=}$ is often non-informative and dominates the IU-construction $\max(p_{=}, p_{\not=})$.
We resolve this by writing $\H_= = \H_= \cap \H_{p,q}$ with $\H_{p,q} = \{p=q\}$, which we test for by the joint CLT for $(\p, \q)$ in Lemma \ref{lemma:CLTjoint} in the Appendix with $p$-value $p_{p,q}$.
Second, the estimated asymptotic variances often differ substantially for positive and negative values of $c$ (see e.g., the middle column of Figure \ref{fig:C_AsyDist}), resulting in cases where for a large positive value of $\myC$, we reject many positive values $0 < c < \myC$, but cannot reject values $c \le 0$ due to a higher estimated variance in the negative case.
We solve this by writing $\H_c = \H_c \cap {\H}_{\sigma}$ with ${\H}_{\sigma} = \{ \sigma \ge 0 \}$, which is a necessary condition for $c > 0$.
A test for ${\H}_{\sigma}$ with $p$-value $p_\sigma$ is easily constructed based on the CLT for $\widehat{\sigma}_n$ provided in Lemma \ref{lemma:CLT_sigma} in the Appendix. 
Both these refinements are union-intersection (UI) tests, for which Bonferroni corrections are necessary.
Hence, our final $p$-value is given by 
$2 \min \{ \max[ 2 \min (p_=, p_{p,q}), p_{\not=}], p_\sigma \}$ corresponding to the hypothesis
$\H_c = \{ [ (\H_= \cap H_{p,q}) \cup \H_{\not=}] \cap \H_\sigma \}$.

For any $c < 0$, we use an equivalent construction with $\H_{=} = \{ \myC=c, \, p=1-q \}$ and $\H_{\not=} = \{ \myC=c, \, p\not=1-q \}$, $\H_{p,q} = \{p=1-q\}$ and $\H_\sigma = \{ \sigma \le 0\}$.

In Appendix \ref{sec:Simulations}, we illustrate the good finite sample properties for the standard confidence intervals for $\myQ$ and $\phi$ and for the more refined construction for $\myC$.
For the latter, we also demonstrate the superiority of our approach based on the two auxiliary UI-tests for $\H_{p,q}$ and ${\H}_{\sigma}$ in terms of length of the resulting confidence intervals.

\subsection{Fisher Transformations} 
\label{sec:FisherTransformations}

As the sampling distributions of the estimated measures are bounded, they are in finite samples often heavily influenced by the theoretical upper and lower bounds when the true measures are close (relative to the sample size) to these bounds.
This leads to skewed distributions, for which asymptotic normality (and the more complicated distributions for $\myCn$) are bad approximations and hence deteriorates the finite sample performance of the associated tests and confidence intervals.
To combat this problem, we adopt the classical Fisher transformation \citep{fisher1915}, originally proposed for Pearson correlation, 
\begin{equation} 
	\label{eq:Fisher_transformation}
	Z(\delta) := \mathrm{arctanh}(\delta) = \frac 1 2 \log \left( \frac{1+\delta}{1-\delta} \right)
\end{equation}
for a generic (population or estimated) dependence measure $\delta \in \{\myQ, \myC, \phi, \myQn, \myCn, \widehat{\phi}_n \}$.
The $\mathrm{arctanh}(\cdot)$ in \eqref{eq:Fisher_transformation} bijectively maps the interval $(-1,1)$ to $\mathbb{R}$ and hence provides a better approximation when the dependence measure is close to the boundary.

We provide additional details on the Fisher transformation together with the limiting distributions of  $Z(\myQn)$, $Z(\myCn)$  and $Z(\widehat{\phi}_n)$ in Propositions \ref{prop:Asymptotics-Z_Q}, \ref{prop:Asymptotics-Z_C} and \ref{prop:Asymptotics-Z_phi} in Appendix \ref{subsec:Fisher_transformations} by employing the delta method.
These distributions can be used to construct tests and confidence intervals analogously to Section \ref{subsec:CIs}.

\section{Case Study on Drug Use} \label{sec:application}


We use data from the 2018 wave of the Monitoring the Future (MTF) study on drug use among the contemporary American youth \citep[DS1 Core Data]{miech2019}.\footnote{We use data from 2018 because subsequent waves have sizeably fewer observations.} We are interested in interdependence between consumption of different drugs and consider the events of ever having consumed certain drugs. The sample size is $14{,}502$, but for certain drugs there is a large number of non-responses.\footnote{The dataset is an aggregation of six different questionnaires students get randomly assigned to. Some drugs do not appear in all six questionnaires.} 
The different drugs, the abbreviations we use for them in subsequent plots, their marginal relative consumption frequencies and the respective (univariate) sample sizes are listed in Table \ref{tab:marginals}.


\begin{table}[tb]
	\centering
    \small 
	\begin{threeparttable}
		\begin{tabular}{cccc}
			\toprule
			\textbf{drug} & \textbf{abbreviation} & \textbf{relative frequency}  & \textbf{sample size} \\
			\midrule
			alcohol & ALC & $0.60$ &$13{,}219$ \\
			marijuana & MAR & $0.44$ & $13{,}395$\\
			cigarettes & CIG & $0.24$ & $13{,}680$\\
			amphetamines & AMP & $0.09$ &$13{,}458$ \\
			tranquilizers & TRQ & $0.07$ & $13{,}408$ \\
			narcotics & NAR & $0.06$ & $13{,}305$\\
			LSD & LSD & $0.05$ & $11{,}210$ \\
			psychedelics$^*$ & PSY & $0.04$ & $11{,}202$ \\
			sedatives & SED & $0.04$ & $13{,}476$\\
			MDMA & MDM & $0.04$ & \;\,$6{,}736$\\
			cocaine & COK & $0.04$ &$13{,}361$  \\
			crack & CRA & $0.01$ &$12{,}920$ \\
			methamphetamines & MET & $0.01$ & \;\,$4{,}446$\\
			heroin & HER & $0.01$ & $13{,}444$\\
			\bottomrule
		\end{tabular}
		\begin{tablenotes}
			\item[*] without LSD
		\end{tablenotes}
	\end{threeparttable}
	\caption{This table lists all drugs that are considered in this case study together with their abbreviations, their relative frequencies of consumption and their respective numbers of replies.} 
	\label{tab:marginals}
\end{table}


Figure \ref{fig:drugs_correlations} shows contour plots of estimated correlation matrices 
between all pairs of those events for Cole's $\myC$, Yule's $\myQ$ and the phi coefficient. All three measures indicate a positive relationship between the consumption of all the drugs considered. However, most values of the phi coefficient are very close to 0, which would usually be interpreted as evidence for very weak dependence, while the proper dependence measures tend to have very high values, indicating strong dependence. Thus, the dependence is structurally underrated by the phi coefficient due to its severe attainability issues. If the phi coefficient or one of the equally popular contingency coefficients was used in such a context, it could thus lead to possibly severe misperceptions with respect to the nature of polydrug use.

\begin{figure}[tb]
	\centering
	\includegraphics[width=\textwidth]{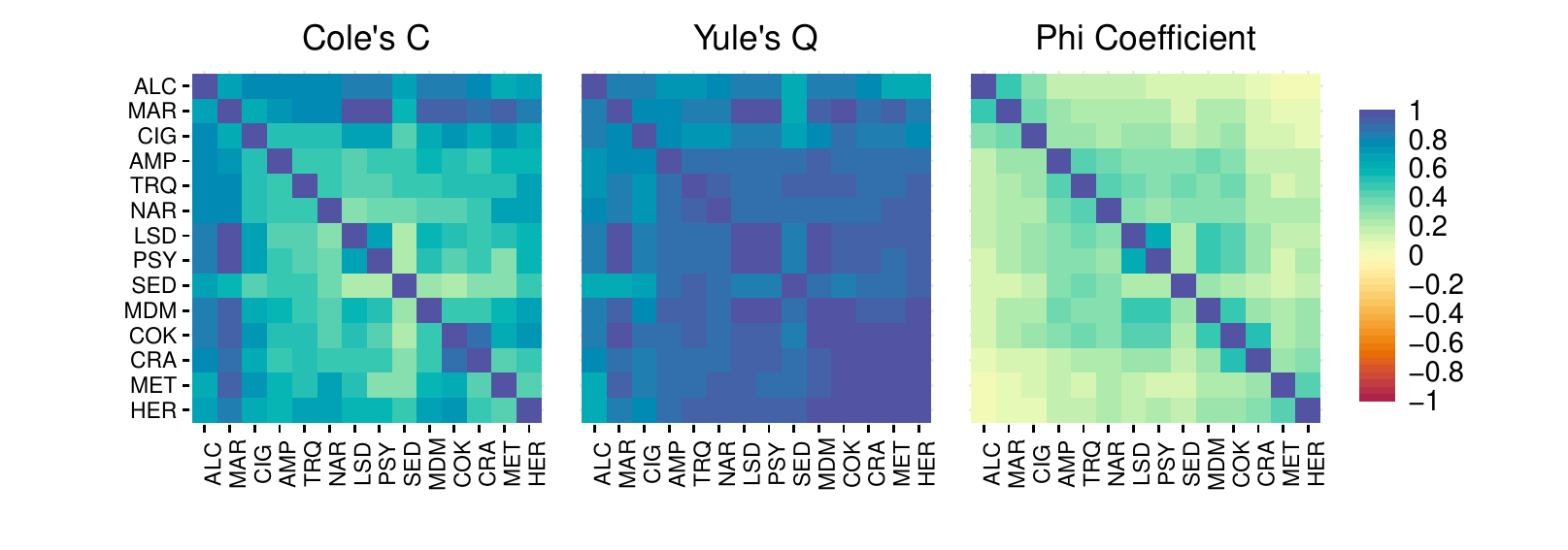}
	\caption{This figure depicts the correlations between the events of having consumed certain drugs. The pair of drugs to which each colored square refers is indicated by the description in the rows and the columns (for the list of abbreviations, see Table \ref{tab:marginals}).}
	\label{fig:drugs_correlations}
\end{figure}

What is more, the phi coefficient might give rise to very different policy advice with respect to prevention of hard drug usage among adolescents. While it especially accentuates the dependence between drugs of similar ``hardness'' such as alcohol and marijuana or meth and heroin, the proper correlations also yield sizeable results for dependence between drugs of very different ``hardness'' such as alcohol and meth. The reason for this behaviour are the very different marginal distributions of those drug use variables, which lead to a much narrower interval of values which $\phi$ can attain (see again Figures \ref{fig:phibounds} and \ref{fig:colecomparison07}). However, thus concluding that the gateway drug effect (alternatively, escalation hypothesis, progression hypothesis, or stepping-stone theory), describing the process of starting with soft drugs like alcohol and subsequently progressing to harder drugs, is negligible, would result in misleading policy advice. Instead, this effect has been extensively researched since \citet{kandel1975stages} and is nowadays widely accepted in the literature \citep{kandel2015gateway}.\footnote{The question to which degree the observed sequential pattern of drug use is causal is still up for debate. Though, the existence of a sizeable association is consensus, but not mirrored by the phi coefficient.}

While the two proper correlation matrices all indicate a very strong dependence overall, there are nevertheless clearly visible differences between them. Those differences can be very well understood by looking again at Figure \ref{fig:colecomparison07}. For example, Yule's $\myQ$ takes considerably higher values than Cole's $\myC$ for all drugs except for the three softest. This is due to the marginal event probabilities being rather small and similar, corresponding to a situation as in the lower left corner of the corresponding panels of Figure \ref{fig:colecomparison07}. Thus, this data example also illustrates that even though the proper dependence measures are usually close to each other (see again Figure \ref{fig:colecomparison07}), there may be substantial differences between them.

To get an idea of sampling uncertainty, we first focus on the dependence between marijuana and all other drugs (the second row or column of the correlation matrices above) and plot $\myCn$, $\widehat{\phi}_n$ and $\myQn$ and 90\% confidence intervals based on the Fisher transformation in Figure \ref{fig:case_study_CIs_mari}. Even though the consumption events for the harder drugs are rather rare events (that is, the marginal and even more so the joint consumption probabilities can be quite small), the confidence intervals are quite narrow due to the large sample size. 
As expected, the intervals for $\myC$ are slightly wider than the intervals for $\myQ$ due to their more complicated and possibly conservative construction method.
Notice that the narrow confidence intervals for $\phi$ are not an advantage of this measure but are caused by the suboptimal normalization of $\phi$ that forces $\phi$ to be much smaller in absolute value due to its non-attainability. 

In general, the intervals tend to get wider the smaller the consumption probability for the drug gets as this increases the asymptotic variance. Another important factor is the effective sample size for a pair of drugs, which amounts to the joint number of respondents for those drugs as listed in Table \ref{tab:samplesizes_marijuana} in the appendix. It is lowest for methamphetamines, which explains the wider intervals in this case. In Figure \ref{fig:case_study_CIs_meth} in the appendix we also show the analogous graph for the combination of methamphetamines with all other drugs, where the sampling uncertainty is higher due to the consumption of this drug being a rare event and the smaller sample size when pairing this drug with others (see Table \ref{tab:samplesizes_meth}).

\begin{figure}[tb]
	\centering
	\includegraphics[width=1\linewidth]{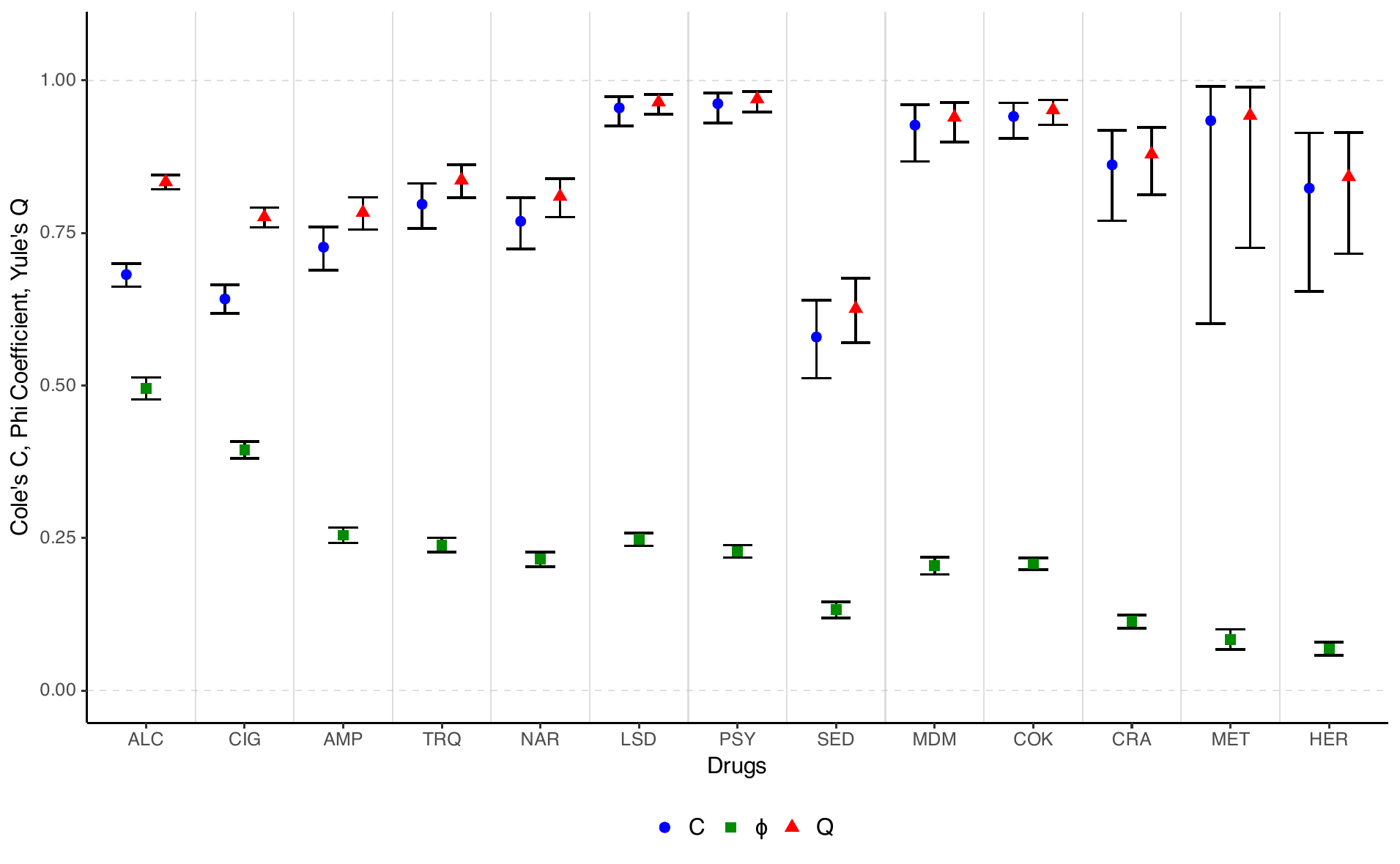}
	\caption{Point estimates $\myCn$, $\widehat{\phi}_n$ and $\myQn$ for marijuana compared with all other drugs and corresponding 90\% confidence intervals based on the Fisher transformation.}
	\label{fig:case_study_CIs_mari}
\end{figure}

\section{Conclusion} \label{sec:conclusion}

This paper deals with dependence measures for events or binary random variables. We introduce dependence concepts for events and the notion of a proper dependence measure. We then discuss theoretical properties of, statistical inference for, and relations between different measures. Important recommendations for statistical practice arise: The most widely-used measures, the phi coefficient and the closely related contingency coefficients, should be avoided for this purpose. They lack the crucial property of attainability 
and can hence severely understate strength of dependence, leading to possibly misleading conclusions. Instead, proper dependence measures should be used. The largely unknown Cole coefficient $\myC$ is a very natural and nicely interpretable representative of this class. Even though its statistical inference is challenging due to the case distinction in its definition, we construct tests and confidence intervals based on the asymptotic distribution that we derive. Yule's $\myQ$ is an attractive alternative, avoiding the case distinction in the normalization, and making possible the use of classical inferential procedures. We further discuss a generalized Yule's $\myQ$, the also closely related and popular odds ratio and tetrachoric correlation and establish their propriety.

Generalizing the notion of a proper dependence measure to arbitrary random variables is an important task for future research. While in the case of continuous random variables several sets of desirable properties have been put forward since \citet{Renyi1959} as discussed in the introduction, firstly the extension to the discrete case and secondly the agreement on a set of axioms need to be tackled. We see our work in the binary case as a foundation for this endeavor. A subsequent and equally important task is the search for or the development of measures fulfilling those properties. While in the continuous case rank correlations such as Kendall's $\tau$ and Spearman's $\rho$ fulfill many desirable properties, they are not attainable in the discrete case as well. 


\singlespacing
\bibliographystyle{apalike}
\bibliography{library_correlations}

\newpage

\appendix
\appendixpage

\section{Details on Further Dependence Measures}
\label{sec:FurtherMeasures}

\subsection{Contingency Coefficients}
\label{subsec:contingency_coefficients}

Several contingency coefficients have been defined based on Pearson's Mean Square Contingency coefficient $\MSC(X,Y)$ from \eqref{eq:MSC} (see \cite{liebetrau1983measures} for an overview), including Cram\'ers $\myV$, Tschuprow's $\myT$, and Pearson's contingency coefficient $\myPC$
\begin{align*}
	\myV(X,Y) &= \sqrt{ \frac{\MSC(X,Y)}{\min(k-1,l-1)}}, \qquad 
	\myT(X,Y) = \sqrt{ \frac{\MSC(X,Y)}{\sqrt{(k-1)(l-1)}}}, \qquad \text{ and } \\
	\myPC(X,Y) &= \sqrt{ \frac{\MSC(X,Y)}{ 1 + \MSC(X,Y) }}.
\end{align*}
In the binary case, $\MSC(A,B):= \MSC( \mathds{1}_A, \mathds{1}_B )$ becomes
\begin{align} \label{eq:mean_square_contingency}
	\begin{split}
		\MSC(A,B) = \; &\frac{ \left( \P(A \cap B) - \P(A) \P(B) \right)^2}{\P(A) \P(B)} + \frac{ \left( \P(A \cap \overline{B}) - \P(A) \P(\overline{B}) \right)^2}{\P(A) \P(\overline{B})}\\ 
		&\quad+ \frac{ \left( \P(\overline{A} \cap B) - \P(\overline{A}) \P(B) \right)^2}{\P(\overline{A}) \P(B)}+ \frac{ \left( \P(\overline{A} \cap \overline{B}) - \P(\overline{A}) \P(\overline{B}) \right)^2}{\P(\overline{A}) \P(\overline{B})}.
	\end{split}
\end{align}
Further, in the binary case the mean square contingency coefficient just equals the squared phi coefficient. The result seems to be known in the literature, but a proof is hard to find, so we provide one.
\begin{lemma} 
	\label{lemma:MSC_phi}
	It holds that $\MSC(A,B)=\phi^2(A,B)$.
\end{lemma}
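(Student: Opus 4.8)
The plan is to reduce both sides to functions of the three free probabilities $\P(A)$, $\P(B)$ and $\P(A\cap B)$ and then verify the identity by elementary algebra. First I would fix the shorthand $\sigma := \Cov(A,B) = \P(A\cap B) - \P(A)\P(B)$ and express the remaining three cell probabilities through the chosen three, namely $\P(A\cap\overline{B}) = \P(A) - \P(A\cap B)$, $\P(\overline{A}\cap B) = \P(B) - \P(A\cap B)$ and $\P(\overline{A}\cap\overline{B}) = 1 - \P(A) - \P(B) + \P(A\cap B)$, together with $\P(\overline{A}) = 1 - \P(A)$ and $\P(\overline{B}) = 1 - \P(B)$.

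The key observation, and really the only idea in the proof, is that each of the four numerators in \eqref{eq:mean_square_contingency} equals $\pm\sigma$. For instance one computes $\P(A\cap\overline{B}) - \P(A)\P(\overline{B}) = \big(\P(A)-\P(A\cap B)\big) - \P(A)\big(1-\P(B)\big) = \P(A)\P(B) - \P(A\cap B) = -\sigma$, and analogously the $(\overline{A},B)$ deviation equals $-\sigma$ while both the $(A,B)$ and $(\overline{A},\overline{B})$ deviations equal $+\sigma$. After squaring, all four numerators collapse to $\sigma^2$, so that
\[
\MSC(A,B) = \sigma^2 \left( \frac{1}{\P(A)\P(B)} + \frac{1}{\P(A)\P(\overline{B})} + \frac{1}{\P(\overline{A})\P(B)} + \frac{1}{\P(\overline{A})\P(\overline{B})} \right).
\]
It then remains to simplify the bracketed sum. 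I would factor it as $\big(\tfrac{1}{\P(A)} + \tfrac{1}{\P(\overline{A})}\big)\big(\tfrac{1}{\P(B)} + \tfrac{1}{\P(\overline{B})}\big)$ and apply the identity $\tfrac{1}{x} + \tfrac{1}{1-x} = \tfrac{1}{x(1-x)}$ twice, obtaining $\tfrac{1}{\P(A)\P(\overline{A})\P(B)\P(\overline{B})}$. Substituting back gives $\MSC(A,B) = \sigma^2 / \big(\P(A)(1-\P(A))\P(B)(1-\P(B))\big)$, which is exactly $\phi^2(A,B)$ by Definition \ref{def:phi_coefficient}.

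There is essentially no obstacle: the argument is entirely routine algebra, and the only structurally interesting point to highlight is the clean fact that in a $2\times 2$ table every cell's deviation from independence has the same magnitude $|\sigma|$. The sole place requiring a line of care is tracking the signs of the four deviations, but these become irrelevant once the numerators are squared.
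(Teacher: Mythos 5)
Your proof is correct and rests on the same underlying fact as the paper's: each of the four cell deviations from independence equals $\pm\Cov(A,B)$, so all four numerators in \eqref{eq:mean_square_contingency} collapse to $\sigma^2$. The paper merely packages this via the symmetry relations $\phi(A,B)=\phi(\overline{A},\overline{B})=-\phi(\overline{A},B)=-\phi(A,\overline{B})$ and sums the marginal products to $1$, whereas you factor out $\sigma^2$ and collapse the sum of reciprocal denominators; the two are the same computation in different order.
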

We write $\myV(A,B):=\myV(\mathds{1}_A, \mathds{1}_B )$ and analogously $\myT(A,B)$ and $\myPC(A,B)$. Here, Cram\'ers $\myV$ and Tschuprow's $\myT$ coincide, $\myV(A,B) = \myT(A,B)=\sqrt{\MSC(A,B)}$. 
Thus, the contingency coefficients are just functions of the phi coefficient.

\begin{corollary}
	It holds that
	$$\myV(A,B) = | \phi(A,B) |$$
	and  
	$$ \myPC(A,B) = \sqrt{ \frac{ \phi(A,B)^2 }{ 1 + \phi(A,B)^2 } }.$$
\end{corollary}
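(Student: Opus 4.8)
The plan is to obtain both identities as immediate consequences of Lemma \ref{lemma:MSC_phi}, which identifies the mean square contingency coefficient with the squared phi coefficient in the binary case, $\MSC(A,B) = \phi^2(A,B)$. The only additional ingredient is the specialization of the general definitions of $\myV$, $\myT$ and $\myPC$ to a $2 \times 2$ table, i.e.\ to $k = l = 2$. Since every step is a substitution, the entire substantive content of the corollary is carried by the preceding lemma.

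First I would substitute $k = l = 2$ into the normalizing constants of the contingency coefficients. For Cram\'er's $\myV$ this gives $\min(k-1, l-1) = \min(1,1) = 1$, and for Tschuprow's $\myT$ it gives $\sqrt{(k-1)(l-1)} = 1$, so that $\myV(A,B) = \myT(A,B) = \sqrt{\MSC(A,B)}$, as already recorded in the text. Invoking Lemma \ref{lemma:MSC_phi} then yields $\sqrt{\MSC(A,B)} = \sqrt{\phi^2(A,B)} = |\phi(A,B)|$, which is the first claimed identity.

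For Pearson's contingency coefficient $\myPC$ the normalization depends on $\MSC$ itself rather than on the table dimensions, so no specialization of $k$ and $l$ is needed. Substituting $\MSC(A,B) = \phi^2(A,B)$ directly into $\myPC(A,B) = \sqrt{\MSC(A,B)/(1 + \MSC(A,B))}$ produces the second claimed expression $\sqrt{\phi^2(A,B)/(1+\phi^2(A,B))}$.

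There is no genuine obstacle here, as each equality is a plain substitution; the work lies entirely in Lemma \ref{lemma:MSC_phi}. The one point requiring a moment's care is the sign: because $\MSC(A,B) = \phi^2(A,B) \ge 0$, the nonnegative square root returns $|\phi(A,B)|$ rather than $\phi(A,B)$ itself. This is exactly the manifestation of the fact that the contingency-coefficient construction is designed to live on $[0,1]$ and therefore discards the directional information of $\phi$, thereby inheriting the attainability deficiencies of the phi coefficient discussed in Section \ref{sec:linear_measures}.
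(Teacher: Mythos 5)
Your proposal is correct and follows exactly the route the paper intends: specialize the normalizations to $k=l=2$ so that $\myV(A,B)=\myT(A,B)=\sqrt{\MSC(A,B)}$, then substitute $\MSC(A,B)=\phi^2(A,B)$ from Lemma \ref{lemma:MSC_phi}, with the nonnegative square root accounting for the absolute value. Nothing is missing.
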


Thus, in the binary case, Cram\'ers $\myV$ and Tschuprow's $\myT$ just report the same information on strength of dependence as the phi coefficient, but remove the sign, that is, the information on direction of dependence. They of course inherit the attainability problems of $\phi(A,B)$ and  are thus no proper dependence measures (not even in terms of a modified set of axioms adopted to measures of strength of dependence mapping to $[0,1]$). 
Pearson's contingency coefficient $\myPC(A,B)$ exhibits even more shortcomings, for example its maximum value is $\frac{1}{\sqrt{2}}$ instead of 1.

\subsection{Odds Ratio} 
\label{subsec:odds_ratio}

The origin of the odds ratio seems to be unknown, but \citet{cornfield_method_1951} has played an important role in popularizing the measure in epidemiology \citep{greenhouse1982cornfield}. 
It is very popular in medicine. 
Indeed, it surmounts all the classical dependence measures discussed here in terms of Google $(14{,}200{,}000)$ and Google Scholar $(1{,}590{,}000)$ citations (7th November 2025, compare Table \ref{tab:google_hits}), presumably due to the sheer size of the medical literature.
The quotient between a probability (here: $\P(A|B)$ or $\P(B|A)$) and its counterprobability is called the odds. Dividing the odds for $A$ conditional on $B$ and the odds for $A$ conditional on $\overline{B}$  by each other yields the odds ratio,
\begin{equation} \label{eq:odds_ratio}
	\OR(A,B):=\frac{\frac{\P(A|B)}{1-\P(A|B)}}{\frac{\P(A|\overline{B})}{1-\P(A|\overline{B})}}=\frac{\frac{\P(B|A)}{1-\P(B|A)}}{\frac{\P(B|\overline{A})}{1-\P(B|\overline{A})}}=\frac{\P(A\cap B)\P(\overline{A}\cap \overline{B})}{\P(\overline{A}\cap B)\P(A\cap \overline{B})}. 
\end{equation}
Thus, it measures dependence between $A$ and $B$ by considering how the odds of $A$ increase given $B$. At first, it seems that the odds ratio does not fit into the framework of this paper, where we are interested in mutual dependence between events, that is, $A$ and $B$ should play an interchangeable role, which is guaranteed by the symmetry of a measure. Instead, the odds ratio considers $A$ conditional on $B$. However, the odds ratio is indeed symmetric as it can be rewritten as the second and third equalities in \eqref{eq:odds_ratio}. In our Definition \ref{def:odds_ratio}, we also take into account that the terms from \eqref{eq:odds_ratio} would be undefined in case of perfect positive dependence.

The odds ratio can be regarded as proper as it fulfills a modified set of axioms, that is, suitable adaptations of the properties from \ref{def:proper_measure} to its scale and multiplicative nature. 
\begin{proposition} \label{prop:properties_OR}
	\begin{enumerate}[(A)] 
		\item \emph{Normalization}: $0 \le \OR(A,B) \le \infty$.
		\item \emph{Independence}: $\OR(A,B)=1$ if and only if $A$ and $B$ are independent.
		\item \emph{Attainability}: $\OR(A,B)=\infty \ (0)$ if and only if $A$ and $B$ are perfectly positively (negatively) dependent.
		\item \emph{Monotonicity}: Let $\P(A)=\P(A^*)$ and $\P(B)=\P(B^*)$. $\OR(A,B) \ge (\le) \ \OR(A^*,B^*)$ if and only if $A$ and $B$ are stronger positively (negatively) dependent than $A^*$ and $B^*$.
		\item \emph{Symmetry}: $\OR(A,B)=\OR(
		B,A)$ and $\OR(A,\overline{B})=\frac{1}{\OR(A,B)}$.
	\end{enumerate}
\end{proposition}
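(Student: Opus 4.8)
The plan is to parametrize the joint distribution by the four cell probabilities $a := \P(A\cap B)$, $b := \P(A \cap \overline{B})$, $c := \P(\overline{A}\cap B)$ and $d := \P(\overline{A}\cap \overline{B})$, which sum to $1$ and are all strictly positive on the interior of $D$, so that $\OR(A,B) = ad/(bc)$. In this notation properties (A) and (E) are essentially immediate. Nonnegativity of the cells gives $0 \le \OR(A,B) \le \infty$. Swapping $A$ and $B$ maps $(a,b,c,d) \mapsto (a,c,b,d)$, merely exchanging $b$ and $c$ in the denominator and leaving $ad/(bc)$ unchanged, so $\OR(A,B) = \OR(B,A)$; and replacing $B$ by $\overline{B}$ maps $(a,b,c,d) \mapsto (b,a,d,c)$, so $\OR(A,\overline{B}) = bc/(ad) = 1/\OR(A,B)$, with the usual conventions $1/0 = \infty$ and $1/\infty = 0$ covering the degenerate cases.

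For (B) and (C) I would reduce everything to the covariance representation \eqref{eq:covariance_representation}, namely $\Cov(A,B) = ad - bc$, together with Proposition \ref{prop:characterizations}(iii). Whenever $A$ and $B$ are independent, all four cells are strictly positive (using $0<\P(A),\P(B)<1$), so $bc>0$ and $\OR(A,B)$ is finite; then $\OR(A,B) = 1$ is equivalent to $ad = bc$, i.e.\ to $\Cov(A,B) = 0$, i.e.\ to independence, which gives (B) once one notes that $\OR(A,B) = 1$ forces $bc \neq 0$ (otherwise $\OR = \infty$). For (C), $\OR(A,B) = \infty$ holds exactly when $bc = 0$, that is $\P(A \cap \overline{B}) = 0$ or $\P(\overline{A}\cap B)=0$, which by Proposition \ref{prop:characterizations}(iii) is perfect positive dependence; symmetrically $\OR(A,B) = 0$ holds exactly when $ad = 0$, i.e.\ $\P(A\cap B) = 0$ or $\P(\overline{A}\cap\overline{B}) = 0$, which is perfect negative dependence.

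The substantive step is monotonicity (D), which I would prove directly to avoid circularity, since the propriety of $\myQ$ in turn invokes this part through Lemma \ref{lemma:Q_and_OR}. With the marginals $p := \P(A)$ and $q := \P(B)$ held fixed, the whole table is a function of $a$ alone through $b = p - a$, $c = q - a$ and $d = 1 - p - q + a$, and stronger positive (negative) dependence means larger (smaller) $a$ by Definition \ref{def:stronger_dependence}. On the interior of the feasible range $\big(\max(0, p+q-1),\, \min(p,q)\big)$ all four cells are strictly positive, and differentiating $\log \OR = \log a + \log d - \log b - \log c$ yields $\tfrac{d}{da}\log\OR = \tfrac{1}{a} + \tfrac{1}{d} + \tfrac{1}{b} + \tfrac{1}{c} > 0$, so $\OR$ is strictly increasing in $a$. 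Since $\OR$ tends to $0$ and $\infty$ at the two endpoints (the perfect-dependence cases already treated in (C)), the map $a \mapsto \OR(A,B)$ is strictly increasing on the whole range, and the claimed equivalence with stronger dependence follows.

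The main obstacle is bookkeeping rather than depth: every claim must be checked consistently on the boundary of $D$, where $\OR$ takes the extended values $0$ and $\infty$. Here the standing assumption $0<\P(A),\P(B)<1$ is essential, as it rules out the genuinely degenerate $0/0$ configurations and guarantees that at most one of numerator and denominator can vanish. For instance, if $b = 0$ then $a = \P(A) > 0$ and $d = \P(\overline{B}) > 0$, so the numerator is strictly positive and $\OR(A,B) = \infty$ is unambiguous. Handling these boundary cases carefully is exactly what makes the ``if and only if'' statements in (B)--(C) and the endpoint behaviour underlying (D) rigorous.
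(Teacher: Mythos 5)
Your proposal is correct and follows essentially the same route as the paper: (A) and (E) read off from the definition, (C) via Proposition \ref{prop:characterizations}(iii), and (D) by exploiting that with fixed marginals the whole table is a monotone function of $\P(A\cap B)$. The only cosmetic difference is that you establish the monotonicity in (D) by differentiating $\log\OR$ in $a$, whereas the paper simply observes that the numerator $ad$ increases and the denominator $bc$ decreases as $a$ grows; your treatment of the boundary/degenerate cases is if anything more careful than the paper's.
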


\subsection{Invariance to the Marginals of $\myQ$, $\myQ_g$ and the Odds Ratio} 
\label{subsec:invariance_marginals}

$\myQ$, $\OR$ and $\myQ_g$ have an interesting property, which is discussed already in \cite{yule1912} and later in \citet{edwards1963}, \citet{plackett1965} and \citet{mosteller1968} for empirical contingency tables containing absolute frequencies and plays a central role in an approach to copula modeling and measuring dependence suggested in \citet{geenens2020} and \citet{geenens2023}: When the columns and rows of a $2 \times 2$ contingency table (see Table \ref{tab:contingency_table} again) are multiplied by constants $c_1, c_2$ and $r_1,r_2$ and then the numbers are normalized by an appropriate normalising constant $\nu$ so that the joint probabilities sum up to 1 again, those measures do not change. Considering the odds ratio, it is easy to see that those constants cancel out:
$$\frac{c_1 r_1 \nu \P(A\cap B) c_2 r_2 \nu \P(\overline{A}\cap \overline{B})}{c_1 r_2 \nu \P(\overline{A}\cap B) c_2 r_1 \nu \P(A\cap \overline{B})} = \OR(A,B).$$
By Lemma \ref{lemma:Q_and_OR}, this directly carries over to Yule's $\myQ$. This property is seen as a desirable property by the papers cited above and often described as invariance of those measures to the marginal event probabilities because the measures do not change when the marginal event probabilities change and the respective joint and conditional probabilities change accordingly. On the one hand, this is indeed a very nice property as it allows to really disentangle the marginal distributions of the events and their dependence. On the other hand, this property is incompatible with the property that after distinguishing positive and negative dependence the dependence measure is proportional to $\Cov(A,B)$ and a linear function of $\P(A \cap B)$. This is a property that only Cole's $\myC$ possesses, whereas the normalization of $\myQ$ changes with the joint probabilities. We do not take a strong stance regarding those properties. Both measures fulfill the more fundamental properties leading to propriety and both are very natural dependence measures.

\subsection{Tetrachoric Correlation} \label{subsec:tetrachoric}

Tetrachoric correlation is constructed via the assumption that the $2 \times 2$ contingency table arises by dichotomization of an underlying bivariate normal distribution. The joint probability $\P(A \cap B)$ has to match the area under the bivariate density below quantiles of the two marginals determined by the event probabilities $\P(A)$ and $\P(B)$. As the means and variances play no role here, the correlation coefficient of this underlying bivariate normal is uniquely determined by the joint probability. We adopt the definition from \cite{Ekstrom2011}.

\begin{definition}[Tetrachoric Correlation]
	For interior cases $(\max(\P(A)+\P(B)-1,0)<\P(A\cap B)<\min(\P(A), \P(B)))$, tetrachoric correlation $\myTC(A,B)$ is implicitly defined via the equation
	$$\P(A\cap B)=\int_{-\infty}^{\Phi^{-1}(P(A))} \int_{-\infty}^{\Phi^{-1} (P(B))} f_{\myTC(A,B)} (x, y) dxdy,$$
	where $f_\rho$ denotes the density of a bivariate normal distribution with zero means, unit variances and correlation coefficient $\rho$.
	Otherwise, we have $\myTC(A,B)=-1$ for $\P(A\cap B)=\max(\P(A)+\P(B)-1,0)$ and $\myTC(A,B)=1$ for $\P(A, B)=\min(\P(A), \P(B))$.
\end{definition}

Even though tetrachoric correlation does not seem to be a very natural measure (except for cases, where the binary variables $\mathds{1}_A$ and $\mathds{1}_B$ are really or at least approximately generated by an underlying normal distribution), it is proper. This may at first be surprising, but is essentially due to the fact that under normality the Pearson correlation coefficient has desirable properties, in particular the attainability and independence property.

\begin{proposition} \label{prop:tetrachoric_propriety}
	$\myTC(A,B)$ is proper.
\end{proposition}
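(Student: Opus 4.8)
The plan is to recast tetrachoric correlation as an implicit function of the joint probability through the bivariate normal CDF, and then to read off all five properties from a single monotonicity fact. Write $a := \Phi^{-1}(\P(A))$, $b := \Phi^{-1}(\P(B))$, and let $\Phi_2(a,b;\rho) := \int_{-\infty}^{a}\int_{-\infty}^{b} f_\rho(x,y)\,\mathrm{d}x\,\mathrm{d}y$ denote the standard bivariate normal CDF with correlation $\rho$, where $f_\rho$ is as in the definition. The defining equation for interior cases then reads $\P(A\cap B) = \Phi_2(a,b;\myTC(A,B))$. The crucial analytic ingredient is Plackett's identity $\partial_\rho \Phi_2(a,b;\rho) = f_\rho(a,b)$, whose right-hand side is strictly positive for finite $a,b$ and $\rho \in (-1,1)$. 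Hence, for fixed marginals, $\rho \mapsto \Phi_2(a,b;\rho)$ is a continuous, strictly increasing bijection from $(-1,1)$ onto its image. Computing the boundary limits via the comonotone and countermonotone degenerations of the bivariate normal (where the mass collapses onto $y=x$, resp.\ $y=-x$) gives $\Phi_2(a,b;1) = \Phi(\min(a,b)) = \min(\P(A),\P(B))$ and $\Phi_2(a,b;-1) = \max(0,\P(A)+\P(B)-1)$, which are exactly the Fr\'echet--Hoeffding bounds for probabilities of Proposition \ref{prop:characterizations}(i). Thus for every admissible triple the implicit equation has a unique solution $\myTC(A,B)\in(-1,1)$ in the interior, and the definition's boundary values match the endpoints; this establishes normalization (A).

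Properties (B) and (C) follow by evaluating the bijection at the distinguished correlation values. Independence is equivalent to $\P(A\cap B)=\P(A)\P(B)$, and since $\Phi_2(a,b;0)=\Phi(a)\Phi(b)=\P(A)\P(B)$, strict monotonicity yields $\myTC(A,B)=0$ if and only if $A,B$ are independent, which is (B). For (C), perfect positive (negative) dependence is equivalent to $\P(A\cap B)=\min(\P(A),\P(B))$ (resp.\ $=\max(0,\P(A)+\P(B)-1)$) by Proposition \ref{prop:characterizations}(i); by the boundary-limit computation these are attained exactly at $\rho=1$ (resp.\ $\rho=-1$), matching the definition's treatment of the boundary cases.

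For monotonicity (D), fix $\P(A)=\P(A^*)$ and $\P(B)=\P(B^*)$, so that $a$ and $b$ are common to both pairs. By Definition \ref{def:stronger_dependence}, $A,B$ are stronger positively dependent than $A^*,B^*$ if and only if $\P(A\cap B)\ge\P(A^*\cap B^*)$, and since $\myTC$ is a strictly increasing function of the joint probability for fixed marginals, this is equivalent to $\myTC(A,B)\ge\myTC(A^*,B^*)$; the negative case is analogous. For symmetry (E), the identity $\Phi_2(a,b;\rho)=\Phi_2(b,a;\rho)$ together with the invariance of $\P(A\cap B)$ under swapping $A$ and $B$ shows that the same $\rho$ solves the defining equation, giving $\myTC(A,B)=\myTC(B,A)$. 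For the second part, passing to $\overline{B}$ replaces $b$ by $\Phi^{-1}(1-\P(B))=-b$ and the joint probability by $\P(A)-\P(A\cap B)$; the reflection property $\Phi_2(a,-b;-\rho)=\Phi(a)-\Phi_2(a,b;\rho)$ (obtained by flipping the sign of the second coordinate) shows that $-\rho$ solves the defining equation for $(A,\overline{B})$, i.e.\ $\myTC(A,\overline{B})=-\myTC(A,B)$.

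The main obstacle is the analytic input rather than the bookkeeping: establishing strict monotonicity of $\Phi_2(a,b;\rho)$ in $\rho$ via Plackett's identity and rigorously justifying the degenerate limits as $\rho\to\pm1$, where the bivariate normal collapses onto a line. Once these two facts are secured, all five properties follow uniformly from the observation that, for fixed marginals, $\myTC(A,B)$ is a strictly increasing bijection of the joint probability onto $[-1,1]$ that carries independence to $0$ and the Fr\'echet--Hoeffding bounds to $\pm1$.
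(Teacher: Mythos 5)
Your proof is correct, but it follows a genuinely different route from the paper's. The paper's proof is essentially a citation argument: it invokes \citet[Proposition 6]{Ekstrom2011} for the independence property (B), and \citet[Theorem 5]{Ekstrom2011} for the fact that $\myTC(A,B)$ and $\phi(A,B)$ are strictly monotonic functions of each other, from which monotonicity (D) is inherited from the phi coefficient via Proposition \ref{prop:phi_properties}; properties (A), (C) and (E) are asserted to follow from the definition and the symmetry of the bivariate normal. You instead build everything from a single analytic fact about the bivariate normal CDF: Plackett's identity $\partial_\rho \Phi_2(a,b;\rho) = f_\rho(a,b) > 0$ gives strict monotonicity of $\rho \mapsto \Phi_2(a,b;\rho)$, and the degenerate limits at $\rho = \pm 1$ recover exactly the Fr\'echet--Hoeffding bounds of Proposition \ref{prop:characterizations}(i). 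This buys you two things the paper does not make explicit: a proof that the implicit definition is well-posed (existence and uniqueness of the solution in $(-1,1)$ for every interior triple), and a uniform derivation of all five properties from one monotone bijection rather than an appeal to an external monotone link with $\phi$. The cost is that you must justify Plackett's identity and the weak-convergence argument for the limits $\rho \to \pm 1$, which you correctly flag; both are classical and citable, so this is not a gap. Your treatment of (E) via the reflection $\Phi_2(a,-b;-\rho) = \Phi(a) - \Phi_2(a,b;\rho)$ is a clean substitute for the paper's one-line appeal to ``symmetry properties of the bivariate normal distribution.''
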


\subsection{Further Measures}\label{subsec:furthermeasures}
In this subsection, we collect further measures that have originally not been designed for the binary case. Still, they can be used and their popularity in the contemporaneous statistical literature justifies a separate treatment.

The first of such measures is distance correlation \citep{szekely2007measuring}. Its numerator, the distance covariance, is defined as \begin{align*}
    \mathcal{V}(X,Y):=\frac{1}{c_pc_q}\int_{\mathbb{R}^{p+q}}\frac{|f_{X,Y}(t,s)-f_X(t)f_Y(s)|^2}{|t|_p^{1+p}|s|_q^{1+q}}dtds,
\end{align*}
where $f_{X,Y}$ and $f_X,f_Y$ are the joint and marginal characteristic functions, respectively. Additionally, $|\cdot|_p$ denotes the euclidean norm in $\mathbb{R}^p$ and we define $c_p:=\pi^{(1+p)/2}/\Gamma((1+p)/2)$ with $\Gamma(\cdot)$ denoting the complete gamma function.

Thus, resembling Pearson correlation, distance correlation is the nonnegative number $\mathcal{R}(X,Y)$ defined by 
\begin{align*}
    \mathcal{R}^2(X,Y):=\begin{cases}
        \frac{\mathcal{V}^2(X,Y)}{\sqrt{\mathcal{V}^2(X,X)\mathcal{V}^2(Y,Y)}}, &\mathcal{V}^2(X,X)\mathcal{V}^2(Y,Y)>0\\
        0,&\mathcal{V}^2(X,X)\mathcal{V}^2(Y,Y)=0
    \end{cases}
\end{align*}
Simple calculations show that in the binary case, distance correlation equals the absolute value of the phi coefficient, $\mathcal{R}(\mathds{1}_A,\mathds{1}_B)=|\phi(\mathds{1}_A,\mathds{1}_B)|$. This equality already holds at the empirical level (see Definition 4 in \cite{szekely2007measuring} for an estimator). Thus, as Cramér's V and Tschuprow's T, distance correlation does not indicate the direction of dependence, but inherits all the other shortcomings of the phi coefficient. As a consequence, it is an improper measure.

A second measure that is very popular and directed at functional relationships has been popularized by \cite{chatterjee2021cor}. It is defined as \begin{align*}
    \xi(X,Y):=\frac{\int \Var(\E[\mathds{1}_{\{Y\ge t\}}|X])d\mu(t)}{\int \Var(\mathds{1}_{\{Y\ge t\}})d\mu(t)},
\end{align*}
where $\mu$ is the law of $Y$, and can be symmetrized by considering $\max\{\xi(X,Y),\xi(Y,X)\}$. However, its estimator relies on randomized breaking of ties and is therefore stochastic even for a given sample. Since the binary case is the case in which the most ties occur, this can lead to a lot of variability in the estimator. In addition to that, its population version collapses to the square of the phi coefficient, thus sharing its non-attainability: $\xi(X,Y)=\phi^2(X,Y)$.

Lastly, the mutual information \citep{shannon1948mathematical} can also be used as a dependence measure if a suitable normalization is applied. The most popular one has been named the uncertainty coefficient and reads as
	\begin{align*}
		U(X,Y):=2\left(\frac{H(X)+H(Y)-H(X,Y)}{H(X)+H(Y)}\right),
	\end{align*}
	where \begin{align*}
		H(X):=-\sum_{i=1}^{k}p_{i\cdot}\ln(p_{i\cdot}), \ H(Y):=-\sum_{j=1}^{l}p_{\cdot j}\ln(p_{\cdot j}), \ \  H(X,Y):=-\sum_{i=1}^{k}\sum_{j=1}^{l}p_{ij}\ln(p_{ij})
	\end{align*}
and $p_{i\cdot}, p_{\cdot j}, p_{ij}$ denote the marginal and joint probabilities of the respective values that the (discrete) random variables $X$ and $Y$ take (see the setup around equation (\ref{eq:MSC})). As shown in \cite{wermuth2025nominal}, it attains 1 if and only if the contingency table only has one non-zero entry in each row and each column and thus suffers from the same non-attainability problem as $\phi$.

\section{Inference Using the Fisher Transformation}  
\label{subsec:Fisher_transformations}

As already mentioned in Section \ref{sec:FisherTransformations}, the sampling distributions of $\myQn$, $\myCn$ and $\widehat{\phi}_n$ are heavily influenced by their theoretical Fr\'{e}chet--Hoeffding bounds when the true measures are close to these bounds.
This leads to skewed empirical distributions for which the asymptotic normal approximations (and similarly for $\myCn$) are bad.
This can strongly deteriorate the finite sample performance of associated tests and confidence intervals as can be seen from our simulation results in Appendix \ref{sec:Simulations}.

In the case of Pearson correlation, where the analogous problem occurs, the classical tool to combat this problem is Fisher's transformation \citep{fisher1915},
\begin{align}
	\label{eq:Fisher_transformationAppendix}
	Z(\delta) := \mathrm{arctanh}(\delta) = \frac 1 2 \log \left( \frac{1+\delta}{1-\delta} \right)
\end{align}
for a generic dependence measure $\delta \in \{\myQ, \myC, \phi, \myQn, \myCn, \widehat{\phi}_n \}$.
This transformation maps the bounded dependence measures to an unbounded scale while leaving values in the center (approximately the interval $[-0.5, 0.5]$) almost unchanged.
On this unbounded scale, asymptotic normality is usually a much better finite sample approximation, especially when the dependence measure is close to the boundary. 
We therefore apply the Fisher transformation to construct more reliable tests and confidence intervals for $\myQ$ (as brought up in \cite{bonett2007}), $\myC$ and $\phi$. 
Even though the limiting distribution is not Gaussian for $\myC$, the same problem applies here (unless $\myC=0$).

Starting with the limiting distributions of the Fisher transformation as given below in Propositions \ref{prop:Asymptotics-Z_Q}--\ref{prop:Asymptotics-Z_phi}, we obtain non-normal limiting distributions for $\myQn$, $\myCn$ and $\widehat{\phi}_n$ by applying the inverse transformation $Z^{-1}(z) = \mathrm{tanh}(z) = (e^{2z}-1)/(e^{2z}+1)$ of \eqref{eq:Fisher_transformationAppendix}.
These usually lead to much better approximations of the sampling distributions close to the boundaries as illustrated in Figure \ref{fig:qsimulation}.
For the true $\myQ = 0.95$ with $p=q=0.5$, we plot the standard (red) and Fisher transformed (blue) limiting distributions, their implied $90\%$ confidence intervals in dashed lines, and a histogram depicting $100{,}000$ estimates $\myQn$ for a sample size of $n=100$ from simulated data.
We see that the finite sample histogram is much better approximated by the blue Fisher transformed limiting distribution.

\begin{figure}[bt] 
	\centering
	\includegraphics[width=0.7\linewidth]{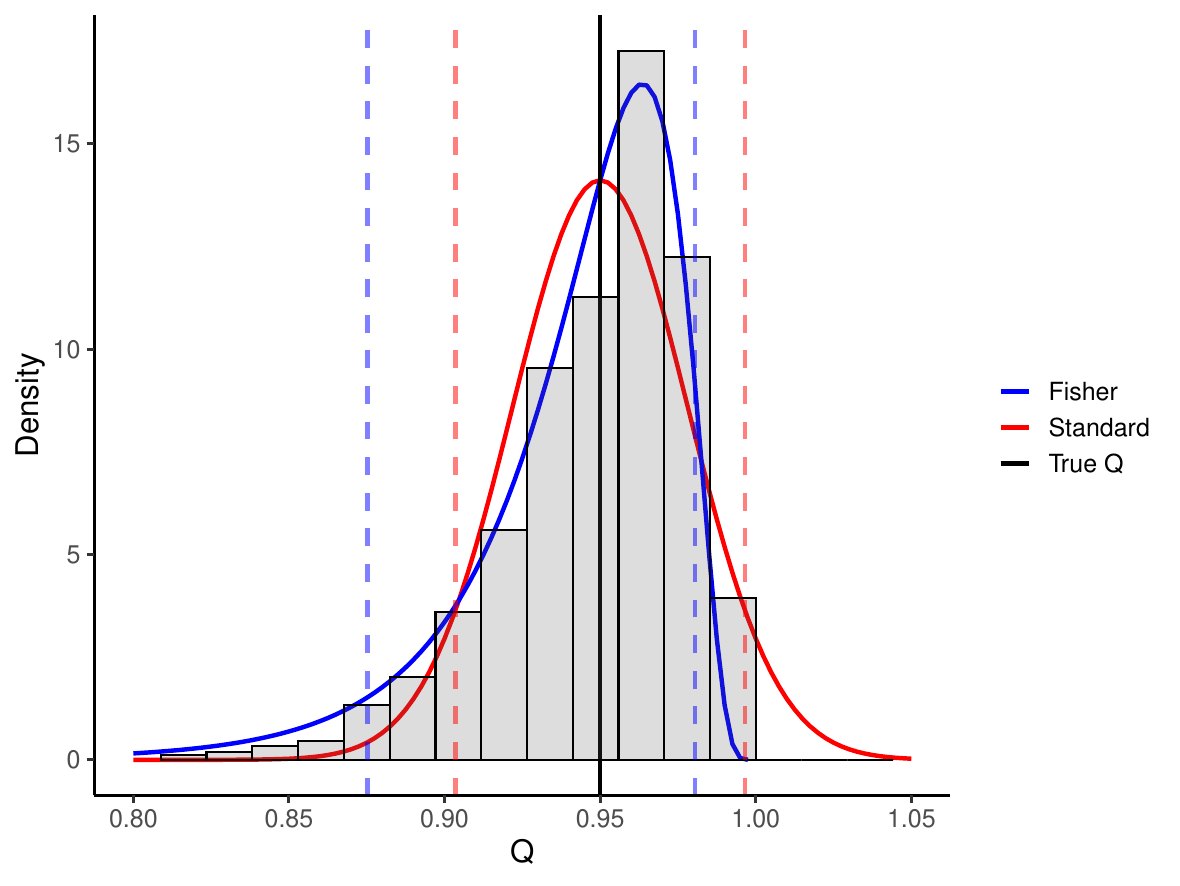}
	\caption{Histogram of $\myQn$ from $100{,}000$ simulated samples of size $n=100$ with a true value of $\myQ=0.95$, marginal probabilities $p=q=0.5$ and the standard asymptotically normal and the (backtransformed) Fisher transformed limiting distributions together with corresponding $90\%$ confidence intervals in dashed lines.}
	\label{fig:qsimulation}
\end{figure}

Consequently, the Fisher transformation leads to more reliable tests and confidence intervals close to the boundaries and almost identical results away from them. 
To execute a test for $\mathcal{H}_0: \delta=d$ one just needs to Fisher transform the hypothesized value and test $\mathcal{H}_0: Z(\delta)=Z(d)$ using the respective asymptotic distribution of the Fisher transformation (again using Bonferroni corrections as described above in the case of $\myC$). 

We construct confidence intervals as follows. For $\myQ$ and $\phi$, we obtain a confidence interval for the Fisher transform based on its asymptotic normality and then apply the inverse Fisher transformation to these bounds.
For $\myC$, we modify the algorithm described in Section \ref{subsec:CIs} by using (inverted) tests based on the asymptotic distribution of the Fisher transforms from Proposition \ref{prop:Asymptotics-Z_C} instead of Proposition \ref{prop:Asymptotics-C}. 
We illustrate the improved performance of the confidence intervals based on the Fisher transform compared to the ones based on the usual limiting distributions in simulations in Appendix \ref{sec:Simulations}.
The importance of the Fisher transformation is also illustrated when comparing the confidence intervals based on it to standard confidence intervals in the introductory example, where they differ substantially as the sampling distribution is close to the upper bound 1, see Table \ref{tab:smallpox_CIs} in the Appendix, and in the case study in Section \ref{sec:application}, where they differ in those cases and are very close to each other otherwise, see Figures \ref{fig:case_study_CIs_mari_comparison} and \ref{fig:case_study_CIs_meth_comparison} in the Appendix.

We continue by formally providing the limiting distributions of the Fisher transformations of $\myQ$, $\myC$ and $\phi$ introduced in Section \ref{sec:FisherTransformations}, which can be obtained from the asymptotic distributions from Subsection \ref{subsec:asymptotic_distributions} together with the delta method.
For Yule's $\myQ$, we instead derive the limit distribution directly as in this case the transformation is a simpler expression than Yule's $\myQ$ itself, namely $Z(\myQ)  = \frac 1 2 \log ( \OR )$ with the plug-in estimator 
$$Z(\myQn) = \frac{1}{2} \log \big( \widehat{\OR}_n \big)= \frac{1}{2}  \log \big( \r(1-\p-\q+\r) \big) - \frac{1}{2} \log \big( (\q-\r)(\p-\r) \big).$$ 
\begin{proposition}
	\label{prop:Asymptotics-Z_Q}
	Given Assumption \ref{ass:BoundedAway} and either Assumption \ref{ass:iid} or Assumption \ref{ass:Dependence}, it holds that
	\begin{align*}
		& \big(J_h \Omega J_h^\top \big)^{-1/2} \sqrt{n} \big( Z(\myQn) - Z(\myQ) \big) \stackrel{d}{\to} \mathcal{N} \big( 0, 1\big), \qquad \text{where} \\
		&J_h :=  -\frac{1}{2(p-r)(q-r)} 
		\begin{pmatrix}
			q - r \\
			p - r \\
			2r - p - q
		\end{pmatrix}^\top 
		-\frac{1}{2(1-p-q+r)} 
		\begin{pmatrix}
			1 \\
			1 \\
			-r^{-1} (1-p-q+2r)
		\end{pmatrix}^\top.
	\end{align*}
\end{proposition}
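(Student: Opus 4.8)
The plan is to exploit Lemma \ref{lemma:Q_and_OR}, by which the Fisher transform of Yule's $\myQ$ collapses to one half the log of the odds ratio, $Z(\myQ) = \mathrm{arctanh}(\myQ) = \tfrac{1}{2}\log(\OR) =: g(p,q,r)$ with
\begin{equation*}
	g(p,q,r) = \tfrac{1}{2}\log r + \tfrac{1}{2}\log(1-p-q+r) - \tfrac{1}{2}\log(q-r) - \tfrac{1}{2}\log(p-r).
\end{equation*}
Since this is a far simpler expression than $\myQ$ itself, I would not chain the delta method through Proposition \ref{prop:Asymptotics-Q}, but apply it directly to $g$ evaluated at the joint estimator $(\p,\q,\r)$. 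Under Assumption \ref{ass:BoundedAway} the four arguments $r$, $1-p-q+r$, $q-r$ and $p-r$ are strictly positive (they are exactly the cell probabilities of the contingency table), so $g$ is continuously differentiable on a neighborhood of $(p,q,r)$.

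First I would invoke the joint CLT of Lemma \ref{lemma:CLTjoint}, which under either Assumption \ref{ass:iid} or Assumption \ref{ass:Dependence} gives $\sqrt{n}\big((\p,\q,\r)^\top - (p,q,r)^\top\big) \tod \mathcal{N}(0,\Omega)$. The delta method then yields $\sqrt{n}\big(Z(\myQn) - Z(\myQ)\big) \tod \mathcal{N}\big(0,\, J_h \Omega J_h^\top\big)$, where the row vector $J_h$ is the gradient $\nabla g(p,q,r)$.

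Second I would reconcile this gradient with the stated $J_h$. Differentiating termwise gives $\partial_p g = -\tfrac{1}{2(p-r)} - \tfrac{1}{2(1-p-q+r)}$, $\partial_q g = -\tfrac{1}{2(q-r)} - \tfrac{1}{2(1-p-q+r)}$ and $\partial_r g = \tfrac{1}{2r} + \tfrac{1}{2(1-p-q+r)} + \tfrac{1}{2(q-r)} + \tfrac{1}{2(p-r)}$. The compact two-term form in the statement then follows from the identities $p+q-2r = (p-r)+(q-r)$ and $1-p-q+2r = (1-p-q+r)+r$, which split the fractions whose denominators are $(p-r)(q-r)$ and $r(1-p-q+r)$ into exactly the four reciprocals above; this algebraic matching is the only genuinely fiddly step.

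Finally, to standardize I need $J_h \Omega J_h^\top > 0$, so that dividing the $\mathcal{N}(0,\,J_h\Omega J_h^\top)$ limit by the deterministic scalar $(J_h\Omega J_h^\top)^{1/2}$ produces $\mathcal{N}(0,1)$. Since $\Omega$ is positive definite (directly under Assumption \ref{ass:iid}, and by the long-run variance condition under Assumption \ref{ass:Dependence}), it suffices that $J_h \neq 0$. This is where the Fisher transform is genuinely more convenient than $\myQ$ itself: the third coordinate $\partial_r g$ is a sum of four strictly positive terms under Assumption \ref{ass:BoundedAway}, hence strictly positive, so $J_h$ cannot vanish and $J_h\Omega J_h^\top>0$ follows at once. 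Thus the non-degeneracy, which demanded real work for $\myQn$ in Proposition \ref{prop:Asymptotics-Q}, is here immediate, and the main obstacle reduces to the routine verification of $\nabla g = J_h$ above. (For feasible inference one additionally replaces $J_h\Omega J_h^\top$ by a consistent plug-in and applies Slutsky's theorem.)
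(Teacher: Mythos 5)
Your proposal is correct and follows essentially the same route as the paper: both write $Z(\myQn)$ as $\tfrac12\log$ of the estimated odds ratio, compute the same gradient, and apply the delta method with Lemma \ref{lemma:CLTjoint}. The only (cosmetic) difference is that you verify $J_h \neq 0$ via the third component, a sum of four strictly positive reciprocals, whereas the paper simplifies the first component to $-\frac{1-q}{2(p-r)(1-p-q+r)}$; your variant is if anything slightly cleaner.
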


\begin{proof}[Proof of Proposition~\ref{prop:Asymptotics-Z_Q}]
	Note that we can write
	\begin{align*}
		Z(\myQn)  = h \big( \p, \q, \r \big), \quad \text{with} \quad
		h(\pa, \qa, \ra) := \frac 1 2 \left( \log \left( \ra (1-\pa-\qa+\ra) \right) - \log \left( (\qa-\ra)(\pa-\ra) \right) \right).
	\end{align*}
	The function $h$ is continuously differentiable at $(p,q,r)$ with Jacobian matrix
	\begin{align*}
		J_h(\pa, \qa, \ra) = - \frac 1 2 \left( \frac{1}{(\pa-\ra)(\qa-\ra)} 
		\begin{pmatrix}
			\qa - \ra \\
			\pa - \ra \\
			2\ra - \pa - \qa
		\end{pmatrix}^\top 
		+\frac{1}{1-\pa-\qa+\ra} 
		\begin{pmatrix}
			1 \\
			1 \\
			-\ra^{-1} (1-\pa-\qa+2\ra)
		\end{pmatrix}^\top 
		\right).
	\end{align*}
	Furthermore, its first row simplifies to $- \frac{1-\qa}{2 (\pa-\ra)(1-\pa-\qa+\ra)}$, which is nonzero by Assumption \ref{ass:BoundedAway}.
	Hence, by the delta method \cite[Theorem 3.1]{VanderVaart2000} and  Lemma \ref{lemma:CLTjoint} the claim follows.
\end{proof}

\begin{proposition}
	\label{prop:Asymptotics-Z_C}
	With the quantities defined in Proposition \ref{prop:Asymptotics-C}, defining $\gamma_\myC:=1/(1-\myC^2)$ and given that Assumption \ref{ass:BoundedAway} and either Assumption \ref{ass:iid} or Assumption \ref{ass:Dependence} are in place, it holds that:
	\begin{itemize} 
		\item If $\sigma = 0$, then 
		$\big(\Delta^\top \Omega \Delta \big)^{-1/2} \sqrt{n} \big(Z(\myCn) - Z(\myC) \big) \tod \frac{\gamma_\myC Z \, \mathds{1}\{Z > 0\}}{m^+} +  \frac{\gamma_\myC Z \, \mathds{1}\{Z < 0\}}{m^-}$.
		
		\item If $\sigma > 0$, and
		\begin{itemize}
			\item 
			if $p \not= q$, then $\big( \Lambda^{+} J_{h^+} \Omega J_{h^+}^\top  (\Lambda^{+})^\top \big)^{-1/2} \gamma_\myC^{-1} \sqrt{n}   \big( Z(\myCn) - Z(\myC)\big)
			\stackrel{d}{\to} \mathcal{N} \big( 0, 1 \big)$; and
			
			\item 
			if $p = q$, then 
			$\sqrt{n}  \big( Z(\myCn) - Z(\myC) \big) \tod	\gamma_\myC \Lambda^+  
			\begin{pmatrix}
				V_{4} \\ 
				\big[ V_{1} - V_{3} \big]  - \mathds{1} \{ V_{1} > V_{2} \} \big[ V_{1} - V_{2} \big]
			\end{pmatrix}$.
			\end{itemize}

			\item If $\sigma < 0$, and
			\begin{itemize}
				\item 
				if $p \not= 1-q$, then 
				$\big( \Lambda^{-} J_{h^-} \Omega J_{h^-}^\top  (\Lambda^{-})^\top \big)^{-1/2} \gamma_\myC^{-1} \sqrt{n} 
			 	\big( Z(\myCn) - Z(\myC) \big)
				\stackrel{d}{\to} \mathcal{N} \big( 0, 1 \big)$; and
				
				\item
				if $p = 1-q$, then
				$\sqrt{n}  \big( Z(\myCn) - Z(\myC) \big) \tod	\gamma_\myC \Lambda^- 
				\begin{pmatrix}
					V_{4} \\ 	
					V_{3} - \mathds{1} \{ V_{1} > - V_{2} \} \big[  V_{1} + V_{2} \big]
				\end{pmatrix}
				$.
			\end{itemize}
		\end{itemize} 	
	\end{proposition}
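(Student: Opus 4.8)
The plan is to obtain every case directly from the corresponding case of Proposition \ref{prop:Asymptotics-C} by a single application of the delta method to the map $Z = \mathrm{arctanh}$. The key observation is that $Z$ is continuously differentiable on the open interval $(-1,1)$ with derivative $Z'(c) = 1/(1-c^2)$, so that $Z'(\myC) = \gamma_\myC$. Assumption \ref{ass:BoundedAway} forces $\max(0,p+q-1) < r < \min(p,q)$ and hence $\myC \in (-1,1)$ strictly (perfect dependence, where $\myC = \pm 1$, is ruled out), so $Z$ is differentiable at the point $\myC$ with finite, strictly positive derivative $\gamma_\myC$, and a first-order expansion at $\myC$ is available.

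First I would record, from Proposition \ref{prop:Asymptotics-C}, the limit law $L$ of $\sqrt{n}(\myCn - \myC)$ in each of the five regimes: the mixture of two half-normals when $\sigma = 0$; a centered normal with variance $\Lambda^{+} J_{h^+}\Omega J_{h^+}^\top(\Lambda^{+})^\top$ when $\{\sigma>0,\,p\neq q\}$ and the analogous normal when $\{\sigma<0,\,p\neq 1-q\}$; and the two nonstandard laws $\Lambda^{+}(V_4,\,[V_1-V_3]-\mathds{1}\{V_1>V_2\}[V_1-V_2])^\top$ and $\Lambda^{-}(V_4,\,V_3-\mathds{1}\{V_1>-V_2\}[V_1+V_2])^\top$ in the boundary cases $\{\sigma>0,\,p=q\}$ and $\{\sigma<0,\,p=1-q\}$. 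In every regime this is a weak-convergence statement $\sqrt{n}(\myCn - \myC) \tod L$ for a fixed limit law $L$, Gaussian or not.

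Next I would invoke the delta method in the general form valid for arbitrary limit laws, namely \cite[Theorem 3.1]{VanderVaart2000}: since $\sqrt{n}(\myCn - \myC) \tod L$ and $Z$ is differentiable at $\myC$, it follows that $\sqrt{n}(Z(\myCn) - Z(\myC)) \tod Z'(\myC)\,L = \gamma_\myC L$. Applying this in each regime and re-normalizing yields the claimed statements: when $\sigma = 0$ one has $\myC = 0$ and $\gamma_\myC = 1$, so the half-normal mixture is reproduced with the explicit (trivial) factor $\gamma_\myC$; in the two Gaussian regimes the limit is $\gamma_\myC\,\mathcal{N}(0,v) = \mathcal{N}(0,\gamma_\myC^2 v)$, so the additional factor $\gamma_\myC^{-1}$ restores a standard normal after studentization by $v^{-1/2}$; and in the two nonstandard regimes the limit law is simply multiplied by $\gamma_\myC$, matching the displayed $\gamma_\myC\Lambda^{\pm}(\cdots)$ expressions.

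There is no genuinely hard analytic step; the only point requiring care is that the delta method must be applied in its general, not CLT-specific, form, because in three of the five regimes the limit $L$ is non-Gaussian. This is exactly what \cite[Theorem 3.1]{VanderVaart2000} provides, its conclusion being $\phi'_\theta(T)$ for an arbitrary weak limit $T$; differentiability of $Z$ at $\myC$, guaranteed by Assumption \ref{ass:BoundedAway}, is the sole hypothesis needed, and the linear maps defining the nonstandard laws pass through unchanged up to the scalar multiple $\gamma_\myC$. What remains is the bookkeeping of the normalizing matrices, which amounts to tracking the scalar factor $\gamma_\myC$ through each case.
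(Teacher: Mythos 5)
Your proposal is correct and follows essentially the same route as the paper: the paper's proof is a one-line application of the delta method (with $Z'(\delta)=1/(1-\delta^2)$ for $|\delta|<1$) to each of the limiting random variables from Proposition \ref{prop:Asymptotics-C}. Your additional observations — that the general form of the delta method is needed for the non-Gaussian regimes and that Assumption \ref{ass:BoundedAway} guarantees $|\myC|<1$ so $\gamma_\myC$ is finite — are correct and merely make explicit what the paper leaves implicit.
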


\begin{proof}[Proof of Proposition \ref{prop:Asymptotics-Z_C}]
	Noting that the Fisher transformation $Z(\delta)$ from \eqref{eq:Fisher_transformation} is a differentiable function with derivative $Z^\prime(\delta) = 1/(1-\delta^2)$ for any $|\delta| < 1 $, we can apply the delta method to each of the limiting random variables from Proposition \ref{prop:Asymptotics-C} to arrive at the claim.
\end{proof}

\begin{proposition}
	\label{prop:Asymptotics-Z_phi}
	With $J_l$ defined in Proposition \ref{prop:Asymptotics-Phi}, defining $\gamma_\phi :=1/(1-\phi^2)$ and given that Assumption \ref{ass:BoundedAway} and either Assumption \ref{ass:iid} or Assumption \ref{ass:Dependence} are in place, it holds that
	$$\big(J_l \Omega J_l^\top \big)^{-1/2} \gamma_\phi^{-1} \sqrt{n} \big( Z(\widehat{\phi}_n) - Z(\phi) \big) \stackrel{d}{\to} \mathcal{N} \big( 0, 1\big).$$
\end{proposition}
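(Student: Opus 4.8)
The plan is to obtain this statement as a direct application of the delta method to the asymptotic normality of $\widehat{\phi}_n$ already established in Proposition~\ref{prop:Asymptotics-Phi}, using the Fisher transformation $Z$ as the outer function. This mirrors the argument sketched for the Fisher transform of Cole's coefficient in Proposition~\ref{prop:Asymptotics-Z_C}, but is simpler here because the limiting distribution of $\widehat{\phi}_n$ is a single Gaussian without any case distinction.

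First I would record the analytic facts about the transformation. The map $Z(\delta) = \mathrm{arctanh}(\delta)$ is continuously differentiable on the open interval $(-1,1)$ with derivative $Z'(\delta) = 1/(1-\delta^2)$, so that $Z'(\phi) = \gamma_\phi$. It remains to check that $Z$ is in fact differentiable \emph{at the true value} $\phi$ with a finite, nonzero derivative. Under Assumption~\ref{ass:BoundedAway} the strict inequalities $\max(0,p+q-1) < r < \min(p,q)$ rule out perfect positive and negative dependence, so by the Fréchet--Hoeffding bounds for the phi coefficient in \eqref{eq:phibounds} the true $\phi$ lies strictly inside $(-1,1)$. Consequently $\gamma_\phi = 1/(1-\phi^2)$ is a well-defined, finite and strictly positive constant.

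Second I would invoke Proposition~\ref{prop:Asymptotics-Phi}, which under Assumption~\ref{ass:BoundedAway} together with either Assumption~\ref{ass:iid} or Assumption~\ref{ass:Dependence} gives $\sqrt{n}\big(\widehat{\phi}_n - \phi\big) \stackrel{d}{\to} \mathcal{N}\big(0,\, J_l \Omega J_l^\top\big)$, where the scalar asymptotic variance $J_l \Omega J_l^\top$ is strictly positive (as argued in that proof). Applying the delta method \cite[Theorem 3.1]{VanderVaart2000} with the function $Z$ then yields $\sqrt{n}\big(Z(\widehat{\phi}_n) - Z(\phi)\big) \stackrel{d}{\to} \mathcal{N}\big(0,\, \gamma_\phi^2\, J_l \Omega J_l^\top\big)$, since $Z'(\phi)^2 = \gamma_\phi^2$. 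Standardizing by the positive constant $\gamma_\phi \big(J_l \Omega J_l^\top\big)^{1/2}$ delivers the asserted $\mathcal{N}(0,1)$ limit.

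I do not expect any genuine obstacle: the argument is a mechanical one-dimensional delta method. The only points requiring a word of justification are that $\phi \in (-1,1)$ strictly, so that $Z$ is differentiable at $\phi$ with nonzero derivative, and that the asymptotic variance is nondegenerate; both are supplied by Assumption~\ref{ass:BoundedAway} and by Proposition~\ref{prop:Asymptotics-Phi} respectively, so the proof reduces to a single invocation of the delta method followed by standardization.
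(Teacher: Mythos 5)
Your proposal is correct and follows essentially the same route as the paper, which likewise obtains the result by applying the delta method with the Fisher transformation $Z$ to the limiting distribution of $\widehat{\phi}_n$ from Proposition~\ref{prop:Asymptotics-Phi}. Your additional check that Assumption~\ref{ass:BoundedAway} forces $\phi$ strictly inside $(-1,1)$, so that $Z'(\phi)=\gamma_\phi$ is finite and nonzero, is a detail the paper leaves implicit but is a welcome clarification.
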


\begin{proof}[Proof of Proposition \ref{prop:Asymptotics-Z_phi}]
	In the same way as in the proof of Proposition \ref{prop:Asymptotics-Z_C}, we apply the delta method to the limit distribution from Proposition \ref{prop:Asymptotics-Phi} to arrive at the claim.
\end{proof}

As discussed for all the other quantities showing up in the asymptotic variances in Section \ref{sec:asymptotics}, $J_h$, $\gamma_{\myC}$ and $\gamma_\phi$ can as well be estimated consistently by their empirical plug-in counterparts $\widehat{J}_h$, $\widehat{\gamma}_{\myC}$ and $\widehat{\gamma}_\phi$.

\section{Time Series Asymptotics}
\label{subsec:time_series_asymptotics}

All our asymptotic results hold under independence as well as under classical time series dependencies, namely by imposing through Assumption \ref{ass:Dependence} that $\mathbf{W}_i := (X_i, Y_i, X_i Y_i)^\top$, $i \in \mathbb{N}$, follows a stationary ergodic adapted mixingale, going back to \cite{McLeish1974}.
Mixingales behave asymptotically like martingale difference processes, analogous to mixing processes that behave asymptotically like independent processes \citep{White2001}.
Generalizations to non-stationarity would be feasible at the cost of a more involved notation as e.g., even the population values $p_n := \frac{1}{n} \sum_{i=1}^n \P(X_i=1)$ would depend on the sample size $n$.
In these cases, a CLT different from Lemma \ref{lemma:CLTjoint} would have to be invoked in the proofs.
While allowing for more general  dependence and non-stationarity conditions in CLTs usually comes at the cost of more restrictive moment conditions, this is not the case here as all moments exist due to the boundedness of our binary random variables.

To incorporate a possible time series dependence of our random variables, we employ a Heteroskedasticity and Autocorrelation Consistent (HAC) estimator for the long-run covariance matrix $\Omega$ \citep{NeweyWest1987}.
For a bandwidth sequence of integers $m_n \to \infty$ with $m_n = o(n^{1/4})$ and a triangular array of weights $\omega_{nj}$ with $\omega_{nj} \le K < \infty$ for all $n \in \mathbb{N}$, $j \le m_n$, and $\omega_{nj} \to 1$ for all $j=1,\dots m_n$ as $n \to \infty$, let
\begin{align}
	\label{eqn:HACEstimator}
	\widehat{\Omega}_n = \frac{1}{n} \sum_{i=1}^n  \moW_{n,i}\moW_{n,i}^\top   +  
	\frac{1}{n} \sum_{j=1}^{m_n} \omega_{n j}  \sum_{i=j+1}^{n} \left( \moW_{n,i} \moW_{n,i-j}^\top  + \moW_{n,i-j} \moW_{n,i}^\top \right),
\end{align} 	
where $\moW_{n,i} := \mW_{i} - (\p, \q, \r)^\top$.
The following proposition shows that this estimator is consistent.

\begin{proposition}
	\label{prop:CovConsistency}
	Given that $\moW_{n,i}$ is $\alpha$-mixing of size $-2r/(r-2)$ for some $r>2$, we have $\widehat{\Omega}_n \toP \Omega$.
\end{proposition}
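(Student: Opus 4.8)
The plan is to reduce the claim to a standard consistency theorem for HAC estimators by separating the effect of centering at the estimated mean $(\p,\q,\r)^\top$ from the asymptotic behaviour of the corresponding infeasible estimator centered at the true mean. I read the mixing hypothesis as a condition on the underlying binary sequence $(\mW_i)_{i\in\mathbb{N}}$, with the centering by the common vector $(\p,\q,\r)^\top$ handled separately below. Write $\mu := (p,q,r)^\top = \E[\mW_i]$ and $\widetilde{\mW}_i := \mW_i - \mu$, and let $\widetilde{\Omega}_n$ denote the estimator defined exactly as in \eqref{eqn:HACEstimator} but with every $\moW_{n,i}$ replaced by $\widetilde{\mW}_i$. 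The proof then splits into showing (i) $\widetilde{\Omega}_n \toP \Omega$ and (ii) $\widehat{\Omega}_n - \widetilde{\Omega}_n \toP 0$.

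For step (i) I would verify the hypotheses of a classical HAC consistency result (e.g., \citet{NeweyWest1987}; see also \citet{White2001} for the $\alpha$-mixing version). The decisive simplification is that the boundedness of the binary variables makes every moment condition trivially true: each entry of $\widetilde{\mW}_i \widetilde{\mW}_{i-j}^\top$ is a bounded random variable, so $\sup_i \E\|\widetilde{\mW}_i\|^{2r} < \infty$ for every $r$. Moreover, for each fixed lag $j$ the array $\widetilde{\mW}_i \widetilde{\mW}_{i-j}^\top$ is a measurable function of the block $(\mW_{i-j},\dots,\mW_i)$ of the $\alpha$-mixing sequence and hence is itself $\alpha$-mixing with mixing coefficients whose size is preserved. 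Together with the assumed mixing size $-2r/(r-2)$ and the bandwidth conditions $m_n \to \infty$ and $m_n = o(n^{1/4})$, the cited theorem yields $\widetilde{\Omega}_n \toP \Omega$, where $\Omega = \Gamma_0 + \sum_{j=1}^{\infty}(\Gamma_j + \Gamma_j^\top)$ with $\Gamma_j := \Cov(\mW_i,\mW_{i-j})$ coincides with the long-run covariance matrix from Assumption \ref{ass:Dependence}.

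For step (ii) I would set $d_n := (\p,\q,\r)^\top - \mu = O_P(n^{-1/2})$ by the CLT of Lemma \ref{lemma:CLTjoint} and expand $\moW_{n,i}\moW_{n,i-j}^\top = (\widetilde{\mW}_i - d_n)(\widetilde{\mW}_{i-j} - d_n)^\top$. The difference $\widehat{\Omega}_n - \widetilde{\Omega}_n$ then consists of cross terms of the type $\frac{1}{n}\sum_{j} \omega_{nj}\sum_{i}\widetilde{\mW}_i d_n^\top$ (and its transpose) together with a block proportional to $d_n d_n^\top$. Using $|\omega_{nj}|\le K$ and the identity $\frac{1}{n}\sum_{i=j+1}^n \widetilde{\mW}_i = d_n - \frac{1}{n}\sum_{i=1}^{j}\widetilde{\mW}_i$, each cross-term block is $O_P(n^{-1})$ per lag and the $d_n d_n^\top$ block contributes $O_P(m_n/n)$; summing over the $m_n$ lags gives a remainder of order $O_P(m_n/n) = o_P(1)$, which is comfortably controlled by $m_n = o(n^{1/4})$.

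I expect step (i) to be the genuine obstacle, as it is the only truly probabilistic ingredient, whereas the mean adjustment in step (ii) is routine bookkeeping given the bandwidth rate. Fortunately the boundedness of binary random variables eliminates the usual moment-condition hurdles in HAC consistency proofs, so once the mixing-size and bandwidth hypotheses are matched to those of the reference theorem, the result follows essentially by citation.
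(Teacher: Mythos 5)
Your proposal is correct and follows essentially the same route as the paper: both arguments reduce the claim to a standard Newey--West/HAC consistency theorem (the paper invokes \citet[Theorem 6.20]{White2001} directly on the demeaned triangular array $\moW_{n,i}$), with the boundedness of the binary variables disposing of all moment conditions. The only difference is that your step (ii) spells out explicitly the $O_\P(m_n/n)$ effect of centering at the estimated mean $(\p,\q,\r)^\top$, which the paper absorbs into the cited theorem by simply noting that $\E[\moW_{n,i}]=0$ by construction; your version is slightly more self-contained but not a different proof.
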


\begin{proof}[Proof of Proposition \ref{prop:CovConsistency}]
	The claim $\widehat{\Omega}_n \toP \Omega$ follows from \citet[Theorem 6.20]{White2001} as the triangular array $\moW_{n,i}$ is assumed to be $\alpha$-mixing,  the $\moW_{n,i}$ are bounded and hence any moments are bounded, and by construction, $\mathbb{E}[\moW_{n,i}] = \mathbb{E}[\mW_{i}] - \mathbb{E}[\p, \q, \r] = (p, q, r) -  (p, q, r)  = 0$.
\end{proof}

The estimation and consistency of all other matrices involved in the asymptotic distributions is unchanged compared to the iid case.

\section{Simulations}
\label{sec:Simulations} 

\sloppy
We analyze the finite sample performance of the confidence intervals described in Section \ref{subsec:CIs} based on the standard asymptotics in Propositions \ref{prop:Asymptotics-Q}--\ref{prop:Asymptotics-Phi} as well as the Fisher transformations described in Section \ref{sec:FisherTransformations}, Appendix \ref{subsec:Fisher_transformations} and Propositions \ref{prop:Asymptotics-Z_Q}--\ref{prop:Asymptotics-Z_phi}.
We simulate the pairs of binary random variables $(X_i, Y_i)$ independently for different $i=1,\dots,n$ and use $n \in \{100; 500; 2{,}000\}$.
The joint behavior of $(X_i, Y_i)$  is characterized by the probabilities $p = \P(X_i=1)$, $q = \P(Y_i=1)$ and $r =  \P(X_i=1, \, Y_i=1)$.
In order to cover the different settings in Proposition \ref{prop:Asymptotics-C}, we use $(p,q) \in \big\{(0.05, 0.5), \, (0.1,0.9), \, (0.2, 0.2), \, (0.3, 0.7), \, (0.4, 0.8) \big\}$.
For each of these combinations of $p$ and $q$, we choose $39$ equally spaced values for $r$ between $\max(p+q-1,0) + 1/20$ and $\min(p,q) - 1/20$.
In the following, we only report results for the settings where the empirical boundary-conditions $0 < \p,\q < 1$ and $\max(\p + \q - 1, 0) < \r < \min(\p, \q)$ are satisfied in at least $50\%$ of the simulation runs.



\begin{figure}[p]
	\centering
	\includegraphics[width=\linewidth]{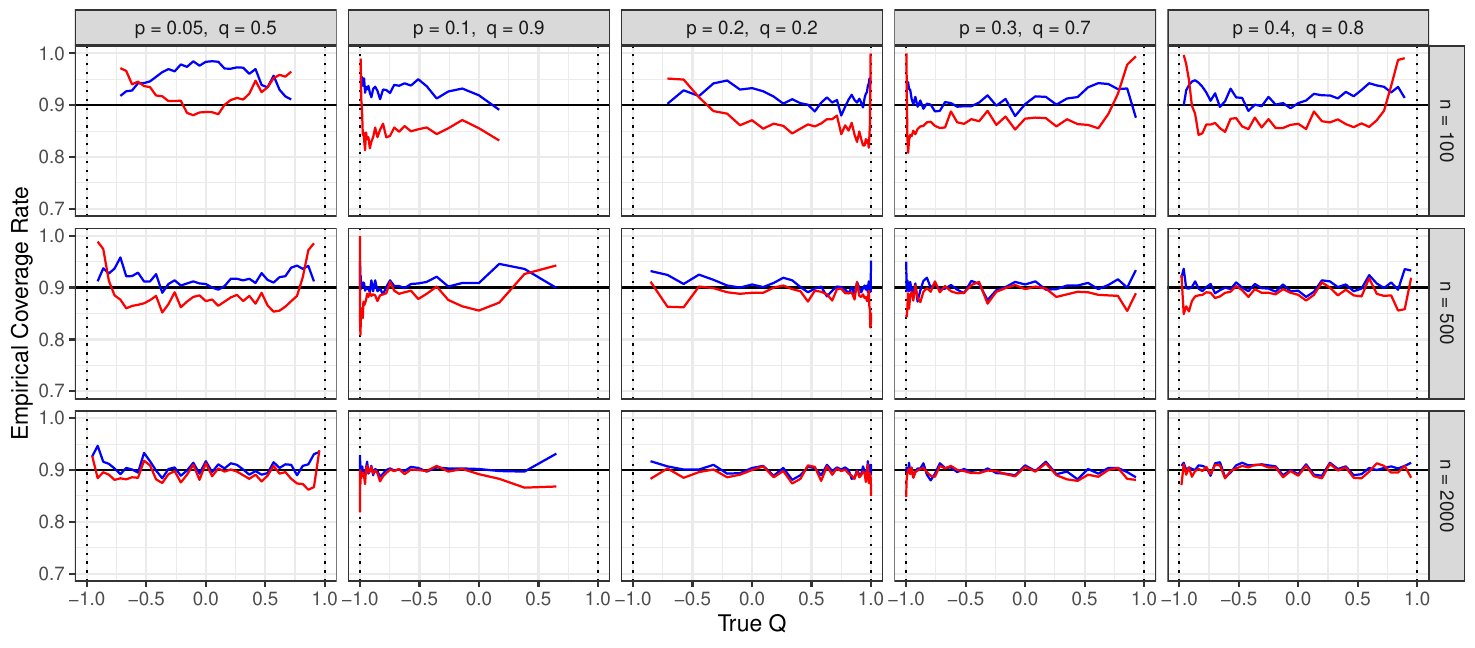}
	\includegraphics[width=\linewidth]{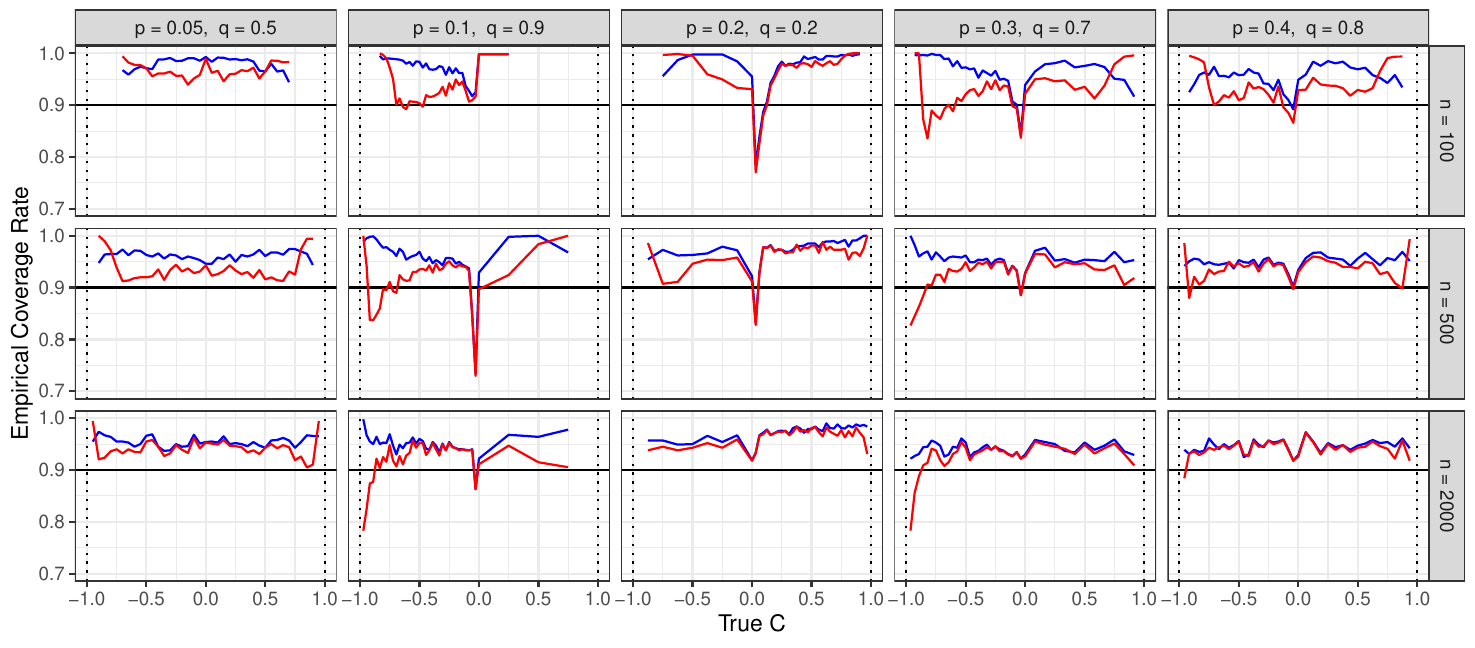}
	\includegraphics[width=\linewidth]{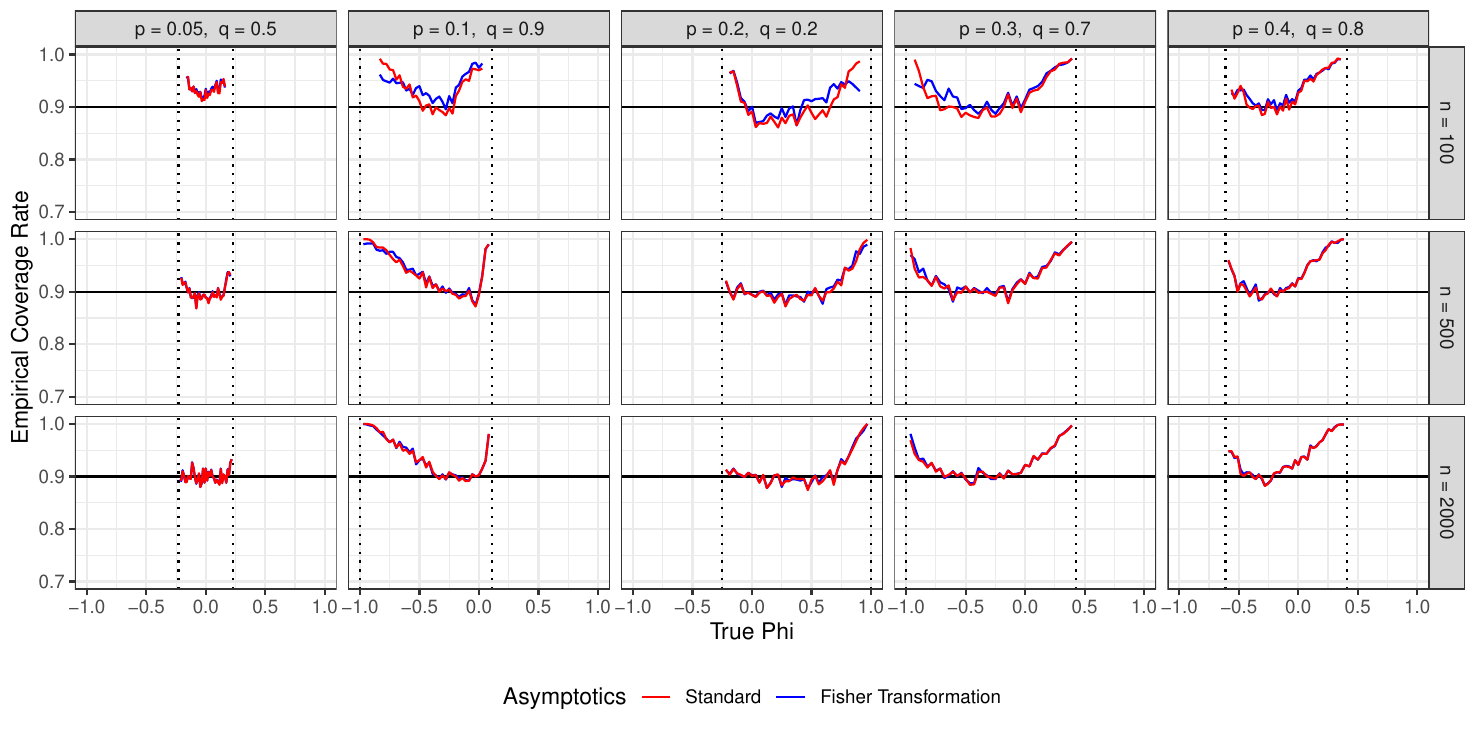}
	\caption{Empirical confidence interval coverage rates with nominal level $1-\alpha = 0.9$ for $M=1{,}000$ simulation replications for the measures $\myQ$ (top), $\myC$ (middle) and $\phi$ (bottom) for the sample sizes $n \in \{100;500;2{,}000\}$ and the probabilities $p$ and $q$ given in the column headings.
    The $x$-axis displays the true values for $\{\myQ, \myC, \phi\}$ that arise for the sequence of joint probabilities $r$.
    The respective Fr\'{e}chet--Hoeffding bounds are displayed as vertical dashed lines.
	Standard confidence intervals are given in red and confidence intervals based on the Fisher transformation in blue.}
	\label{fig:CI_CoverageRates}
\end{figure}


Figure \ref{fig:CI_CoverageRates} shows the coverage rates of the confidence intervals that are constructed based on the standard and the Fisher-transformed asymptotic distributions for $\myQ$, $\myC$ and $\phi$ in the respective panels.
For $\myQ$, we obtain very accurate coverage rates that converge to the nominal level of $90\%$. 
While the standard approach displays inaccuracies close to the boundaries, the Fisher transformation expectedly improves the accuracy in these regions.

For $\myC$, we observe mostly conservative coverage rates that are caused by the multiple testing corrections in the construction of the confidence intervals.
The Fisher transformation again improves the accuracy close to the boundaries.
The notable downside peak in the coverage rates close to the value of $\myC = 0$ can be explained by the piecewise normalization of $\myC$.
In particular, for a true $\sigma > 0$, the last two terms in \eqref{eqn:C+4terms} involving the quantity $\mathds{1}\{ \hsigma < 0 \}$ in the proof of Proposition \ref{prop:Asymptotics-C} are shown to be $o_\P(1)$ and hence do not contribute to the asymptotic distribution, but deliver non-negligible contributions to the finite sample distributions.

For $\phi$, we again see the drastic effect of the non-attainability through the theoretical boundaries depicted with the vertical dashed lines.
While the coverage rates of the standard method are very accurate in the center, we observe an over-coverage closer to the boundaries that cannot be improved with the Fisher transformation.
As the Fisher transformation leaves the central part $[-0.5, \, 0.5]$ almost unchanged, it is ineffective when the theoretical boundaries of $\phi$ are far from $-1$ and $1$, once again illustrating a fundamental deficiency of the phi coefficient.

\begin{figure}[p]
	\centering
	\includegraphics[width=\linewidth]{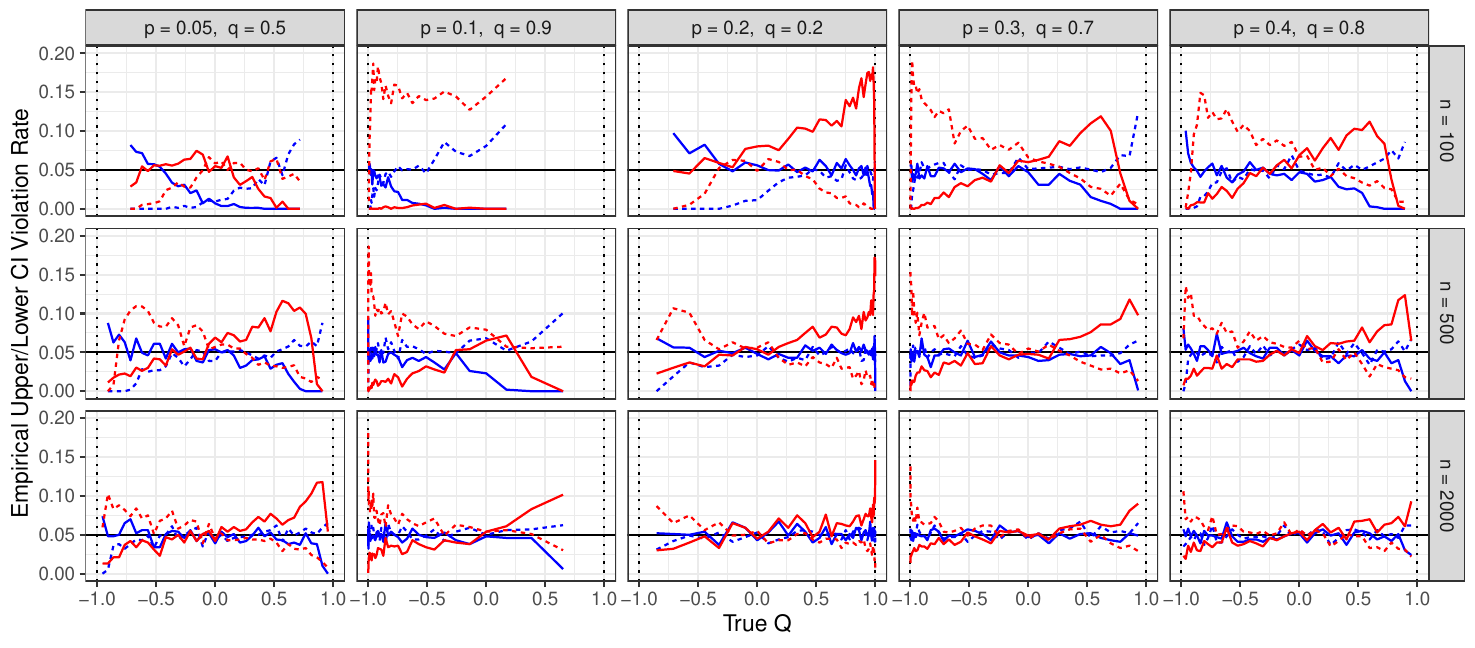}
	\includegraphics[width=\linewidth]{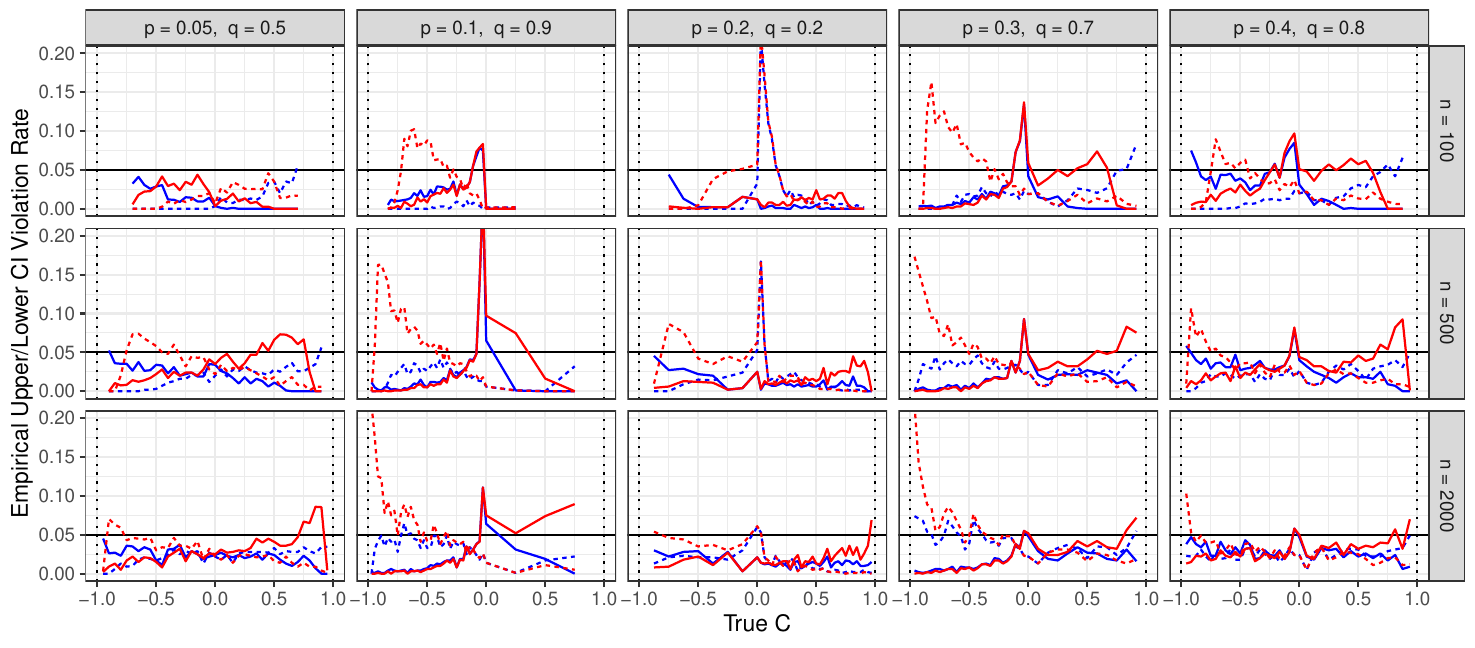}
	\includegraphics[width=\linewidth]{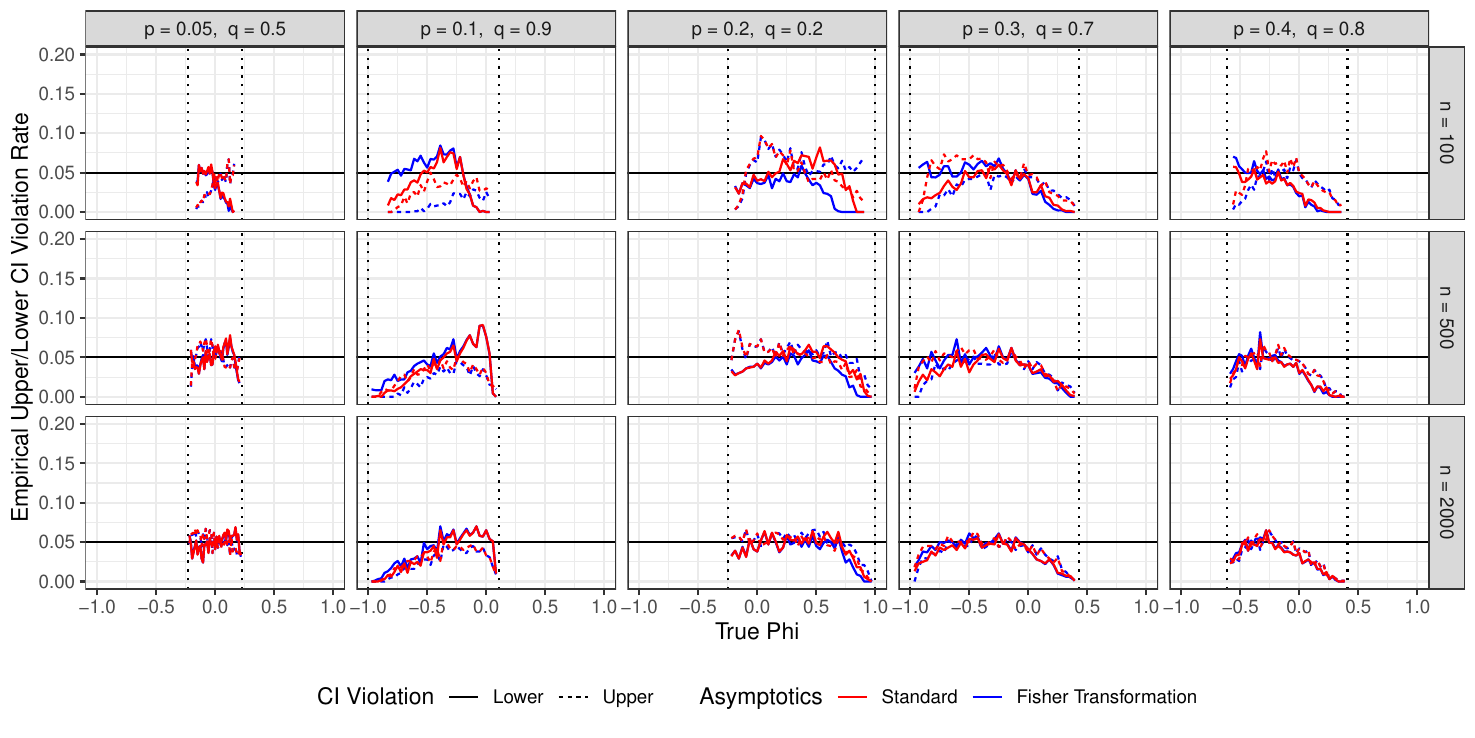}
 	\caption{Empirical violation rates of confidence intervals with nominal level $1-\alpha = 0.9$.
  	The ``lower'' (``upper'') violation rate reports the frequency of true values being below (above) the estimated confidence intervals.
    We show results for $M=1{,}000$ simulation replications for the measures $\myQ$ (top), $\myC$ (middle) and $\phi$ (bottom) for the sample sizes $n \in \{100;500;2{,}000\}$ and the probabilities $p$ and $q$ given in the column headings.
    The $x$-axis displays the true values for $\{\myQ, \myC, \phi\}$ that arise for the sequence of joint probabilities $r$.
    The respective Fr\'{e}chet--Hoeffding bounds are displayed as vertical dashed lines.
	Standard confidence intervals are given in red and confidence intervals based on the Fisher transformation in blue.}
	\label{fig:CI_ViolationRates}
\end{figure}

Figure \ref{fig:CI_ViolationRates} refines the analysis by plotting the lower (upper) ``violation rates'', which are defined as the frequency of population dependence measures being below (above) the estimated confidence intervals.
For $\myQ$ in the upper panel, we especially see the asymmetry of the standard asymptotics that can be explained by approximating the skewed empirical distribution with the symmetric Gaussian limit; see in particular Figure \ref{fig:qsimulation}.
The Fisher transformation fixes this asymmetry and delivers confidence intervals that have approximately the same mass to either side.  
While similar effects can be observed for $\myC$, the Fisher transformation is again relatively ineffective for $\phi$.

\begin{table}[tb]
    \footnotesize
    \begin{tabular}{lclcl}
		\toprule
		Method Name && Hypothesis Combination && Valid $p$-values \\
		\midrule 
		Full &&  $\H_c = \{ [ (\H_= \cap \H_{p,q}) \cup \H_{\not=}] \cap \H_\sigma \}$ & &  $2 \min \{ \max[ 2 \min (p_=, p_{p,q}), p_{\not=}], p_\sigma \}$   \\     
  		No $\sigma$-Test &&  $\H_c = [ (\H_= \cap \H_{p,q}) \cup \H_{\not=}]$ & &  $\max[ 2 \min (p_=, p_{p,q}), p_{\not=}]$   \\     
    	No $p,q$-Test &&  $\H_c = \{ [ \H_= \cup \H_{\not=}] \cap \H_\sigma \}$ & &  $2 \min \{ \max[p_=, p_{\not=}], p_\sigma \}$   \\     
      	Basic &&  $\H_c =  [ \H_= \cup \H_{\not=}]$ & &  $\max[ p_=, p_{\not=}]$   \\  
       \midrule
       \multicolumn{5}{l}{Elementary Hypotheses:} \\
       For $c > 0$: &&
       \multicolumn{3}{l}{
        $\H_= = \{C=c, p=q\}, \; 
        \H_{\not=} = \{C=c, p \not= q\}, \; 
        \H_{p,q} = \{p=q\}, \;
        \H_\sigma = \{\sigma \ge 0\}$} \\
        For $c < 0$: &&
       \multicolumn{3}{l}{
        $\H_= = \{C=c, p=1-q\}, \; 
        \H_{\not=} = \{C=c, p \not= 1-q\}, \; 
        \H_{p,q} = \{p=1-q\}, \;
        \H_\sigma = \{\sigma \le 0\}$} \\
        \bottomrule
	\end{tabular}
    \caption{Overview of the different theoretically valid combination methods of elementary hypotheses and corresponding $p$-values used for the construction of confidence intervals for $\myC$ in Figure \ref{fig:CI_C_Details}.}
    \label{tab:CICConstructions}
\end{table}

\begin{figure}[tb]
	\centering
	\includegraphics[width=\linewidth]{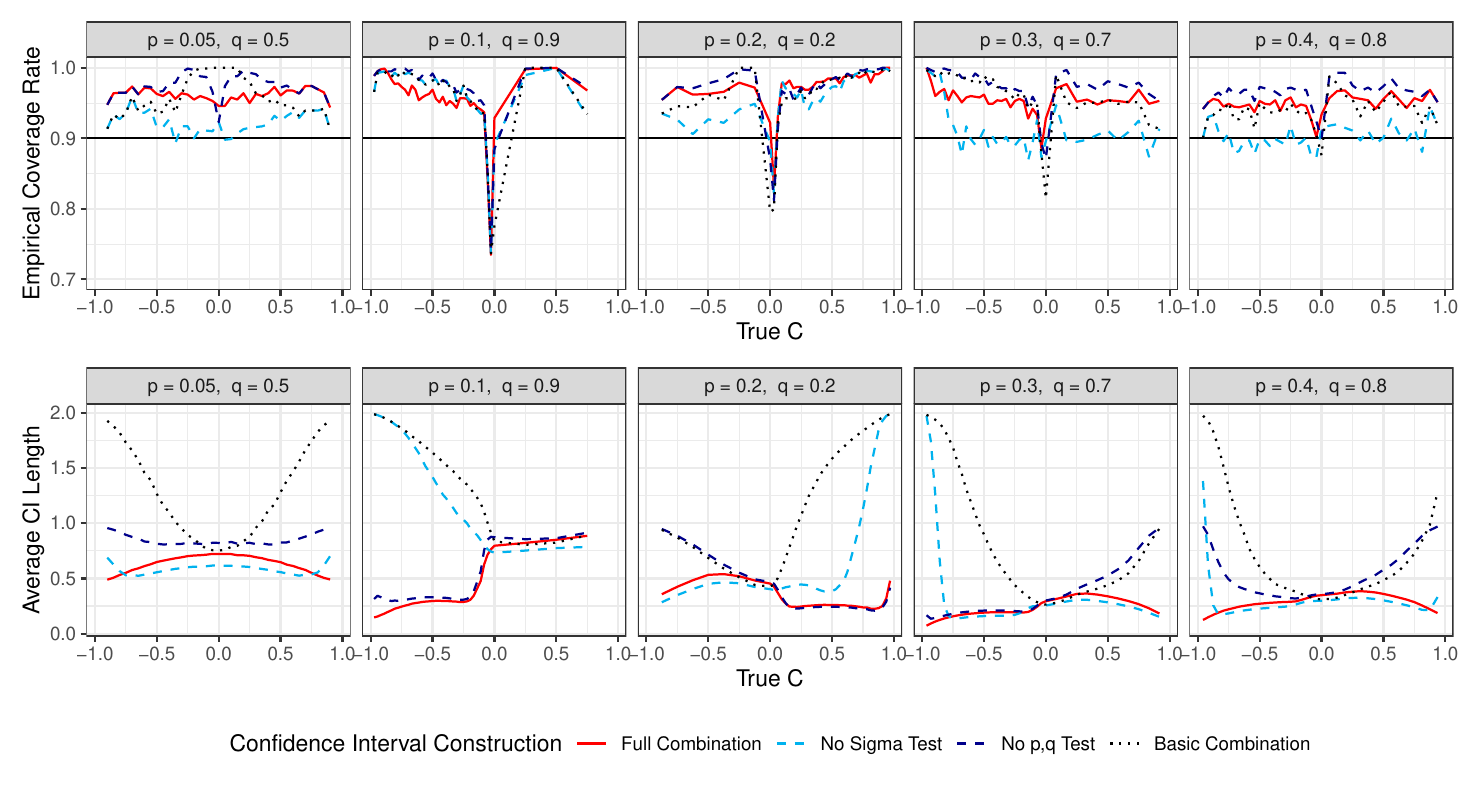}
	\caption{Empirical coverage rates (upper panel) and average confidence interval length (lower panel) with level $1-\alpha = 0.9$ based on the different construction principles given in Table \ref{tab:CICConstructions} for $\myC$ using the Fisher transformation.
    The $x$-axes display the true values of $\myC$ that arise for the sequence of joint probabilities $r$.}
	\label{fig:CI_C_Details}
\end{figure}

We continue to illustrate the benefits of the combination of elementary hypotheses in the construction of tests for $\H_c = \{C=c\}$ for any $c \not= 0$ as discussed in Section \ref{subsec:CIs}.
For this, we consider simplified methods that omit one/both tests for $\H_{p,q}$ and $\H_\sigma$ respectively as illustrated in detail in Table \ref{tab:CICConstructions}.
Figure \ref{fig:CI_C_Details} displays the empirical coverage rates and average lengths of confidence intervals for $\myC$ based on these different construction principles for a fixed sample size of $n=500$ and by using the Fisher transformation.
We see that all confidence intervals have satisfactory coverage rates that are often conservative due to the multiple testing corrections.
However, the lengths of the different methods differ substantially, whereas the ``full'' method we propose achieves overall very narrow intervals.
While it is often not the most narrow one as its inherent multiple testing corrections result in conservative coverage rates, it is in all settings close to the most narrow method.
Which method is the most narrow one however depends on practically unknown population quantities, and hence cannot be known in practice.
In contrast, omitting the respective tests for $\H_{p,q}$ and $\H_\sigma$ results in drastic increases of the length in certain situations.
Overall, the ``full'' method is relatively narrow (powerful) in all settings and trades some length in well-behaved settings in order to insure against very unfortunate cases.

\section{Proofs} 
\label{app:proofs}

Here, we give the proofs for all ``conceptual'' results of Sections \ref{sec:dependence_concepts}--\ref{sec:proper_measures} and Appendix \ref{sec:FurtherMeasures} in Section \ref{sec:ProofsConcept}, and continue to present the proofs for the ``asymptotic'' results  of Section \ref{sec:asymptotics} in Section 
\ref{sec:ProofsAsymptotics}.
Section \ref{sec:AddLemmas} presents some additional lemmas together with their proofs.

\subsection{Proofs for the Results in Sections \ref{sec:dependence_concepts}--\ref{sec:proper_measures} and Appendix \ref{sec:FurtherMeasures}}
\label{sec:ProofsConcept}

\begin{proof}[Proof of Proposition \ref{prop:characterizations}]
	For the first part of (i) let $\P(A) \leq \P(B)$ w.l.o.g. Then it holds that: $A$ and $B$ are perfectly positively dependent $\iff \P(A \setminus B)=0 \iff \P(A \cap B) = \P(A) = \min(\P(A),\P(B))$.\\
	For the second part of (i) note that according to Lemma \ref{lemma:complement_dependence} (iii) $A$ and $B$ are perfectly negatively dependent if and only if $A$ and $\overline{B}$ are perfectly positively dependent, which is equivalent to $\P(A \cap \overline{B}) = \min(\P(A),\P(\overline{B}))$, which is in turn equivalent to $\P(A \cap B) = \P(A) - \P(A \cap   \overline{B}) = \P(A) - \min(\P(A),\P(\overline{B}))= \max(0,\P(A) + \P(B) - 1)$.\\
	The upper bound in the third part of (i) follows by $\{A \cap B\} \subseteq A$ and $\{A \cap B\} \subseteq B$ and the monotonicity of probability. The lower bound follows by using the upper bound  for the inequality in $P(A \cap B) = \P(A) - \P(A \cap   \overline{B}) \geq \P(A) - \min(\P(A),\P(\overline{B}))= \max(0,\P(A) + \P(B) - 1)$.
	(ii) and (iii) follow directly from Definition \ref{def:perfect_dependence}.
\end{proof}

\begin{proof}[Proof of Lemma \ref{lemma:complement_dependence}]
	For (i), it holds that $\P(A)=\P(A \cap B) + \P(A \cap \overline{B})$, which implies that positive dependence between $A$ and $B$ is equivalent to $\P(A) - \P(A \cap \overline{B}) \ge \ \P(A)\P(B)$, which is in turn equivalent to $\P(A\cap \overline{B}) \le \ \P(A)\P(\overline{B})$, that is, to negative dependence between $A$ and $\overline{B}$.

    (ii) follows by using $\P(A \cap B) + \P(A \cap \overline{B}) = \P(A)=\P(A^*) = \P(A^* \cap B^*) + \P(A^* \cap \overline{B^*})$ and (iii) is immediate from Definition \ref{def:perfect_dependence}.
\end{proof}

\begin{proof}[Proof of Proposition \ref{cor:pos_neg_dependence}]
	$A$ and $B$ being positively (negatively) dependent is equivalent to them being stronger positively (negatively) dependent than two independent events $A^*$ and $B^*$ with the same marginal probabilities. By (B) it holds that $\delta(A^*,B^*)=0$. Thus, by (D), $\delta(A,B) \geq (\leq)\ 0$ is equivalent to positive (negative) dependence of $A$ and $B$.
\end{proof}

\begin{proof}[Proof of Proposition \ref{prop:characterizations_linear_dependence}]
	The equivalence of (i) and (ii) is established by checking under which conditions the bounds from \eqref{eq:phibounds} become 1 and $-1$.\\
	(ii) immediately implies the second condition in (iii) and the first condition follows directly from Definition \ref{def:perfect_dependence}, remembering that the symmetric difference can be written as $A \Delta B = (A \setminus B) \cup (B \setminus A)$.\\
	(iii) implies (ii): Consider the case of positive dependence. If $\P( A \setminus B) = 0$ or $\P( B \setminus A) = 0$ holds, then under $\P(A)=\P(B)$ both conditions must hold, which implies that $\P (A \Delta B) = \P( (A \setminus B) \cup (B \setminus A)) \leq \P( A \setminus B) + \P(B \setminus A) =0$; analogously for negative dependence.\\
	(iv) is a simple reformulation of (ii).
	(v) is a reformulation of (ii) in terms of indicators, noting that $\mathds{1}_{\overline{B}} = 1-\mathds{1}_B$.\\
	(vii) is immediate from (v).\\
	(vi) is immediate from (v).\\
	(v) follows from (vi) as binary variables have to be equal if they are increasing functions of each other and for a decreasing functional relationship one variable has to equal the negative of the other plus 1.
\end{proof}

To prove Proposition \ref{Cole_propriety}, we need the following lemma.
\begin{lemma} \label{lemma:FHbounds_relation}
	Denote the Fr\'{e}chet--Hoeffding upper and lower bound of $\Cov(A,B)$ from \eqref{eq:covbounds} by
	$$\myFH^+ (A,B) = \min(\P(A),\P(B)) - \P(A) \P(B)$$
	and
	$$\myFH^- (A,B) = \max(0,\P(A) + \P(B) -1) - \P(A) \P(B).$$
	It holds that 
	$$\myFH^+ (A,\overline{B}) = - \myFH^- (A,B).$$
\end{lemma}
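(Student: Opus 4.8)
The plan is to verify the identity by a direct substitution that reduces it to an elementary $\min$/$\max$ identity. Write $p := \P(A)$ and $q := \P(B)$, so that $\P(\overline{B}) = 1 - q$. Plugging into the definitions, the left-hand side is
$$\myFH^+(A, \overline{B}) = \min(p, 1-q) - p(1-q),$$
while the right-hand side is
$$-\myFH^-(A,B) = pq - \max(0,\, p+q-1).$$

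First I would expand $-p(1-q) = -p + pq$ on the left, so that after cancelling the common term $pq$ from both sides, the claimed equality becomes equivalent to
$$p - \min(p,\, 1-q) = \max(0,\, p + q - 1).$$
Then I would close the argument using the elementary identity $c - \min(a,b) = \max(c-a,\, c-b)$ with $c = p$, $a = p$, $b = 1-q$, which yields
$$p - \min(p,\, 1-q) = \max\big(p - p,\; p - (1-q)\big) = \max(0,\, p + q - 1),$$
exactly as required. Equivalently, one can split into the two cases $p + q \le 1$ and $p + q > 1$ and check each directly.

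I do not anticipate a genuine obstacle here, as the statement is a short computation; the only point needing a little care is the correct manipulation of the $\min$/$\max$ terms, which is dispatched by the displayed identity (or the two-case split). A more conceptual but less direct route would invoke Lemma \ref{lemma:complement_dependence}(iii) together with Proposition \ref{prop:characterizations}(i): one checks that $\Cov(A,\overline{B}) = -\Cov(A,B)$ holds pointwise as the joint distribution varies over fixed marginals, and that comonotonicity of $A,\overline{B}$ coincides with countermonotonicity of $A,B$. Since $\myFH^+(A,\overline{B})$ is the value of $\Cov(A,\overline{B})$ attained under comonotonicity of $A,\overline{B}$, it therefore equals the value of $-\Cov(A,B)$ attained under countermonotonicity of $A,B$, namely $-\myFH^-(A,B)$. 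I would nonetheless prefer the algebraic substitution above, as it is self-contained and avoids relying on the attainment characterizations.
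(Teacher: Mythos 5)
Your proposal is correct, and your primary argument takes a genuinely different route from the paper. The paper proves the lemma structurally: by Proposition \ref{prop:characterizations}(i) and Lemma \ref{lemma:complement_dependence}(iii), under perfect negative dependence of $A$ and $B$ the covariance $\Cov(A,B)$ attains $\myFH^-(A,B)$ while $\Cov(A,\overline{B})$ attains $\myFH^+(A,\overline{B})$, and the sign-flip property (E) of the covariance (Proposition \ref{prop:covariance_properties}) then gives $\myFH^+(A,\overline{B}) = \Cov(A,\overline{B}) = -\Cov(A,B) = -\myFH^-(A,B)$. That is exactly the ``more conceptual'' alternative you sketch at the end, so you have in effect found both proofs. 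Your preferred direct computation --- cancelling $pq$ and reducing to $p - \min(p,1-q) = \max(0,p+q-1)$ via the identity $c - \min(a,b) = \max(c-a,c-b)$ --- is correct and entirely self-contained; it buys independence from the earlier attainment results, which is a virtue if one wants the lemma to stand alone. The paper's route buys brevity and reuses machinery already in place, and it makes transparent \emph{why} the identity holds (the Fr\'echet--Hoeffding bounds are attained values of a covariance that flips sign under complementation), but it does lean on Proposition \ref{prop:characterizations}(i) and property (E) having been established first. Either proof is acceptable here.
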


\begin{proof}[Proof of Lemma \ref{lemma:FHbounds_relation}]
	By part (i) of Proposition  \ref{prop:characterizations} and Lemma \ref{lemma:complement_dependence} (iii) it holds that $\myFH^+ (A,\overline{B})=\Cov(A,\overline{B})$ and $\myFH^- (A,B) = \Cov(A,B)$ if $A$ and $B$ are perfectly negatively dependent. By the second part of property (E) of $\Cov(A,B)$ (as established in Proposition \ref{prop:covariance_properties}) the claim follows.
\end{proof}

\begin{proof}[Proof of Proposition \ref{Cole_propriety}]
	(B): As the normalization terms are positive it holds that $\myC(A,B)=0$ if and only if $\Cov(A,B)=0$ and since $\Cov(A,B)$ fulfills property (B) by Proposition \ref{prop:covariance_properties}, $\myC(A,B)$ does so as well.
	
	(D): For fixed marginal probabilities $\myC(A,B)$ is a strictly monotonic function of $\Cov(A,B)$ and thus fulfills property (D) because $\Cov(A,B)$ does.
	
	(E): The normalization terms and $\Cov(A,B)$ are symmetric and thus $\myC(A,B)$ is as well, $\myC(A,B) = C(B,A)$. For the second symmetry property, we make a case distinction: For $\Cov(A,B) = 0$, the analogous property of covariance implies $0=\Cov(A,\overline{B})=-\Cov(A,B)$, which establishes the claim (as the normalization is nonzero). For $\Cov(A,B) > 0$, we have by the analogous property of $\Cov(A,B)$ and by Lemma \ref{lemma:FHbounds_relation} that $\myC(A,\overline{B}) = \frac{\Cov(A,\overline{B})}{\myFH^+ (A,\overline{B})}=  \frac{\Cov(A,B)}{\myFH^- (A,B)} = - \myC(A,B)$. The proof works analogously for $\Cov(A,B) < 0$.
\end{proof}

\begin{proof}[Proof of Lemma \ref{lemma:MSC_phi}]
	$\MSC(A,B)$ from \eqref{eq:mean_square_contingency} can be rewritten as 
	$$\MSC(A,B) = \phi(A,B)^2 \P(\overline{A}) \P(\overline{B}) + \phi(A,\overline{B})^2 \P(\overline{A}) \P(B) + \phi(\overline{A},B)^2 \P(A) \P(\overline{B}) + \phi(\overline{A},\overline{B})^2 \P(A) \P(B),$$
	which by the symmetry properties of the phi coefficient (Proposition \ref{prop:phi_properties}), $\phi(A,B) = \phi(\overline{A},\overline{B}) = - \phi(\overline{A},B) = - \phi(A,\overline{B})$, becomes
	$$\MSC(A,B) = \phi(A,B)^2 \left( \P(\overline{A}) \P(\overline{B}) + \P(\overline{A}) \P(B) + \P(A) \P(\overline{B}) +  \P(A) \P(B) \right)= \phi(A,B)^2.$$
\end{proof}

\begin{proof}[Proof of Proposition \ref{prop:properties_OR}]
	(A), (B) and (E) follow directly from Definition \ref{def:odds_ratio}, for (C) we additionally invoke part (iii) of Proposition \ref{prop:characterizations}. For (D) consider again Definition \ref{def:odds_ratio} and let the event probabilities $\P(A)$ and $\P(B)$ be fixed. Note that then $\P(\overline{A} \cap \overline{B})$ increases (decreases) and $\P(\overline{A} \cap B)$ and $\P(A \cap \overline{B})$ decrease (increase) if and only if $\P(A \cap B)$ increases (decreases). Thus, the numerator increases (decreases)  if and only if $\P(A \cap B)$ increases (decreases), which proves the claim. 
\end{proof}

%
%
%
%

\begin{proof}[Proof of Proposition \ref{prop:tetrachoric_propriety}]
	Property (A) is clear as the Pearson correlation coefficient lies between -1 and 1. (B) has been established in \citet[Proposition 6]{Ekstrom2011} already. (C) follows directly from the definition of $T(A,B)$. Theorem 5 in \citet{Ekstrom2011} implies that $T(A,B)$ and $\phi(A,B)$ are strictly monotonic functions of each other and as (D) holds for the phi coefficient (Proposition \ref{prop:phi_properties}), it thus also holds for tetrachoric correlation. (E) follows from symmetry properties of the bivariate normal distribution.
\end{proof}

\subsection{Proofs for the Asymptotic Results in Section \ref{sec:asymptotics}}
\label{sec:ProofsAsymptotics}

\begin{proof}[Proof of Proposition~\ref{prop:Asymptotics-Q}]
    As $\p$ and $\q$ are consistent estimators for $p$ and $q$ and as Assumption \ref{ass:BoundedAway} is in place, we have that $\mathbb{P}\big( \{\p \in \{0,1\}\} \cup  \{\q \in \{0,1\}\} \big) \to 0$ such that $\myQn$ is well-defined asymptotically with probability one.   
    Formally, for any set $D \in \F$, we have that
    \begin{align}
        \begin{aligned}
        \label{eqn:WellDefinedProbOne}
        \mathbb{P} \big( \{\sqrt{n} \big( \myQn - \myQ \big)  \in D \} \big) 
        &= 
        \mathbb{P} \big( \{\sqrt{n} \big( \myQn - \myQ \big)  \in D \} \cap  \{0 < \p, \q < 1\}  \big) \\
        &\qquad + \mathbb{P} \big( \{\sqrt{n} \big( \myQn - \myQ \big)  \in D \} \cap  \{ \p \in \{0, 1\} \text{ or } \q \in \{0, 1\} \} \big)  \\
        &= \mathbb{P} \big( \{\sqrt{n} \big( \myQn - \myQ \big) \in D \} \cap  \{0 < \p, \q < 1\}  \big) + o_\mathbb{P}(1),
        \end{aligned}
    \end{align}
    such that the $o_\mathbb{P}(1)$-part does not contribute to the asymptotic distribution of $\sqrt{n} \big( \myQn - \myQ \big)$ and we continue to work under the condition that $\{0 < \p, \q < 1\}$.

    
	We can write
	\begin{align}
		\label{eqn:Q_g}
		\myQn = g \big( \p, \q, \r \big), \quad \text{with} \quad
		g(\pa, \qa, \ra) := \frac{\ra - \pa \qa}{\ra (1-\pa-\qa+\ra) + (\qa-\ra)(\pa-\ra)},
	\end{align}
	which is differentiable at $(p, q, r)$ as $(q - r)(p - r) > 0$ and $r (1 - p - q + r) > 0$ by Assumption \ref{ass:BoundedAway}.
	The function $g$ has the (non-zero) Jacobian matrix of continuous partial derivatives 
	\begin{align}
		\label{eqn:QJacobian}
		J_g(\pa, \qa, \ra) = \frac{2}{\big(\pa (\qa - 2 \ra) + \ra (1 - 2\qa + 2\ra)\big)^2}
		\begin{pmatrix}
			(\qa - 1) \ra (\qa - \ra) \\
			(\pa - 1) \ra (\pa - \ra) \\
			- (\pa^2 \qa + \pa \qa (-1 + \qa - 2 \ra) + \ra^2)
		\end{pmatrix}^\top.
	\end{align}
	As the function $g$ is almost surely continuous at $(p, q, r)$ as $r < q$ and $r < p$, the continuous mapping theorem implies that $\myQn \toP \myQ$.
	
	We continue to show asymptotic normality for $\myQn$ by using Lemma \ref{lemma:CLTjoint} and applying the delta method. 
	For this, we use Lemma \ref{lem:denominator} (reported at the end of this section) to show that the denominator of the fraction in \eqref{eqn:QJacobian} is non-zero for all $(p, q, r)$ satisfying Assumption \ref{ass:BoundedAway}.

	Hence, by the delta method \cite[Theorem 3.1]{VanderVaart2000}, we can conclude that
		\begin{align*}
			\sqrt{n} \big( \myQn - \myQ \big) 
			&= \sqrt{n} \big( g(\p, \q, \r)  - g(p, q, r) \big) 
			=  J_g(p, q, r)^\top \sqrt{n} 
			\begin{pmatrix}
				\p - p \\ \q - q \\ \r - r
			\end{pmatrix} + o_\mathbb{P}(1).
		\end{align*}
	As $\sqrt{n} \big( \p - p, \; \q - q, \; \r - r \big)^\top \tod  \mathcal{N}(0, \Omega)$ by Lemma \ref{lemma:CLTjoint} under either Assumption \ref{ass:iid} or Assumption \ref{ass:Dependence}, the result follows with the help of Theorem 2.7 in \citet{VanderVaart2000}.
\end{proof}

\begin{proof}[Proof of Proposition \ref{prop:Asymptotics-C}]
    We start to note that Assumption \ref{ass:BoundedAway} implies that $\mp, \mn > 0$.
    With the same arguments as in the beginning of the proof of Proposition~\ref{prop:Asymptotics-Q}, we get that $\hmp, \hmn > 0$ with asymptotic probability one.
    With the same arguments as in \eqref{eqn:WellDefinedProbOne}, we continue to work under the condition that $\hmp, \hmn > 0$ in order to derive the distributional limit of $\sqrt{n} \big(\myCn - \myC \big)$.

	Straightforward calculations yield that
	\begin{align}
		\label{eqn:Cgeneral4terms}
		\sqrt{n} \big(\myCn - \myC \big)
		&=  \sqrt{n}  \mathds{1}\{\hsigma > 0\}  \frac{\hsigma}{\hmp}
		+ \sqrt{n} \mathds{1}\{\hsigma < 0\} \frac{ \hsigma}{\hmn} 
		- \sqrt{n}  \mathds{1}\{\sigma > 0\}  \frac{\sigma}{\mp} 
		- \sqrt{n}  \mathds{1}\{\sigma < 0\} \frac{\sigma}{\mn}.
	\end{align} 	
	We now distinguish the three cases that the true $\sigma$ is positive, negative, or zero:
	
	\vspace{0.3cm}
	\noindent
	$\bullet$
	\textbf{In the zero case}, let $\sigma = \Cov(A,B) = 0$ such that $\myC = 0$. We get from \eqref{eqn:Cgeneral4terms} that
	\begin{align*}
		\sqrt{n} \big(\myCn  - \myC \big)
		&= \sqrt{n}  \mathds{1}\{\hsigma > 0\}  \frac{ \hsigma}{\hmp}
		+ \sqrt{n}  \mathds{1}\{\hsigma < 0\} \frac{\hsigma}{\hmn} \\
		&=  \frac{1}{\hmp} \sqrt{n} \big( \hsigma - \sigma \big) \mathds{1}\{ \sqrt{n} (\hsigma - \sigma) > 0 \}
		+  \frac{1}{\hmn} \sqrt{n} \big( \hsigma - \sigma \big) \mathds{1}\{ \sqrt{n} (\hsigma - \sigma) < 0 \}.
	\end{align*}
	As $1/\hmp \toP 1/\mp$ and $1/\hmn \toP 1/\mn$ by the continuous mapping theorem, and  $\big( \Delta^\top \Omega \Delta \big)^{-1/2} \sqrt{n} (\hsigma - \sigma) \tod \mathcal{N}(0,1)$ with $\Delta = (-q, -p, 1)^\top$ by Lemma \ref{lemma:CLT_sigma},
	we get for some $Z \sim \mathcal{N}(0,1)$ that
	\begin{align*}
		\big(\Delta^\top \Omega \Delta \big)^{-1/2} \sqrt{n} \big(\myCn - \myC \big) \tod \frac{Z \, \mathds{1}\{Z > 0\}}{\mp} +  \frac{Z \, \mathds{1}\{Z < 0\}}{\mn}.
	\end{align*}
	
	\vspace{0.3cm}
	\noindent
	$\bullet$
	\textbf{In the positive case}, let $\sigma = \Cov(A,B) > 0$ such that $\myC > 0$. 
	Then,
	\begin{align}
		\begin{aligned} 
			\label{eqn:C+4terms}
			\sqrt{n} \big(\myCn - \myC \big)
			&=  \sqrt{n} \left( \frac{ \hsigma}{\hmp} \mathds{1}\{ \hsigma > 0 \} - \frac{\sigma}{\mp} \right)
			+ \sqrt{n}  \frac{\hsigma}{\hmn} \mathds{1}\{ \hsigma < 0 \} \\
			&= \sqrt{n}  \frac{\hsigma \mathds{1}\{ \hsigma > 0 \} - \sigma }{\hmp}   
			+ \sqrt{n}  \left( \frac{\sigma}{\hmp} - \frac{\sigma}{\mp} \right) +   \sqrt{n}   \frac{\hsigma}{\hmn} \mathds{1}\{\hsigma < 0 \} \\
			&= \sqrt{n}  \frac{\hsigma - \sigma}{\hmp} + \sqrt{n} \left( \frac{\sigma (\mp - \hmp )}{\mp \hmp} \right)  - \sqrt{n} \frac{\hsigma}{\hmp}  \mathds{1}\{ \hsigma \le 0 \} +   \sqrt{n}  \frac{\hsigma}{\hmn} \mathds{1}\{ \hsigma < 0 \}.
		\end{aligned}
	\end{align} 	
	We continue by showing that the first two terms on the right-hand side of \eqref{eqn:C+4terms} contribute to the asymptotic distribution, whereas the last two are $o_\mathbb{P}(1)$. 
	For the latter, we show $L_1$-convergence of the term $\sqrt{n} \mathds{1}\{ \hsigma \le 0 \}$ by using Chebyshev's inequality.
	Notice that $\mathbb{E}[\hsigma] = \frac{n-1}{n} \sigma$ such that
	\begin{align*}
		\mathbb{E} \big| \sqrt{n} \mathds{1}\{ \hsigma \le 0 \} \big|
		&= \mathbb{E} \left[ \sqrt{n} \mathds{1}\{ \sqrt{n} ( \hsigma -\sigma) \le  - \sqrt{n} \sigma \} \right] 
		= \sqrt{n} \, \mathbb{P} \big( \sqrt{n} ( \hsigma - \sigma) \le  - \sqrt{n} \sigma \big) \\ 
		&= \sqrt{n} \, \mathbb{P} \left( \sqrt{n} \left(  \hsigma - \frac{n-1}{n} \sigma \right) \le - \sqrt{n} \frac{n-1}{n} \sigma \right) \\
		&\le \sqrt{n} \, \mathbb{P} \left( \left| \sqrt{n} \left( \hsigma - \frac{n-1}{n} \sigma \right) \right| \ge  \sqrt{n} \frac{n-1}{n} \sigma \right) \\
		&\le \sqrt{n} \frac{\Var \left( \sqrt{n} \left( \hsigma - \frac{n-1}{n} \sigma \right) \right)}{n \frac{(n-1)^2}{n^2} \sigma^2} 
		= n^{-1/2} \frac{n^2}{(n-1)^2} \frac{\Var( \sqrt{n} \hsigma )}{\sigma^2} \to 0,
	\end{align*}
	as a CLT applies to $\sqrt{n} (\hsigma - \sigma)$, and hence its variance is bounded from above for large $n$.
	As $\hmp \toP \mp > 0$, $\hsigma \toP  \sigma$, and $L_1$-convergence implies convergence in probability, the continuous mapping theorem yields that $  \sqrt{n} \frac{\hsigma}{\hmp}  \mathds{1}\{ \hsigma \le 0 \}  \toP 0$.
	Equivalent arguments show that  $\sqrt{n}  \frac{ \hsigma}{\hmn} \mathds{1}\{ \hsigma < 0 \} \toP 0$.
	
	For the first two terms on the right-hand side of \eqref{eqn:C+4terms}, notice that $\hmp$ and $\mp$ are strictly positive. 
	Hence, we get that
	\begin{align*}
		\sqrt{n} \frac{ \hsigma - \sigma }{\hmp} + \sqrt{n} \left( \frac{\sigma ( \mp - \hmp)}{\mp \hmp} \right)
		&= \frac{1}{\hmp} \sqrt{n}  \big( \hsigma - \sigma \big) - \frac{\sigma}{\mp \hmp}  \sqrt{n}  \big( \hmp - \mp \big)   \\
		&= 
		\begin{pmatrix}
			\frac{1}{\hmp}  & 
			- \frac{\sigma}{\mp \hmp}
		\end{pmatrix} 
		\times
		\begin{pmatrix}
			\sqrt{n} \big( \hsigma - \sigma \big) \\
			\sqrt{n} \big( \hmp - \mp  \big) 
		\end{pmatrix}.
	\end{align*}
	We now derive the joint asymptotic distribution of $\sqrt{n} \big( \hsigma - \sigma, \; \hmp - \mp \big)^\top$, taking into account the two different cases $p = q$ and $p \not= q$.
	
	First, consider any $p \not= q$ that satisfies the assumptions of the theorem.
	Then, 
	\begin{align*}
		\begin{pmatrix}
			\sqrt{n} \big( \hsigma - \sigma \big) \\
			\sqrt{n} \big( \hmp - \mp  \big) 
		\end{pmatrix}
		= \sqrt{n} \big( h^+(\p, \q, \r) - h^+(p, q,r) \big),
		\quad \text{with} \quad 
		h^+(\pa, \qa, \ra) := 
		\begin{pmatrix}
			\ra - \pa \qa \\
			\min(\pa, \qa) - \pa \qa
		\end{pmatrix},
	\end{align*}
	which is continuously differentiable at $(p, q, r)$ as $p \not=q$ with Jacobian matrix at point $(p, q, r)$ 
	\begin{align*}
		J_{h^+} = 
		\begin{pmatrix}
			-q  & -p & 1 \\
			\mathds{1}(p < q) - q& \mathds{1}(q < p) - p & 0 
		\end{pmatrix}.
	\end{align*}
	
	As $\, \Omega^{-1/2} \sqrt{n} \big( \p - p, \; \q - q, \;  \r  - r \big)^\top \stackrel{d}{\to} \mathcal{N} \big(0, I_3 \big)$ holds by Lemma \ref{lemma:CLTjoint}, the delta method \cite[Theorem 3.1]{VanderVaart2000} yields that
	\begin{align*}
		\big( J_{h^+} \Omega J_{h^+}^\top \big)^{-1/2} 
		\begin{pmatrix}
			\sqrt{n} \big( \hsigma - \sigma \big) \\
			\sqrt{n} \big( \hmp - \mp  \big) 
		\end{pmatrix}
		\stackrel{d}{\to} \mathcal{N} \big( 0, I_2 \big).
	\end{align*}
	As 	$\begin{pmatrix} \frac{1}{\hmp}  &  - \frac{\sigma}{\mp \hmp} \end{pmatrix}  \toP	\begin{pmatrix} \frac{1}{\mp}  &  - \frac{\sigma}{\mp \mp} \end{pmatrix} = \Lambda^{+}$, the continuous mapping theorem implies that
	\begin{align*}
		\big( \Lambda^{+} J_{h^+} \Omega J_{h^+}^\top  (\Lambda^{+})^\top \big)^{-1/2} \sqrt{n}  \big( \myCn - \myC \big)
		\stackrel{d}{\to} \mathcal{N} \big( 0, 1 \big).
	\end{align*}
	
	Second, for the case $p = q$, we define the random vector
	$\mathbf{V}_n = (V_{1n},V_{2n},V_{3n},V_{4n})^\top := \sqrt{n} \big( \p - p, \,  \q - q, \, \p \q - p q, \,  \hsigma - \sigma \big)$ such that 
	$\mathbf{V}_n =  \sqrt{n} ( f(\p, \q, \r) - f(p, q, r))$ with 
	$f(\pa, \qa, \ra) = \big(\pa, \, \qa, \, \pa \qa, \, \ra - \pa \qa \big)$, which has Jacobian $J_f$ at point $(p,q,r)$, given in Proposition \ref{prop:Asymptotics-C}.
	The delta method then implies that $\mV_n \tod \mV := J_{f} \Omega^{1/2} \mZ$, where $\mZ \sim \mathcal{N}(0, I_3)$.
	Notice that the limiting distribution $\mV$ of $\mV_n$ is degenerate due to colinearities in $f$.
	
	As $p = q$, we further get that
	\begin{align*}
		\sqrt{n} \big( \hmp - \mp \big)  
		&= \mathds{1} \{\p \le \q \} \sqrt{n} \big( \p (1 - \q)  - p (1-q) \big)  + \mathds{1} \{ \p > \q \} \sqrt{n} \big( ( 1-  \p) \q  - (1-p) q \big)  \\
		&= \mathds{1} \{ \sqrt{n} (\p - p) \le \sqrt{n} (\q  - q) \} \Big[ \sqrt{n} (\p - p) - \sqrt{n} (\p \q   - p q)  \Big]  \\
		& + \mathds{1} \{ \sqrt{n} (\p - p) > \sqrt{n} ( \q  - q) \} \Big[ \sqrt{n} (\q - q) - \sqrt{n} ( \p  \q   - p q)  \Big]  \\
		& = \mathds{1} \{ V_{1n} \le V_{2n} \} \Big[ V_{1n} - V_{3n} \Big]  
		+ \mathds{1} \{ V_{1n} > V_{2n} \} \Big[ V_{2n} - V_{3n} \Big].
	\end{align*}
	Thus, by the continuous mapping theorem (for convergence in distribution)
	\begin{align*}
		\begin{pmatrix}
			\sqrt{n} \big( \hsigma - \sigma \big) \\
			\sqrt{n} \big( \hmp - \mp \big) 
		\end{pmatrix}
		&=
		\begin{pmatrix}
			V_{4n} \\ 
			\mathds{1} \{ V_{1n} \le V_{2n} \} \Big[ V_{1n} - V_{3n} \Big]  
			+ \mathds{1} \{ V_{1n} > V_{2n} \} \Big[ V_{2n} - V_{3n} \Big]
		\end{pmatrix} \\
		&\tod
		\begin{pmatrix}
			V_{4} \\ 
			\mathds{1} \{ V_{1} \le V_{2} \} \Big[ V_{1} - V_{3} \Big]  
			+ \mathds{1} \{ V_{1} > V_{2} \} \Big[ V_{2} - V_{3} \Big]
		\end{pmatrix} 
	\end{align*}
	and hence,
	\begin{align*}
		\sqrt{n} \big(\myCn - \myC \big) &\tod	\Lambda^+
		\begin{pmatrix}
			V_{4} \\ 
			\mathds{1} \{ V_{1} \le V_{2} \} \Big[ V_{1} - V_{3} \Big]  
			+ \mathds{1} \{ V_{1} > V_{2} \} \Big[ V_{2} - V_{3} \Big]
		\end{pmatrix} \\
		&=	\Lambda^+
		\begin{pmatrix}
			V_{4} \\ 
			\big[ V_{1} - V_{3} \big]  - \mathds{1} \{ V_{1} > V_{2} \} \big[ V_{1} - V_{2} \big]
		\end{pmatrix}.
	\end{align*}

	\vspace{0.3cm}
	\noindent
	$\bullet$
	\textbf{In the negative case}, let $\sigma = \Cov(A,B) < 0$ such that $\myC < 0$. 
	With arguments as above,
	\begin{align*}
		\sqrt{n} \big(\myCn - \myC \big)
		= \sqrt{n}  \frac{\hsigma - \sigma}{\hmn} + \sqrt{n} \left( \frac{\sigma ( \mn - \hmn )}{\mn \hmn} \right)  - \sqrt{n} \frac{ \hsigma}{\hmn}  \mathds{1}\{\hsigma \ge 0 \} +   \sqrt{n} \frac{\hsigma}{\hmp} \mathds{1}\{ \hsigma > 0 \}, 
	\end{align*}
	and the last two terms again converge to zero in probability.
	Hence,
	\begin{align*}
		\sqrt{n} \big(\myCn - \myC \big)
		&= \sqrt{n}  \frac{\hsigma - \sigma}{\hmn} + \sqrt{n} \left( \frac{\sigma ( \mn - \hmn )}{\mn \hmn} \right) + o_\mathbb{P}(1) \\
		&= 
		\begin{pmatrix}
			\frac{1}{\hmn}  & 
			- \frac{\sigma}{\mn \hmn}
		\end{pmatrix} 
		\times
		\begin{pmatrix}
			\sqrt{n} \big( \hsigma - \sigma \big) \\
			\sqrt{n} \big( \hmn - \mn \big) 
		\end{pmatrix} + o_\mathbb{P}(1).
	\end{align*}
	
	We first consider the case $p \not= 1 - q$.
	Then,
	\begin{align*}
		\begin{pmatrix}
			\sqrt{n} \big( \hsigma - \sigma \big) \\
			\sqrt{n} \big( \hmn - \mn \big) 
		\end{pmatrix}
		&= \sqrt{n} \big( h^-(\p, \q, \r) - h^-(p, q, r) \big), \qquad \text{with} \\
		h^-(\pa,\qa,\ra) &= 
		\begin{pmatrix}
			\ra - \pa\qa \\
			\pa\qa - \max(0, \pa + \qa -1) 
		\end{pmatrix}
	\end{align*}
	which is continuously differentiable at any point $p \not= 1 - q$ with Jacobian matrix $J_{h^-}$ (at point $(p,q,r)$).
	%
	As above, by the delta method and the continuous mapping theorem, we get that
	\begin{align*}
		\big( \Lambda^{-} J_{h^-} \Omega J_{h^-}^\top  (\Lambda^{-})^\top \big)^{-1/2} \sqrt{n} 
		\big( \myCn - \myC \big)
		\stackrel{d}{\to} \mathcal{N} \big( 0, 1 \big).
	\end{align*}
	
	Second, for the case $p = 1 - q$, we again use the random vector
	$\mV_n = \sqrt{n} \big( \p -p, \, \q - q, \, \p \q - p q, \,  \hsigma - \sigma \big)$ together with its (degenerate) limiting distribution $\mV = J_{f} \Omega^{1/2} \mZ$, where $\mZ \sim \mathcal{N}(0, I_3)$.
	We then get that
	\begin{align*}
		\sqrt{n} \big( \hmn - \mn \big)  
		&= \mathds{1} \{ \p \le 1- \q \}  \sqrt{n} \big( \p \q   - p q \big) \\
		&+  \mathds{1} \{ \p > 1- \q \}  \sqrt{n} \big( \p \q  - \p +(1- \q ) - p q + p - (1-q) \big)  \\
		&= \mathds{1} \{ \sqrt{n} (\p - p) \le - \sqrt{n} ( \q - q) \} \Big[  \sqrt{n} ( \p \q   - p q)  \Big]  \\
		& + \mathds{1} \{ \sqrt{n} (\p - p) > - \sqrt{n} ( \q - q) \} \Big[ \sqrt{n} ( \p \q   - p q) - \sqrt{n} (\p - p) - \sqrt{n} ( \q  - q)  \Big]  \\
		& = \mathds{1} \{ V_{1n} \le - V_{2n} \} V_{3n}
		+ \mathds{1} \{ V_{1n} > - V_{2n} \} \Big[ V_{3n} - V_{1n} - V_{2n} \Big].
	\end{align*}
	Thus, similar as above in the positive case, we get  that
	\begin{align*}
		\sqrt{n} \big(\myCn - \myC \big) &\tod	\Lambda^- 
		\begin{pmatrix}
			V_{4} \\ 
			\mathds{1} \{ V_{1} \le - V_{2} \} V_{3}
			+ \mathds{1} \{ V_{1} > - V_{2} \} \Big[ V_{3} - V_{1} - V_{2} \Big]
		\end{pmatrix} \\
		&= \Lambda^-  
		\begin{pmatrix}
			V_{4} \\ 	
			V_{3} - \mathds{1} \{ V_{1} > - V_{2} \} \big[  V_{1} + V_{2} \big]
		\end{pmatrix},
	\end{align*}
	which concludes this proof.
	
	%
	
\end{proof}

\begin{proof}[Proof of Proposition~\ref{prop:Asymptotics-Phi}]
    With the same arguments as in \eqref{eqn:WellDefinedProbOne} in the beginning of the proof of Proposition~\ref{prop:Asymptotics-Q}, we get that $\widehat{\phi}_n$ is well-defined with asymptotic probability one.
	Note that we can write
	\begin{align*}
		\widehat{\phi}_n = l \big( \p, \q, \r \big), \quad \text{with} \quad
		l(\pa,\qa,\ra) := \frac{\ra - \pa\qa}{\sqrt{\pa(1-\pa)\qa(1-\qa)}}.
	\end{align*}
	The function $l$ is differentiable at $(p, q, r)$ satisfying Assumption \ref{ass:BoundedAway} and
	has the Jacobian matrix 
	\begin{align*}
		J_l(\pa,\qa,\ra) = \frac{1}{\sqrt{\pa(1-\pa)\qa(1-\qa)}} \left( 		
		\begin{pmatrix}
			-\qa \\
			-\pa \\
			1
		\end{pmatrix}^\top 
		+ \frac{\pa\qa-\ra}{2\pa(1-\pa)\qa(1-\qa)}	
		\begin{pmatrix}
			(1-\pa)\qa(1-\qa) - \pa\qa(1-\qa)\\
			\pa(1-\pa)(1-\qa) - \pa(1-\pa)\qa \\
			0
		\end{pmatrix}^\top 
		\right).
	\end{align*}
	Furthermore, $J_l(p,q,r) \not= 0$ as its third dimension is obviously non-zero for all $p,q$ satisfying Assumption \ref{ass:BoundedAway}.
	Hence, by the delta method \cite[Theorem 3.1]{VanderVaart2000} and  Lemma \ref{lemma:CLTjoint} the claim follows as in the proof of Proposition~\ref{prop:Asymptotics-Q}.
\end{proof}

\subsection{Additional Lemmas}
\label{sec:AddLemmas}

The following result provides a central limit theorem for the sequence $\mathbf{W}_i = (X_i, Y_i, X_i Y_i)^\top$ that is invoked in the proofs of Propositions \ref{prop:Asymptotics-Q}, \ref{prop:Asymptotics-C}, \ref{prop:Asymptotics-Phi} and \ref{prop:Asymptotics-Z_Q}.

\begin{lemma} 
	\label{lemma:CLTjoint}
	Let Assumption \ref{ass:BoundedAway} and either Assumption \ref{ass:iid} or Assumption \ref{ass:Dependence} hold.
	Then, given $\Omega_n = \Var \left( n^{-1/2} \sum_{i=1}^n \mathbf{W}_i \right)$ and $\Omega = \lim_{n\to \infty} \Omega_n$,  it holds that
	\begin{align*}
		\sqrt{n} \, \Omega_n^{-1/2}
		\begin{pmatrix}
			\p - p \\ \q - q \\ \r  - r
		\end{pmatrix}
		\stackrel{d}{\to}
		\mathcal{N} \big(0, I_3 \big),
		\qquad \text{and} \qquad
		\sqrt{n} \, \Omega^{-1/2}
		\begin{pmatrix}
			\p - p \\ \q - q \\ \r  - r
		\end{pmatrix}
		\stackrel{d}{\to}
		\mathcal{N} \big(0, I_3 \big).
	\end{align*}
\end{lemma}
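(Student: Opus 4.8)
The plan is to recognize that $(\p, \q, \r)^\top = \frac{1}{n}\sum_{i=1}^n \mW_i$ is a sample average of the uniformly bounded random vectors $\mW_i = (X_i, Y_i, X_i Y_i)^\top$ with common mean $\E[\mW_i] = (p, q, r)^\top$, and then to invoke an appropriate central limit theorem in each of the two dependence regimes, followed by a standardization argument. First I would establish the unnormalized convergence $\sqrt{n}\big(\tfrac{1}{n}\sum_{i=1}^n \mW_i - (p,q,r)^\top\big) \tod \mathcal{N}(0, \Omega)$, and only afterwards pass to the two claimed standardized forms.

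Under Assumption \ref{ass:iid}, the classical multivariate Lindeberg--Lévy CLT applies directly; since the entries of $\mW_i$ lie in $\{0,1\}$, all moments exist, $\Omega = \Var(\mW_i)$ is finite, and $\Omega_n = \Omega$ for every $n$. The one genuinely substantive step here is to verify that $\Omega$ is nonsingular, which is needed for $\Omega^{-1/2}$ to exist. I would argue this directly: if $a^\top \mW_i$ were almost surely constant for some $a = (a_1, a_2, a_3)^\top \neq 0$, then evaluating $a_1 X_i + a_2 Y_i + a_3 X_i Y_i$ at the four atoms $(0,0), (0,1), (1,0), (1,1)$ of $(X_i, Y_i)$ would force $a_2 = a_1 = a_1 + a_2 + a_3 = 0$, i.e.\ $a = 0$. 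Crucially, Assumption \ref{ass:BoundedAway} (via $\max(0, p+q-1) < r < \min(p,q)$) guarantees that all four atoms carry positive probability, so no such degeneracy can occur and $\Omega$ is positive definite.

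Under Assumption \ref{ass:Dependence}, I would instead invoke a CLT for stationary ergodic adapted mixingales of size $-1$ (e.g.\ \citet{White2001}, Section 5.3), which yields $\sqrt{n}\big(\tfrac{1}{n}\sum_{i=1}^n \mW_i - (p,q,r)^\top\big) \tod \mathcal{N}(0, \Omega)$ with $\Omega$ the long-run covariance $\lim_{n} \Omega_n$. The moment conditions required by such CLTs are automatic here because the binary $\mW_i$ are uniformly bounded, and positive definiteness of $\Omega$ is imposed directly as part of Assumption \ref{ass:Dependence}.

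Finally, I would standardize. Since $\Omega$ is positive definite, the symmetric root inverse $\Omega^{-1/2}$ exists, and the linear-transformation property of the Gaussian gives $\sqrt{n}\,\Omega^{-1/2}\big(\tfrac1n\sum_{i=1}^n\mW_i - (p,q,r)^\top\big) \tod \mathcal{N}(0, \Omega^{-1/2}\Omega\,\Omega^{-1/2}) = \mathcal{N}(0, I_3)$, which is the second claim. For the first claim I would use $\Omega = \lim_{n} \Omega_n$ together with continuity of $M \mapsto M^{-1/2}$ on the cone of positive definite matrices to obtain $\Omega_n^{-1/2} \to \Omega^{-1/2}$; Slutsky's theorem applied to $\Omega_n^{-1/2}\cdot\sqrt{n}\big(\tfrac1n\sum_{i=1}^n\mW_i - (p,q,r)^\top\big)$ then delivers the same $\mathcal{N}(0, I_3)$ limit (in the iid case this step is vacuous since $\Omega_n = \Omega$). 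I expect the positive-definiteness verification of $\Omega$ in the iid case to be the main point requiring care, as the existence of $\Omega^{-1/2}$ and the entire standardization step hinge on it.
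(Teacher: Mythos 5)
Your proposal is correct and follows essentially the same route as the paper: a mixingale CLT from \citet{White2001} (applied via Cram\'er--Wold in the paper, which is what any multivariate version reduces to) for the dependent case, the observation that iid sequences satisfy the mixingale condition together with boundedness handling all moment requirements, and positive definiteness of $\Omega$ in the iid case secured by Assumption \ref{ass:BoundedAway}. Your explicit four-atom argument for non-collinearity of the components of $\mathbf{W}_i$ is a slightly more detailed justification of the non-degeneracy step that the paper asserts directly via the explicit form of $\Omega$ in \eqref{eq:variance_matrix_iid}.
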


\begin{proof}
	Given the non-degeneracy condition in Assumption \ref{ass:Dependence}, for any $\lambda \in \mathbb{R}^3$ with $\Vert \lambda \Vert = 1$, it holds that
	$\bar \sigma_n^2 := \lambda^\top \Var \left( n^{-1/2} \sum_{i=1}^n  \mathbf{W}_i  \right) \lambda  \to \bar \sigma^2 > 0$.~
	Hence, using the dependence condition in Assumption~\ref{ass:Dependence}, we can apply Theorem~5.16 in \cite{White2001} together with the Cramér-Wold theorem, and the claim follows.
	
	The claim under Assumption \ref{ass:iid} follows as independence implies the mixingale condition and as under Assumption \ref{ass:BoundedAway}, the components of $\mathbf{W}_i$ are not collinear almost surely, such that the covariance matrix  
	\begin{equation} 
		\label{eq:variance_matrix_iid}
		\Omega = \Var \left( \mathbf{W}_i  \right) =
		\begin{pmatrix}
			p (1-p) & r - p q & r (1-p) \\
			r - p q & q (1-q) & r (1-q) \\
			r (1-p) & r (1-q) & r (1-r)
		\end{pmatrix}
	\end{equation}
	is strictly positive definite.	
\end{proof}

For the test for the sign of $\sigma$ involved in the construction of confidence intervals for $\myC$ as explained in Subsection \ref{subsec:CIs} and in the proof of Proposition \ref{prop:Asymptotics-C}, we need a central limit theorem for the empirical covariance $\widehat{\sigma}_n$, which we present here.

\begin{lemma} 
	\label{lemma:CLT_sigma}
	Under Assumption \ref{ass:BoundedAway} and either Assumption \ref{ass:iid} or Assumption \ref{ass:Dependence}, it holds that
	$$\big( \Delta^\top \Omega \Delta \big)^{-1/2} \sqrt{n} (\hsigma - \sigma) \tod \mathcal{N}(0,1) 
	\qquad \text{ with } \qquad
	\Delta = (-q, -p, 1)^\top.$$
\end{lemma}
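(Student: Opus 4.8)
The plan is to recognize this as a one-dimensional delta-method statement built directly on the joint central limit theorem of Lemma \ref{lemma:CLTjoint}. Write $\hsigma = s(\p, \q, \r)$ and $\sigma = s(p, q, r)$ for the map
\begin{equation*}
	s(\pa, \qa, \ra) := \ra - \pa \qa,
\end{equation*}
which is a polynomial and hence continuously differentiable on all of $\R^3$, in particular at the point $(p, q, r)$ admitted by Assumption \ref{ass:BoundedAway}. The gradient of $s$ is $\nabla s(\pa, \qa, \ra) = (-\qa, -\pa, 1)$, so that $\nabla s(p, q, r)^\top = (-q, -p, 1)^\top = \Delta$. This is the entire source of the vector $\Delta$ in the statement, and there is nothing to compute beyond reading off these partial derivatives.

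With this identification, I would first invoke Lemma \ref{lemma:CLTjoint}, which gives, under either Assumption \ref{ass:iid} or Assumption \ref{ass:Dependence}, the joint convergence
\begin{equation*}
	\sqrt{n}\, \big( \p - p,\; \q - q,\; \r - r \big)^\top \tod \mathcal{N}(0, \Omega).
\end{equation*}
Applying the delta method \cite[Theorem 3.1]{VanderVaart2000} to $s$ then yields
\begin{equation*}
	\sqrt{n}\, (\hsigma - \sigma) = \sqrt{n}\, \big( s(\p, \q, \r) - s(p, q, r) \big) \tod \mathcal{N}\big(0,\, \Delta^\top \Omega \Delta\big),
\end{equation*}
where the remainder is $o_\P(1)$ by the same linearization as in the proof of Proposition \ref{prop:Asymptotics-Q}.

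The only point requiring verification — and the closest thing to an obstacle, though it is immediate — is that the limiting variance $\Delta^\top \Omega \Delta$ is strictly positive, so that the normalization $(\Delta^\top \Omega \Delta)^{-1/2}$ is well defined. This holds because $\Delta \neq 0$ (its third coordinate equals $1$) and $\Omega$ is positive definite: in the iid case this is the positive definiteness of the matrix in \eqref{eq:variance_matrix_iid} guaranteed by Assumption \ref{ass:BoundedAway}, and in the time series case it is the strict positive definiteness of the long-run covariance imposed in Assumption \ref{ass:Dependence}. Dividing through by $(\Delta^\top \Omega \Delta)^{1/2}$ and invoking the continuous mapping theorem (or Theorem 2.7 in \citet{VanderVaart2000}) then delivers
\begin{equation*}
	\big( \Delta^\top \Omega \Delta \big)^{-1/2} \sqrt{n}\, (\hsigma - \sigma) \tod \mathcal{N}(0,1),
\end{equation*}
which is the claim.
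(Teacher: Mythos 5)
Your proof is correct and follows essentially the same route as the paper's: identify $\hsigma$ as the image of $(\p,\q,\r)$ under the differentiable map $(\pa,\qa,\ra)\mapsto \ra-\pa\qa$ with gradient $\Delta$, then apply Lemma \ref{lemma:CLTjoint} and the delta method. Your additional check that $\Delta^\top\Omega\Delta>0$ is a sensible (and correct) refinement that the paper leaves implicit.
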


\begin{proof}[Proof of Lemma \ref{lemma:CLT_sigma}]
	Note that $\sigma = r - pq$ is a differentiable function of $p,q$ and $r$ (as the three partial derivatives are continuous) with gradient $ \Delta^\top = (-q, -p, 1)$.  Thus, the claim follows by using Lemma \ref{lemma:CLTjoint} and the delta method \citep[Theorem 3.1]{VanderVaart2000}.
\end{proof}

\begin{lemma}
	\label{lem:denominator}
	The expression $\pa(\qa-2\ra)+\ra(1-2\qa+2\ra)$ is non-zero in $D:= \big\{ (\pa,\qa,\ra)\in (0,1)^3 \mid \max(0,\pa+\qa-1) < \ra < \min(\pa,\qa) \big\}$.
\end{lemma}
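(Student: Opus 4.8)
The plan is to recognize the given expression as the denominator appearing in Yule's $\myQ$ and its estimator, and then to read off its sign directly from the defining inequalities of $D$. Concretely, I would first establish the algebraic identity
$$\pa(\qa-2\ra)+\ra(1-2\qa+2\ra) = \ra(1-\pa-\qa+\ra) + (\qa-\ra)(\pa-\ra),$$
which is verified by expanding both sides (both equal $\pa\qa - 2\pa\ra - 2\qa\ra + \ra + 2\ra^2$). The right-hand side is precisely the denominator driving the normalization of $\myQ$ in \eqref{eq:Q_rewritten} and of the estimator $\myQn$ in \eqref{eqn:Qest}, which is the reason this non-vanishing is needed in the proof of Proposition \ref{prop:Asymptotics-Q}.

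With the identity in hand, the conclusion follows by a sign argument. On $D$ the strict chain $\max(0,\pa+\qa-1) < \ra < \min(\pa,\qa)$ unpacks into the four strict positivity statements
$$\ra > 0, \qquad 1-\pa-\qa+\ra > 0, \qquad \pa-\ra > 0, \qquad \qa-\ra > 0,$$
where $1-\pa-\qa+\ra>0$ comes from $\ra > \pa+\qa-1$, and the last two from $\ra < \min(\pa,\qa)$. Hence both products $\ra(1-\pa-\qa+\ra)$ and $(\qa-\ra)(\pa-\ra)$ are strictly positive, so their sum is strictly positive and in particular non-zero on all of $D$.

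I do not expect any genuine obstacle here: the only substantive step is spotting and checking the identity, which is a routine expansion, after which the positivity of each factor is immediate from the definition of $D$. It is worth emphasizing that the result is actually the stronger statement that the expression is strictly \emph{positive} throughout $D$, which is exactly what is required for $J_g$ in Proposition \ref{prop:Asymptotics-Q} to be well-defined and non-degenerate under Assumption \ref{ass:BoundedAway}.
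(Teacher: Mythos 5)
Your proof is correct, and it takes a genuinely different and considerably shorter route than the paper's. You verify the identity
\begin{equation*}
\pa(\qa-2\ra)+\ra(1-2\qa+2\ra) = \ra(1-\pa-\qa+\ra) + (\qa-\ra)(\pa-\ra),
\end{equation*}
which is a correct expansion (both sides equal $\pa\qa - 2\pa\ra - 2\qa\ra + \ra + 2\ra^2$), and then observe that on $D$ each of the four factors $\ra$, $1-\pa-\qa+\ra$, $\qa-\ra$, $\pa-\ra$ is strictly positive, so the sum of the two products is strictly positive. The paper instead treats the expression as a quadratic in $\ra$, computes its two roots explicitly, and runs an elaborate case analysis over the $(\pa,\qa)$-square (a disk where the roots are complex, plus four regions where the real roots are shown, via monotonicity arguments in $\pa$, to violate the Fr\'echet--Hoeffding constraints defining $D$). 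Your argument buys brevity and the stronger conclusion of strict positivity essentially for free; it also dovetails with the paper's own observation in the proof of Proposition \ref{prop:Asymptotics-Q} that the denominator of $g$ is positive because $(q-r)(p-r)>0$ and $r(1-p-q+r)>0$ --- the Jacobian denominator is exactly the square of that same quantity, which your identity makes explicit. The paper's longer route yields additional information about where the roots of the quadratic actually lie, but none of that is needed for the lemma.
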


\begin{figure}[tb]
	\centering
	\begin{tikzpicture}
		
		\draw[step=1cm, gray, very thin] (0,0) grid (10,10);
		
		\draw (0,0) rectangle (10,10);
		
		\fill[blue, opacity=0.7] (0,0) rectangle (5,5);
		\fill[orange, opacity=0.7] (0,5) rectangle (5,10);
		\fill[yellow, opacity=0.7] (5,0) rectangle (10,5);
		\fill[cyan, opacity=0.7] (5,5) rectangle (10,10);
		
		\fill[red] (5, 5) circle (5cm);
		\draw[red!50!black, line width=1mm] (5, 5) circle (5cm);
		
		\draw[very thick, -] (5,10) -- (10,5) node[midway, sloped, above] {$\pa+\qa\ge \frac{3}{2}$};
		\draw[very thick, -] (0,5) -- (5,10) node[midway, sloped, above] {$\qa\ge \frac{1}{2}+\pa$};
		\draw[very thick, -] (5,0) -- (0,5) node[midway, sloped, below] {$\pa+\qa\le \frac{1}{2}$};
		\draw[very thick, -] (5,0) -- (10,5) node[midway, sloped, below] {$\qa\le \pa-\frac{1}{2}$};

		\draw[very thick, ->] (0,0) -- (10,0) node[anchor=north west] {\textbf{p axis}};
		\draw[very thick, ->] (0,0) -- (0,10) node[anchor=south east] {\textbf{q axis}};
		
		\foreach \p in {1,2,3,4,5,6,7,8,9}
		\draw[thin] (\p cm,1pt) -- (\p cm,-1pt) node[anchor=north] {$\boldsymbol{0.\p}$};
		\foreach \q in {1,2,3,4,5,6,7,8,9}
		\draw[thin] (1pt,\q cm) -- (-1pt,\q cm) node[anchor=east] {$\boldsymbol{0.\q}$};
		
	\end{tikzpicture}
	\caption{Graphical illustration for the proof of Lemma \ref{lem:denominator}. 
		The interior of the red disk contains complex roots (in $\ra$) and the real roots in the remaining four colored areas are treated with a case distinction, showing that in each case, the roots violate a condition of $\mathbb{D}$: 
		The roots in the blue and cyan areas are shown to be smaller than $\max(0, \pa+\qa-1)$ and the roots in the orange and yellow regions are shown to be larger than $\min(\pa,\qa)$.}
	\label{fig:proofsupport}
\end{figure}
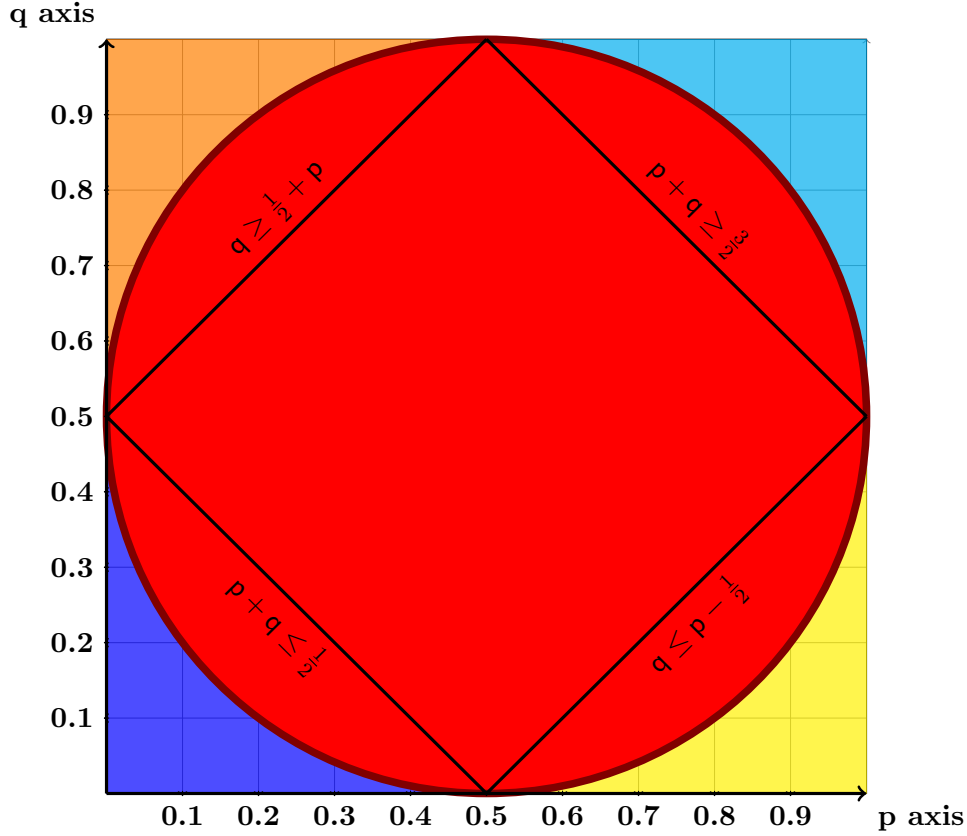

\begin{proof}[Proof of Lemma \ref{lem:denominator}]
	Straightforward calculations show that the function $\ra \mapsto \pa(\qa-2\ra)+\ra(1-2\qa+2\ra)$ has the two (possibly complex) roots 
	\begin{align*}
		\ra_1
		=&\ \frac{\sqrt{\qa(\qa-1)+\pa(\pa-1)+\frac{1}{4}}}{2}+\frac{\pa}{2}+\frac{\qa}{2}-\frac{1}{4},
	\end{align*}
	and 
	\begin{align*}
		\ra_2
		=&\ -\frac{\sqrt{\qa(\qa-1)+\pa(\pa-1)+\frac{1}{4}}}{2}+\frac{\pa}{2}+\frac{\qa}{2}-\frac{1}{4}.
	\end{align*}
	
	For $\ra_1$ and $\ra_2$, we get complex solutions with a non-zero imaginary part for any $\qa(\qa-1)+\pa(\pa-1)+\frac{1}{4}<0$, which is equivalent to $(\pa,\qa) \in U := \{x\in \mathbb{R}^2:\|x-(0.5, 0.5) \|_2 <  0.5\}$. 
	Here, $U$ is the open disk centered around $(0.5, 0.5)$ with a radius of $0.5$, shown in red in Figure \ref{fig:proofsupport}. 
	We do not have to consider these complex solutions as they are not in $\mathbb{D}$.
	We further define the closure of the open disk $U$ as $\overline{U}$, and its border as $\partial U := \overline{U} \backslash U$.
	
	For $(\pa,\qa) \in [0,1]^2\backslash U$, we distinguish the following four cases that correspond to the four colored areas in Figure  \ref{fig:proofsupport}.
	In each case, we show that the possible roots $\ra_1$ and $\ra_2$ are not in $\mathbb{D}$.
	\medskip
	
	\noindent $\bullet$ Case 1: Let $(\pa,\qa) \in [0, 0.5]^2\backslash U$ (dark blue area in Figure \ref{fig:proofsupport}). 
	For $\pa=0$, i.e., for any point on the left edge of the blue area, we get 
	\begin{align*}
		\ra_1=\frac{\sqrt{\qa(\qa-1)+\frac{1}{4}}}{2}+\frac{\qa}{2}-\frac{1}{4}=\frac{\sqrt{(\qa-\frac{1}{2})^2}}{2}+\frac{\qa}{2}-\frac{1}{4}= \frac{\frac{1}{2}-\qa}{2}+\frac{\qa}{2}-\frac{1}{4}=0.
	\end{align*}
	Furthermore, for any $(\pa,\qa) \in [0, 0.5]^2\backslash \overline{U}$, it holds that 
	\begin{align*}
		\frac{\partial \ra_1}{\partial \pa}=\frac{2\left(\pa+\sqrt{\qa(\qa-1)+\pa(\pa-1)+\frac{1}{4}}\right)-1}{4\sqrt{\qa(\qa-1)+\pa(\pa-1)+\frac{1}{4}}}\le 0,
	\end{align*}
	since $\pa+\sqrt{\qa(\qa-1)+\pa(\pa-1)+\frac{1}{4}} \le  \pa+\sqrt{\pa(\pa-1)+\frac{1}{4}} = \pa+|\pa-\frac{1}{2}|=\pa-\pa+\frac{1}{2}=\frac{1}{2}$. 
	Hence, when moving from the left edge where $\pa=0$ towards the right in the blue area, $\ra_1$ decreases (weakly) from a starting value of zero such that we can conclude that $\ra_1 \le 0$ for all $(\pa,\qa) \in [0, 0.5]^2\backslash \overline{U}$.
	
	We treat the boundary case $(\pa,\qa) \in [0, 0.5]^2\cap \partial U$ separately as the above partial derivative is not defined there: Here, it holds that $\qa(\qa-1)+\pa(\pa-1)+\frac{1}{4} = 0$ such that $\ra_1 = \frac{\pa+\qa}{2}-\frac{1}{4}\le 0$.
	
	Hence, for any $(\pa,\qa) \in [0, 0.5]^2 \backslash U$, we conclude that $\ra_1 \le 0 = \max(0,\pa+\qa-1)$.
	As $\ra_2 \le \ra_1$ (for any real-valued solution in $\mathbb{D}$) the same claim follows directly for $\ra_2$.
	\medskip 
	
	\noindent $\bullet$ 
	Case 2: Let $(\pa,\qa) \in \big( [0, 0.5] \times [0.5, 1] \big) \backslash U$ (orange area in Figure \ref{fig:proofsupport}). For $\pa=0$, we get 
	\begin{align*}
		\ra_2 = -\frac{\sqrt{\qa(\qa-1)+\frac{1}{4}}}{2}+\frac{\qa}{2}-\frac{1}{4} = -\frac{\sqrt{(\qa-\frac{1}{2})^2}}{2}+\frac{\qa}{2}-\frac{1}{4}= \frac{\frac{1}{2}-\qa}{2}+\frac{\qa}{2}-\frac{1}{4} = 0 = \pa =  \min(\pa,\qa).
	\end{align*}
	Furthermore, for any $(\pa,\qa) \in \big( [0, 0.5] \times [0.5, 1] \big) \backslash \overline{U}$, it holds that 
	\begin{align}
		\label{eqn:Derivative_r2_largerone}
		\frac{\partial \ra_2}{\partial \pa}=\frac{1-2\pa}{4\sqrt{\pa(\pa-1)+\qa(\qa-1)+\frac{1}{4}}}+\frac{1}{2}\ge 1
	\end{align}
	because $$\frac{1-2\pa}{4\sqrt{\pa(\pa-1)+\qa(\qa-1)+\frac{1}{4}}}-\frac{1}{2}=\frac{1-2\left(\pa+\sqrt{\pa(\pa-1)+\qa(\qa-1)+\frac{1}{4}}\right)}{4\sqrt{\pa(\pa-1)+\qa(\qa-1)+\frac{1}{4}}}$$
	and $\pa+\sqrt{\qa(\qa-1)+\pa(\pa-1)+\frac{1}{4}}\le \pa+\sqrt{\pa(\pa-1)+\frac{1}{4}}=\pa+|\pa-\frac{1}{2}|=\pa-\pa+\frac{1}{2}=\frac{1}{2}$.  
	Hence, when moving from the left edge ($\pa=0$) towards the right in the orange area, $\ra_2$ grows faster than the upper bound $\pa=\min(\pa,\qa)$ in $\mathbb{D}$, such that we can conclude that $\ra_2 \ge \min(\pa,\qa)$ for all $(\pa,\qa) \in \big( [0, 0.5] \times [0.5, 1] \big) \backslash \overline{U}$.
	
	We again treat the boundary case $(\pa,\qa) \in \big( [0, 0.5] \times [0.5, 1] \big) \cap \partial U$ separately. Here, it holds that $\qa\ge\frac{1}{2}+\pa$ such that $\ra_2=\frac{\pa+\qa}{2}-\frac{1}{4}\ge \frac{\pa+\frac{1}{2}+\pa}{2}-\frac{1}{4}=\pa=\min(\pa,\qa)$.
	
	Hence, for any $(\pa,\qa) \in \big( [0, 0.5] \times [0.5, 1] \big) \backslash U$, we conclude that $\ra_2 \ge \min(\pa,\qa)$.
	As $\ra_1 \ge \ra_2$ (for any real-valued solutions in $\mathbb{D}$) the same claim holds for $\ra_1$.
	\medskip
	
	\noindent $\bullet$  
	Case 3: Let $(\pa,\qa) \in \big( [0.5,1] \times [0,0.5] \big) \backslash U$ (yellow area in Figure \ref{fig:proofsupport}).
	The argument in this case is equivalent to Case 2 by simply interchanging $\pa$ and $\qa$.
	
	%
	%
	%
	%
	
	\medskip 
	\noindent $\bullet$  
	Case 4: $(\pa,\qa) \in [0.5, 1]^2 \backslash U$ (cyan area in Figure \ref{fig:proofsupport}). 
	For the right edge where $\pa=1$, we get 
	\begin{align*}
		\ra_1=\frac{\sqrt{\qa(\qa-1)+\frac{1}{4}}}{2}+\frac{\qa}{2}+\frac{1}{4}=\frac{\sqrt{(\qa-\frac{1}{2})^2}}{2}+\frac{\qa}{2}+\frac{1}{4}=\frac{\qa-\frac{1}{2}}{2}+\frac{\qa}{2}+\frac{1}{4}=\qa=\max(0, \pa+\qa-1).
	\end{align*}
	Furthermore, for $(\pa,\qa) \in [0.5,1]^2 \backslash \overline{U}$, it holds that 
	\begin{align*}
		\frac{\partial \ra_1}{\partial \pa}=\frac{2\pa-1}{4\sqrt{\pa(\pa-1)+\qa(\qa-1)+\frac{1}{4}}}+\frac{1}{2}\ge 1,
	\end{align*}
	since, similar to \eqref{eqn:Derivative_r2_largerone}, $$\frac{2\pa-1}{4\sqrt{\pa(\pa-1)+\qa(\qa-1)+\frac{1}{4}}}-\frac{1}{2}=\frac{2\left(\pa-\sqrt{\qa(\qa-1)+\pa(\pa-1)+\frac{1}{4}}\right)-1}{4\sqrt{\qa(\qa-1)+\pa(\pa-1)+\frac{1}{4}}}$$
	and $\pa-\sqrt{\qa(\qa-1)+\pa(\pa-1)+\frac{1}{4}}\ge \pa-\sqrt{\pa(p-1)+\frac{1}{4}}=\pa-|\pa-\frac{1}{2}|=\pa-\pa+\frac{1}{2}=\frac{1}{2}$. 
	Hence, when moving from the right edge ($\pa=1$) towards the left in the cyan area, $\ra_1$ decreases faster than the lower bound $\max(0, \pa+\qa-1)$ in $\mathbb{D}$, such that we can conclude that $\ra_1 \le \max(0, \pa+\qa-1)$ for all $(\pa,\qa) \in [0.5, 1]^2 \backslash \overline{U}$.
	
	For the boundary $(\pa,\qa) \in [0.5,1]^2 \cap \partial U$, it holds that $\ra_1=\frac{\pa+\qa}{2}-\frac{1}{4}\le \pa+\qa-1 \le \max(0,\pa+\qa-1)$.
	
	Since it holds that $\ra_2\le \ra_1$, the same claim follows for $\ra_2$ directly.
	\bigskip
	
	Hence, in all four cases above, the possible roots $\ra_1$ and $\ra_2$ are not in $\mathbb{D}$ such that we can conclude that the expression $\pa(\qa-2\ra)+\ra(1-2\qa+2\ra)$ is non-zero for all $(\pa,\qa,\ra) \in \mathbb{D}$.
\end{proof}

\FloatBarrier
\section{Additional Tables and Graphs} 
\label{sec:additional_graphs}

\begin{table}[h!] 
		\centering
		\begin{tabular}{cccc}
			\toprule
			measure & value& 90\% Fisher-CI & 90\% Standard-CI   \\
			\midrule 
			Phi coefficient & 0.23 & $[0.16,0.30]$ & $[0.16,0.30]$ \\
			Yule's $\myQ$ & 0.86 & $[0.59,0.96]$ & $[0.70,1.02]$\\
			Cole's coefficient & 0.83 & $[0.44,0.96]$ & $[0.61,1.00]$ \\
			\bottomrule
		\end{tabular}
	\caption{Point estimators and confidence intervals with (Fisher-CI) and without (Standard-CI) the Fisher transformation  for the data from Table \ref{tab:smallpox_contingency_table}.}
	\label{tab:smallpox_CIs}
\end{table}

\begin{figure}[h!] 
	\centering
	\includegraphics[width=1\linewidth]{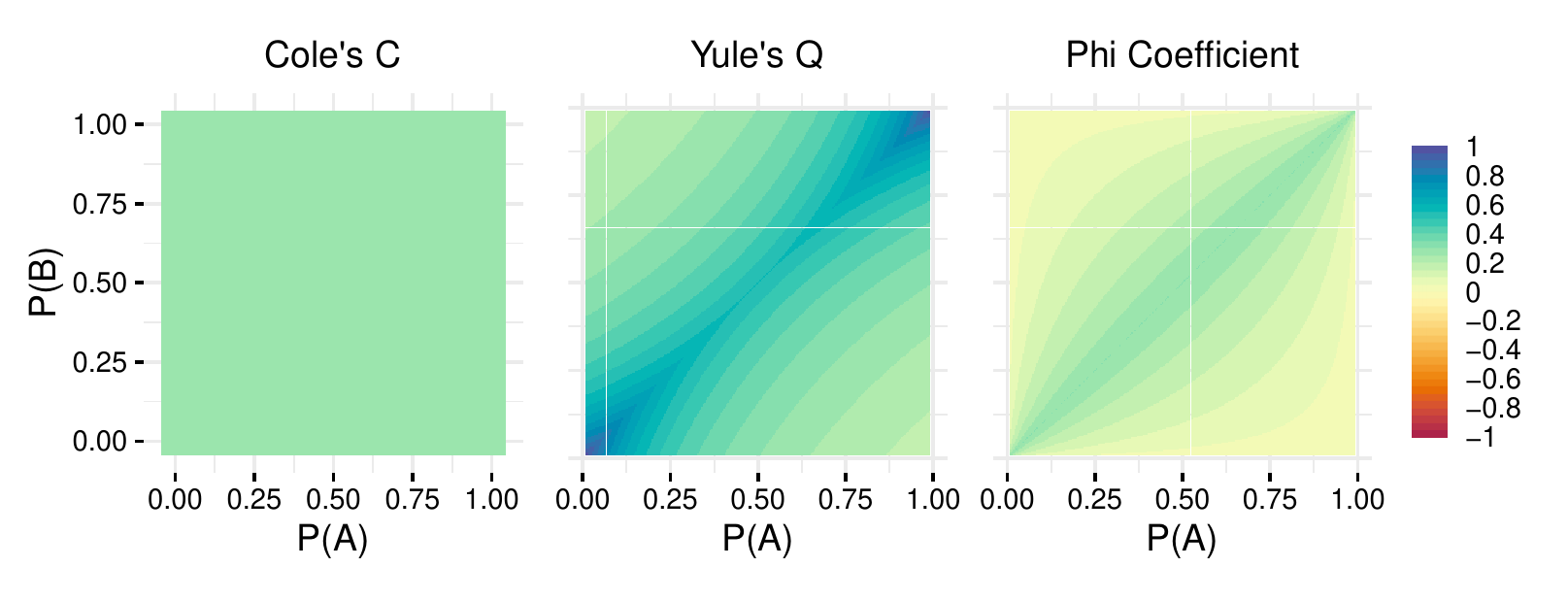}
	\caption{This figure depicts Cole's $\myC$ with a fixed value of 0.3 and the corresponding values of Yule's $\myQ$ and the phi coefficient for all combinations of marginal event probabilities $\P(A)$ and $\P(B)$.}
	\label{fig:colecomparison03}
\end{figure}

\begin{figure}[h!] 
	\centering
	\includegraphics[width=1\linewidth]{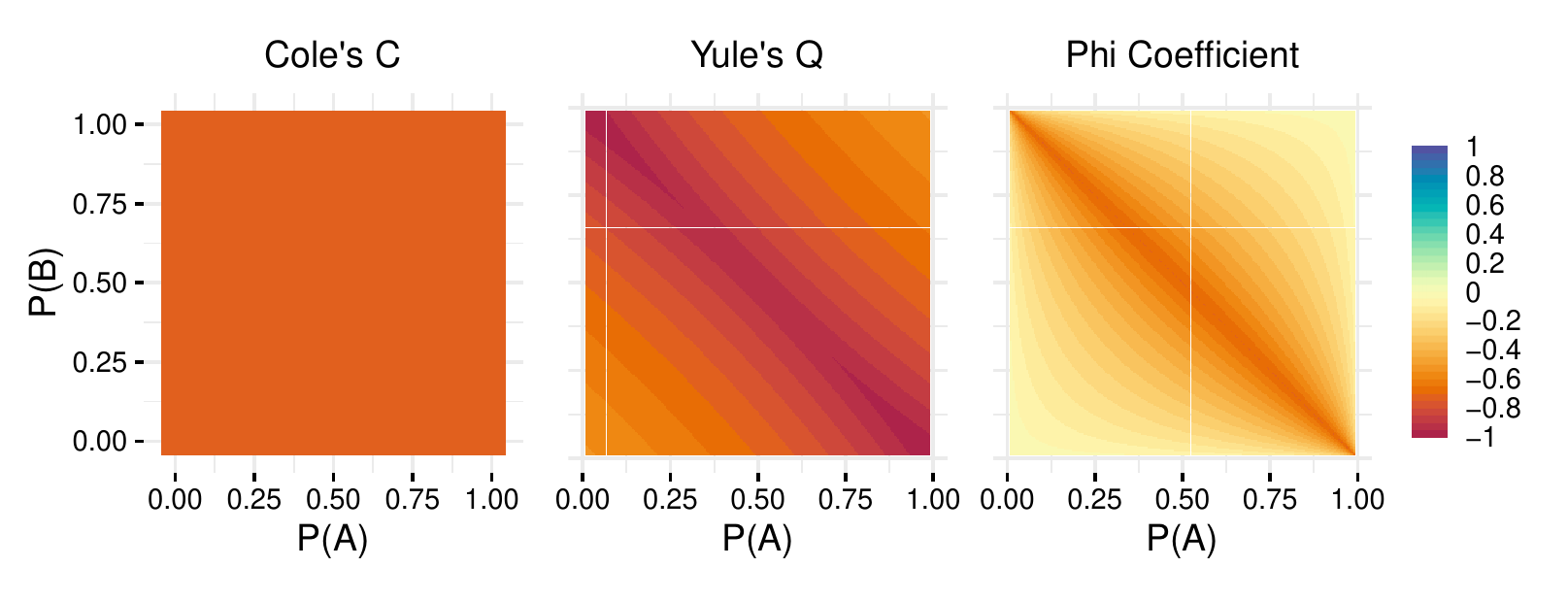}
	\caption{This figure depicts Cole's $\myC$ with a fixed value of -0.7 and the corresponding values of Yule's $\myQ$ and the phi coefficient for all combinations of marginal event probabilities $\P(A)$ and $\P(B)$.}
	\label{fig:colecomparison-07}
\end{figure}

\begin{table}[h!] 
	\centering
	\begin{threeparttable}
		\begin{tabular}{cc}
			\toprule\toprule
			\textbf{drug} & \textbf{sample size} \\ 
			\midrule
			alcohol (ALC) & $12{,}922$ \\
			cigarettes (CIG) & $13{,}176$ \\
			amphetamines (AMP) & $13{,}147$\\
			tranquilizers (TRQ) & $13{,}028$\\
			narcotics (NAR) & $12{,}883$\\
			LSD (LSD) & $10{,}980$\\
			psychedelics$^*$ (PSY) & $10{,}975$\\
			sedatives (SED) & $13{,}088$\\
			MDMA (MDM) & $\;\,6{,}552$\\
			cocaine (COK) & $13{,}024$\\
			crack (CRA) & $12{,}505$\\
			methamphetamines (MET) & $\;\,4{,}313$\\
			heroin (HER) & $13{,}005$\\
			\bottomrule
		\end{tabular}
		\begin{tablenotes}
			\item[*] without LSD
		\end{tablenotes}
	\end{threeparttable}
	\caption{Effective sample size for the respective drug and marijuana. The difference to the univariate sample sizes arises due to the necessity to delete data points whenever at least one of the observations is NA.} 
	\label{tab:samplesizes_marijuana}
\end{table}

\begin{figure}[h!]  
	\centering
	\includegraphics[width=1\linewidth]{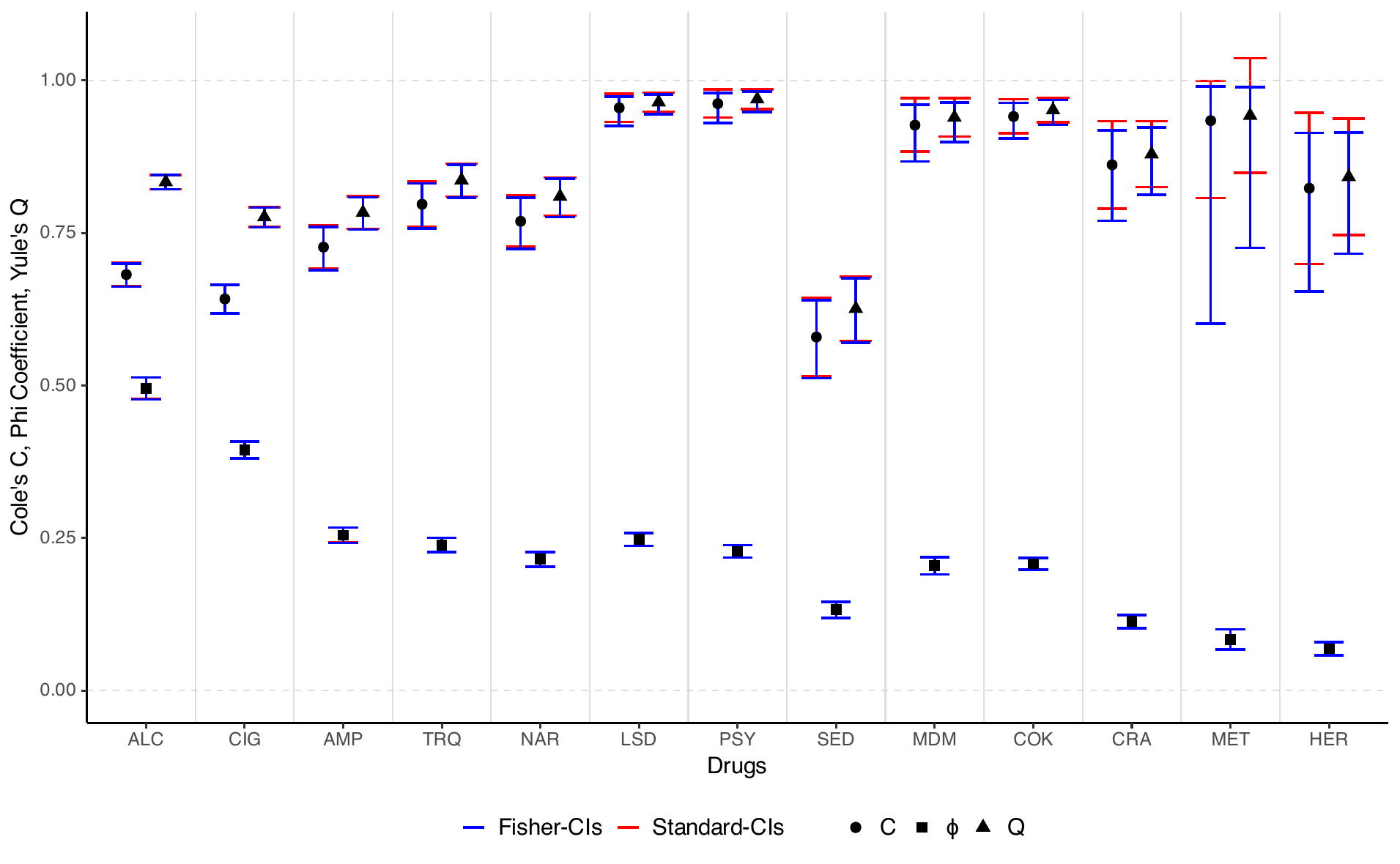}
	\caption{Point estimates $\myCn$, $\widehat{\phi}_n$ and $\myQn$ for marijuana compared with all other drugs and corresponding $90\%$ confidence intervals based on the Fisher transformation (blue) and on standard inference (red).}
	\label{fig:case_study_CIs_mari_comparison}
\end{figure}

\begin{table}[h!] 
	\centering
	\begin{threeparttable}
		\begin{tabular}{cc}
			\toprule\toprule
			\textbf{drug} & \textbf{sample size} \\ 
			\midrule
			alcohol (ALC) &  $4{,}220$\\
			marijuana (MAR) &  $4{,}313$   \\
			cigarettes (CIG) &  $4{,}372$\\
			amphetamines (AMP) &$4{,}354$ \\
			tranquilizers (TRQ) & $4{,}414$\\
			narcotics (NAR) & $4{,}423$\\
			LSD (LSD) &$2{,}163$ \\
			psychedelics$^*$ (PSY) &$2{,}170$ \\
			sedatives (SED) & $4{,}413$\\
			MDMA (MDM) & $4{,}410$\\
			cocaine (COK) &$4{,}408$ \\
			crack (CRA) &$4{,}428$ \\
			heroin (HER) & $4{,}421$\\
			\bottomrule
		\end{tabular}
		\begin{tablenotes}
			\item[*] without LSD
		\end{tablenotes}
	\end{threeparttable}
	\caption{Effective sample size for the respective drug and methamphetamines. The difference to the univariate sample sizes arises due to the necessity to delete data points whenever at least one of the observations is NA.} 
	\label{tab:samplesizes_meth}
\end{table}

\begin{figure}[h!] 
	\centering
	\includegraphics[width=1\linewidth]{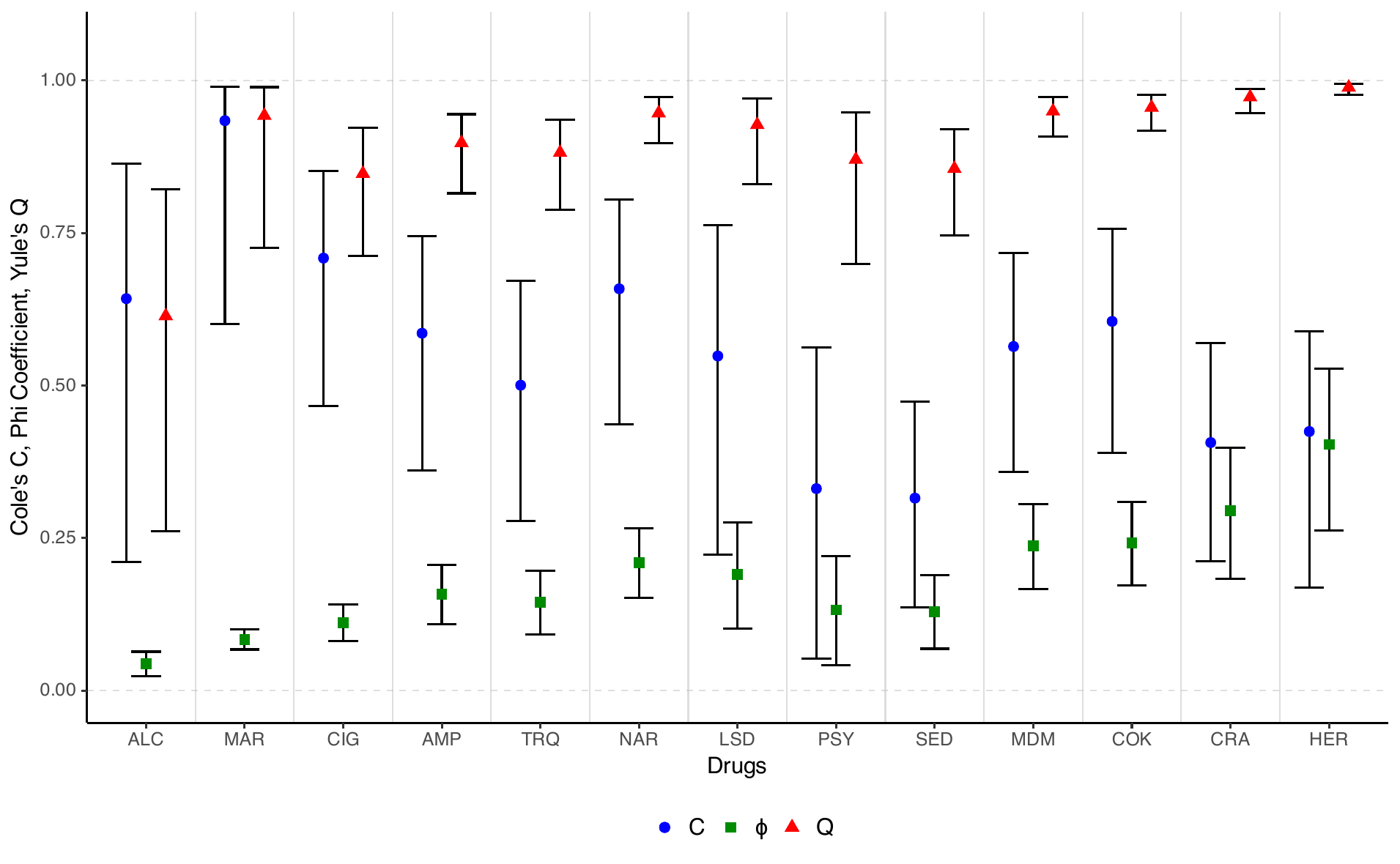}
	\caption{Point estimates $\myCn$, $\widehat{\phi}_n$ and $\myQn$ for methamphetamines compared with all other drugs and corresponding 90\% confidence intervals based on the Fisher transformation.}
	\label{fig:case_study_CIs_meth}
\end{figure}

\begin{figure}[h!] 
	\centering
	\includegraphics[width=1\linewidth]{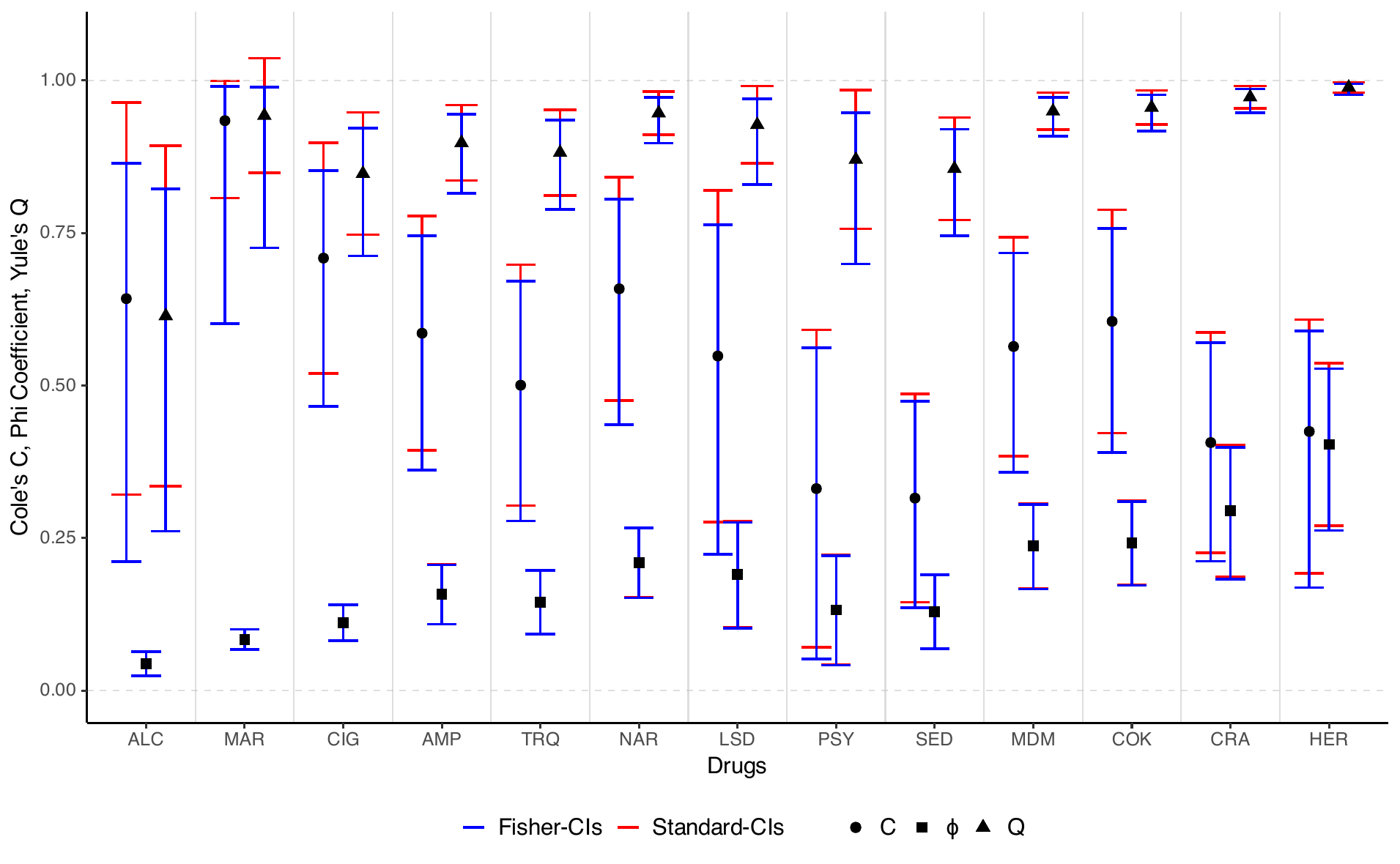}
	\caption{Point estimates $\myCn$, $\widehat{\phi}_n$ and $\myQn$ for methamphetamines compared with all other drugs and corresponding $90\%$ confidence intervals based on the Fisher transformation (blue) and on standard inference (red).}
	\label{fig:case_study_CIs_meth_comparison}
\end{figure}

\end{document}